\newcommand{\delete}{\texttt{Delete}}
\newcommand{\cost}{\operatorname{\textsc{Cost}}}
\newcommand{\ins}{\ttx{Insert}}
\newcommand{\poly}{\operatorname{poly}}
\newcommand{\polylog}{\operatorname{polylog}}
\newcommand{\LFMIS}{\operatorname{LFMIS}} 
\newcommand{\elim}{\operatorname{elim}} 
\newcommand{\OPT}{\operatorname{OPT}} 
\newcommand{\ALG}{\mathcal{L}_{k+1}}
\newcommand{\R}{\mathbb{R}}
\newcommand{\ex}[1]{\mathop{{\bf E}\left[ #1 \right]}}
\newcommand{\exx}[2]{\mathop{{\bf E}}_{#1}\left[ #2 \right]}
\newcommand{\pr}[1]{\operatorname{{\bf Pr}}\left[ #1 \right]}
\newcommand{\prb}[2]{\mathop{{\bf Pr}}_{#1}\left[ #2 \right]}
\newcommand{\Run}{\operatorname{Time}}
\newcommand{\bD}{\mathbf{D}}
\newcommand{\bX}{\mathbf{X}}
\newcommand{\bZ}{\mathbf{Z}}
\newcommand{\cA}{\mathcal{A}}
\newcommand{\cB}{\mathcal{B}}
\newcommand{\cC}{\mathcal{C}}
\newcommand{\cD}{\mathcal{D}}
\newcommand{\cE}{\mathcal{E}}
\newcommand{\cF}{\mathcal{F}}
\newcommand{\cH}{\mathcal{H}}
\newcommand{\cI}{\mathcal{I}}
\newcommand{\cL}{\mathcal{L}}
\newcommand{\cP}{\mathcal{P}}
\newcommand{\cQ}{\mathcal{Q}}
\newcommand{\cX}{\mathcal{X}}
\newcommand{\eps}{\epsilon}
\newcommand{\ttx}[1]{\texttt{#1}}
\newtheorem{theorem}{Theorem}
\newtheorem{lemma}{Lemma}
\newtheorem{corollary}{Corollary}
\newtheorem{proposition}{Proposition}
\newtheorem{definition}{Definition}
\newtheorem{claim}{Claim}
\newcounter{Frame}
\newenvironment{Frame}[1][h]{%
	\refstepcounter{Frame}
	\begin{mdframed}[%
		frametitle={#1},
		skipabove=\baselineskip plus 2pt minus 1pt,
		skipbelow=\baselineskip plus 2pt minus 1pt,
		linewidth=1.0pt,
		frametitlerule=true,
		]%
	}{%
	\end{mdframed}
}
\title{Optimal Fully Dynamic $k$-Centers Clustering} 
\author{
	MohammadHossein Bateni\\
	Google Research \\
	\texttt{bateni@google.com}
	\and
Hossein Esfandiari \\
Google Research \\
	\texttt{esfandiari@google.com}
	\and 
	Rajesh Jayaram \\
	Google Research \\
	\texttt{rkjayaram@google.com} 
	\and
	Vahab Mirrokni \\
	Google Research \\
	\texttt{mirrokni@google.com}
}
\date{}
\begin{document}
\maketitle


		\begin{abstract}
		
		
		We present the first  algorithm for fully dynamic $k$-centers clustering in an arbitrary metric space that maintains an optimal $2+\eps$ approximation in $O(k \cdot  \polylog(n,\Delta))$ amortized update time. 
		Here, $n$ is an upper bound on the number of active points at any time, and $\Delta$ is the aspect ratio of the data. Previously, the best known amortized update time was $O(k^2\cdot \polylog(n,\Delta))$, and is due to Chan, Gourqin, and Sozio \cite{chan2018fully}. We demonstrate that the runtime of our algorithm is optimal up to $\polylog(n,\Delta)$ factors, even for insertion-only streams, which closes the complexity of fully dynamic $k$-centers clustering. In particular, we prove that any algorithm for $k$-clustering tasks in arbitrary metric spaces, including $k$-means, $k$-medians, and $k$-centers, must make at least $\Omega(n k)$ distance queries to achieve any non-trivial approximation factor. 
		
		Despite the lower bound for arbitrary metrics, we demonstrate that an update time sublinear in $k$ is possible for metric spaces which admit locally sensitive hash functions (LSH). Namely, we demonstrate a black-box transformation which takes a locally sensitive hash family for a metric space and produces a faster fully dynamic $k$-centers algorithm for that space. In particular, for a large class of metrics including Euclidean space, $\ell_p$ spaces, the Hamming Metric, and the Jaccard Metric, for any $c > 1$, our results yield a $c(4+\eps)$ approximate $k$-centers solution in $O(n^{1/c} \cdot \polylog(n,\Delta))$ amortized update time, simultaneously for all $k \geq 1$. Previously, the only known comparable result was a $O(c \log n)$ approximation for Euclidean space due to Schmidt and Sohler, running in the same amortized update time \cite{schmidt2019fully}.
		


	\end{abstract}
	
\thispagestyle{empty}
\newpage
\parskip 4.0 pt
\tableofcontents
\thispagestyle{empty}
\newpage
\parskip 7.2pt 
\pagenumbering{arabic}


	\section{Introduction}
	Clustering is a fundamental and well-studied problem in computer science, which arises in approximation algorithms, unsupervised learning, computational geometry, classification, community detection, image segmentation, databases, and other areas  \cite{hansen1997cluster,schaeffer2007graph,fortunato2010community,shi2000normalized,arthur2006k,tan2013data,coates2012learning}. 
	The goal of clustering is to find a structure in data by grouping together similar data points. Clustering algorithms optimize a given objective function which characterizes the quality of a clustering. One of the classical and best studied clustering objectives is the $k$-centers objective. 
	
	Specifically, given a metric space $(\cX,d)$ and a set of points $P \subseteq \cX$, the goal of $k$-centers clustering is to output a set $C \subset \cX$ of at most $k$ ``centers'', such that the maximum distance of any point $p \in P$ to the nearest center $c \in C$ is minimized. In other words, the goal is to minimize the objective function 
	$$\max_{p \in P}d(p,C)$$ 
	where $d(p,C) = \min_{c \in C} d(p,c)$. The $k$-centers clustering problem admits several well-known greedy $2$-approximation algorithms \cite{gonzalez1985clustering,hochbaum1986unified}. However, it is known to be NP-Hard to approximate the objective to within a factor of $(2-\epsilon)$ for any constant $\eps>0$~\cite{hsu1979easy}. Moreover, even restricted to Euclidean space, it is still NP-Hard to approximate beyond a factor of $1.822$ \cite{feder1988optimal,bern1997approximation}.
	
	While the approximability of many clustering tasks, including $k$-centers clustering, is fairly well understood in the static setting, the same is not true for \textit{dynamic datasets}. Recently, due to the proliferation of data and the rise of modern computational paradigms where data is constantly changing, 
	there has been significant interest in developing dynamic clustering algorithms~\cite{cohen2016diameter,lattanzi2017consistent,chan2018fully,schmidt2019fully,goranci2019fully,henzinger2020fully,henzinger2020dynamic,fichtenberger2021consistent}. In the incremental dynamic setting, the dataset $P$ is observed via a sequence of insertions of data points, and the goal is to maintain a good $k$-centers clustering of the current set of active points. In the \textit{fully dynamic} setting, points can be both inserted and deleted from $P$.
	
	The study of dynamic algorithms for $k$-centers was initated by Charikar, Chekuri, Feder, and Motwani~\cite{charikar2004incremental}, whose ``doubling algorithm'' maintains an $8$-approximation in amortized $O(k)$ update time. However, the doubling algorithm is unable to handle deletions of data points. It was not until recently that the first fully dynamic algorithm for $k$-centers, with update time better than naively recomputing a solution from scratch, was developed. In particular, the work of Chan, Guerqin, and Sozio \cite{chan2018fully} proposed a randomized algorithm that maintains an optimal $(2+\eps)$-approximation in   $O(\frac{\log \Delta}{\eps}k^2 )$ amortized time per update, where $\Delta$ is the aspect ratio of the dataset. Since then, algorithms with improved runtime have been demonstrated for the special cases of Euclidean space \cite{schmidt2019fully} (albeit, with a larger approximation factor), and for spaces with bounded doubling dimension \cite{goranci2019fully}. However, despite this progress, to date no improvements have been made for fully dynamic $k$-centers in general metric spaces beyond the quadratic-in-$k$ amortized runtime of \cite{chan2018fully}.


	\subsection{Our Contributions}
		In this work, we resolve the complexity of fully dynamic $k$-centers clustering up to polylogarithmic factors in the update time. Specifically, we give an algorithm with nearly linear in $k$ amortized update time and optimal $(2+\eps)$ approximation.
	In what follows, let $\Delta$ denote the aspect ratio of the data,\footnote{Namely, for any two points $x,y$ in the active data set at any point, we have $r_{\min} \leq d(x,y) \leq r_{\max}$, and $\Delta = r_{\max}/r_{\min}$.} and let $n$ be an upper bound on the number of active points at any given time. Our main result is as follows.  
	
	\begin{theorem}\label{thm:main}
	There is a fully dynamic algorithm that, on a sequence of insertions and deletions of points from an arbitrary metric space $(\cX,d)$, maintains a $(2+\eps)$-approximation to the optimal $k$-centers clustering. The amortized update time of the algorithm is $O(\frac{\log \Delta \log n}{\eps}(k  + \log n))$ in expectation, and $O(\frac{\log \Delta \log n}{\eps}(k  + \log n) \log \delta^{-1})$  with probability $1-\delta$ for any $\delta \in (0,\frac{1}{2})$.
	\end{theorem}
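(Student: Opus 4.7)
The plan is to follow the classical template for dynamic $k$-centers --- maintain one algorithmic copy per guessed value of $\OPT$ and output the best --- but to replace the deterministic leader/follower structure of \cite{chan2018fully} with a randomized one whose rebuild cost amortizes to nearly linear in $k$ per update. I would keep $O(\log_{1+\eps}\Delta) = O(\log\Delta/\eps)$ parallel copies, indexed by guesses $r_i = (1+\eps)^i$ of the optimal radius. Each copy maintains a set $L_i$ of \emph{leaders} satisfying (a) every active point lies within distance $2r_i$ of some leader, and (b) any two leaders are at distance greater than $r_i$. Outputting the leaders of the smallest $i$ with $|L_i|\le k$ yields a $(2+\eps)$-approximation by the usual packing/covering argument.

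For a fixed radius $r$, the key construction is to attach an independent uniform $[0,1]$ priority $\pi(p)$ to each point at insertion, and to define the leader set canonically as the points $p$ whose $\pi(p)$ is the largest priority within distance $2r$ of $p$ in the current active dataset. This makes the structure a function of the data and the priorities only, not of the update history, which is what keeps the analysis clean. Insertions cost $O(k)$ per level by a single scan of $L_i$. Non-leader deletions are trivial. Leader deletions are the difficult case: one must process the followers of the deleted leader in decreasing-priority order, either reassigning each to a remaining leader within distance $2r$ or promoting it to a new leader whose $2r$-ball must then be rescanned for further reassignments.

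The main obstacle is bounding the amortized cost of leader deletions, because the number of followers of a single leader can be arbitrarily large and each reassignment may naively cost $\Omega(k)$ work. My plan is a potential-function argument of the following flavor. Let $\Phi$ count, summed over all non-leaders $p$, the number of currently-active points within distance $2r$ of $p$ whose priority is larger than $\pi(p)$. By the random-priority construction, a standard ``leader-election'' calculation bounds $\E[\Phi]$ by $O((k+\log n)\log n)$ at all times: in expectation each point sees only $O(\log n)$ higher-priority candidates within its $2r$-ball, while the $k$ currently chosen leaders contribute an extra $k$ term. Each unit of work done during a deletion can then be charged to a strict decrease in $\Phi$ (either a follower finds its eventual reassignment, or a promotion removes candidates from the scanned balls), yielding amortized $O((k+\log n)\log n)$ work per update per level.

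Summing the per-level cost over the $O(\log\Delta/\eps)$ levels gives the expectation bound claimed in the theorem. For the high-probability statement I would run $O(\log\delta^{-1})$ independent copies in parallel with independent priorities and apply a Chernoff bound to the sum of per-copy update costs. One subtle point I expect to need care with is that the priorities must be oblivious to the adversary's future deletions, so the analysis lives in the oblivious-adversary model --- matching the setting of \cite{chan2018fully} --- and the potential bound on $\Phi$ must hold against an adversary that sees only the current solution, not the internal randomness.
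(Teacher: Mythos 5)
Your high-level template does match the paper's in spirit --- random priorities, a history-independent canonical leader structure per scale, and a charge of reassignment work to structural changes --- but the step that carries all of the difficulty is wrong as stated. Your potential $\Phi$, the total number of higher-priority active points in the $2r$-ball of each non-leader, does not satisfy $\E[\Phi]=O((k+\log n)\log n)$: for a point whose $2r$-ball contains $d$ active points the expected number of higher-priority ones is $d/2$, so on a dense threshold graph (all $n$ points mutually within $2r$) one has $\E[\Phi]=\Theta(n^2)$. The ``only $O(\log n)$ higher-priority candidates'' fact applies to prefix maxima (record values) in a random order, not to the total count of higher-priority neighbors. Moreover, inserting a single high-priority point can increase your $\Phi$ by $\Theta(n)$, so even a correct bound on $\E[\Phi]$ would not close the amortization. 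What the paper actually uses is Theorem \ref{thm:elimBound} (Theorem 3 of \cite{behnezhad2019fully}): the expected number of vertices whose \emph{eliminator} changes under a single vertex insertion or deletion is $O(\log n)$ --- a genuinely nontrivial result --- combined with Lemma \ref{lem:main}, a delicate accounting showing the algorithm's (deliberately out-of-date) leader pointers change at most a constant factor more often than the eliminators. Your proposal contains no substitute for either ingredient.

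There are three further gaps. First, you maintain and scan the full leader set $L_i$ on every insertion; for small radii $|L_i|$ can be $\Theta(n)\gg k$, so the scan is not $O(k)$. The paper's algorithm maintains only the top $k+1$ elements of the LFMIS and parks everything else in a priority queue of unclustered points; this truncation is what makes the $O(k)$ factor possible and is absent from your plan. Second, your invariants (coverage within $2r_i$, separation greater than $r_i$) yield only a $4(1+\eps)$-approximation by the packing/covering argument; to get $2+\eps$ the two radii must coincide, i.e., the leaders must form a maximal independent set of a single threshold graph as in Hochbaum--Shmoys. Relatedly, the set of local priority maxima is not a maximal independent set and need not dominate all points (a path $a\text{--}b\text{--}c$ with $\pi(a)>\pi(b)>\pi(c)$ leaves $c$ uncovered); the correct canonical object is the greedy/lexicographically-first MIS. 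Third, the high-probability claim cannot be obtained by a Chernoff bound over $O(\log\delta^{-1})$ parallel copies, since each copy's runtime is controlled only in expectation and is unbounded; the paper instead uses a restart-on-budget-overflow argument (Proposition \ref{prop:highProb}).
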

		
	The algorithm of Theorem \ref{thm:main} is randomized, and works in the standard \textit{oblivious adversary} model.\footnote{In the oblivious adversary model, the sequence of updates is worst case, but fixed in advance. In other words, one can think of the stream as being chosen by an adversary that knows the algorithm being used, but which is not aware of the random bits being used by the algorithm.} Moreover, the algorithm does not need to know $n$ or the length of the stream in advance. 
	In Section  \ref{sec:LB}, we demonstrate that any dynamic algorithm for $k$-centers that achieves a non-trivial approximation factor must run in amortized time $\Omega(k)$, even when points are only inserted (see Theorem \ref{thm:LBMain} below). These two results together resolve the amortized complexity of fully dynamic $k$-centers clustering up to logarithmic factors.

	The main technical contribution of our work is a new fully dynamic algorithm for maintaining either a maximal independent set of size at most $k$, or an independent set of size at least $k+1$, in a graph which receives a fully dynamic sequence of \textit{vertex} insertions and deletions, in amortized update time $O(k \log n + \log^2 n)$ 
	(Theorem \ref{thm:LFMISMain}). By a well-known reduction of Hochbaum and Shmoys \cite{hochbaum1986unified}, the latter is sufficient to obtain a $(2+\eps)$ approximation to $k$-centers. The algorithm, which is presented in Section \ref{sec:generalMetric}, is an extension of fully dynamic MIS algorithms to vertex-valued updates, and may be of independent interest. See Section \ref{sec:tech} for a further overview of the algorithm. 
	

\paragraph{Deterministic Fully Dynamic $k$-Centers.} 
To complement the randomized algorithm  of Theorem \ref{thm:main}, we demonstrate the existence of a \textit{deterministic} algorithm for fully dynamic $k$-centers which runs in $O(k \cdot \polylog(\Delta,n))$ amortized pdate time, albiet at the slightly larger cost of a $O(\log n)$ approximation. Specifically, in Section \ref{sec:deterministic}, we prove the following theorem.

\noindent \textbf{Theorem} \ref{thm:Deterministic} . {\it
There is a deterministic algorithm that, on a sequence of insertions and deletions of points from an arbitrary metric space $(\cX,d)$, maintains a $(2+\eps)\lceil \log (n(1+\eps)) \rceil$-approximation to the optimal $k$-centers clustering, where $\eps>0$ is any constant. The amortized update time of the algorithm is $O(k \log \Delta \log n \log k   )$, and the worst-case update time of an insertion is $O(k \log \Delta \log n  \log k)$, and the worst-case update time of a deletion is $O(k^2 \log \Delta \log n   \log k )$. 
\it}

The algorithm of Theorem \ref{thm:Deterministic}, while obtaining a sub-optimal approximation factor, enjoys many of the benefits of deterministic algorithms, such as never failing to run within the stated amortized bounds, even against an adaptively chosen sequence of updates (as opposed to an obliviously chosen sequence of updates). Moreover, the update time of any deletion is always at most  and $O(k^2 \cdot \polylog (\Delta,n))$ in the worst case, whereas other algorithms for $(2+\eps)$ approximate fully dynamic $k$-centers, such as the algorithm of \cite{chan2018fully}, run in $\Omega(n)$ time in the worst case. 

\paragraph{Query Complexity Lower Bounds for Arbitrary Metric Spaces.} 
While for specialized metric spaces such as Euclidean space, it was previously known that one can approximate the $k$-centers solution in sublinear in $k$ update time (albeit with a larger approximation factor), on the other hand, for general metric spaces no such result was known. The challenge of this task is that clustering algorithms for general metric spaces do not know the metric in advance; namely, the metric itself, in addition to the dataset, is part of the input. Therefore, a $k$-centers algorithm for arbitrary metric spaces must garner information about the metric via computing distances between points in the dataset.

In other words, the input to a general (offline) metric space clustering algorithm can be described by the distance matrix $\bD \in \R^{n \times n}$, corresponding to an arbitrary metric over a dataset of $n$ points. We note that such an algorithm may not need to read the entire matrix $\bD$ --- for instance, the algorithm of Theorem \ref{thm:main} reads at most $\tilde{O}(nk)$ entries. Previously, it was shown by Mettu and Plaxton  that any algorithm that achieves an $O(1)$-approximation for $k$-medians and $k$-means must query at least $\Omega(nk)$ entries of the distance matrix \cite{mettu2004optimal,mettu2002approximation}. However, this does not rule out the possibility of a $o(nk)$ query algorithm with a constant approximation for $k$-centers, or a $o(nk)$-query algorithm with super-constant approximation for $k$-medians or $k$-means. To date, to the best of the authors' knowledge, both of the latter possibilities remained open. 

In Section \ref{sec:LB}, we demonstrate that this query complexity, and therefore runtime, of Theorem \ref{thm:main} is tight up to logarithmic factors. In fact, we demonstrate that any algorithm that returns a non-trivial approximation (less than the aspect ratio $\Delta$) to the optimal clustering objective must make $\Omega(nk)$ queries to the distance matrix $\bD$. This bound holds even for the more general $(k,z)$-clustering problem, including the well-studied $k$-medians and $k$-means (for $z=1,2$ respectively), which is to minimize the sum of all $z$-th powers of distances from points to their respective centers, as well as the $k$-centers objective. This extends the lower bound of \cite{mettu2004optimal}, demonstrating that no approximation whatsoever is possible for $k$-clustering tasks in general metric spaces using $o(nk)$ queries.

In the following, for $z > 0$, we write $\OPT_{k,z}(P)$ to denote the optimal cost of $(k,z)$-clustering on a point set $P$, and write $\OPT_{k,\infty}(P)$ to denote the optimal $k$-centers cost. We prove the following:

\begin{theorem}\label{thm:LBMain}
Fix any $k\geq 1,$ $n > k$, and let $R > 1$ be any arbitrary approximation factor (possibly depending on $k,n$). Let $z >0$ be either a constant, or set $z = \infty$.  Let $\cA$ be any algorithm that, given as input a distance matrix $\bD \in \R^{n \times n}$ corresponding to an arbitrary metric over a set $P$ of $n$ points, correctly distinguishes with probability at least $2/3$ whether $\OPT_{k,z}(P) \leq 1$ or $\OPT_{k,z}(P) \geq R$. Then $\cA$ must make at least $\Omega(T)$ queries in expectation to $\bD$, where
\[T = \begin{cases}
nk & \text{ if } n = \Omega(k \log k) \\
k^2 & \text{ otherwise.} \\
\end{cases}\]
\end{theorem}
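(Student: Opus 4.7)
I would prove this lower bound via Yao's minimax principle combined with a coupling argument against a hard distribution, using a YES/NO pair in which the NO instance differs from the YES instance only by planting a single uniformly random ``orphan'' point. The intuition is that detecting the orphan requires querying it against at least one of its former cluster mates, which under the uniform partition prior costs $\Omega(k)$ queries per candidate point; averaging over a uniformly placed orphan then yields the $\Omega(nk)$ lower bound in the main regime.

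\textbf{Hard distributions.} For $n = \Omega(k \log k)$, sample $f \colon [n] \to [k]$ uniformly at random and let the YES instance have $d(i,j) = \delta$ if $f(i) = f(j)$ and $d(i,j) = R$ otherwise, with $\delta := n^{-1/z}$ for finite $z$ and $\delta := 1$ for $z = \infty$; this is easily checked to be a metric with $\OPT_{k,z}(P) \le 1$ using one center per cluster. The NO instance additionally samples a uniform $i^* \in [n]$ and overrides $d(i^*, j) = R$ for every $j \neq i^*$, so $i^*$ forms a $(k{+}1)$-st ``cluster'' that must be left uncovered by any $k$-center solution, giving $\OPT_{k,z}(P) \ge R$. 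Couple the two worlds on shared $(f, i^*, \text{alg coins})$ and a bit $b \in \{0,1\}$: the query answers differ between the worlds exactly on queries $(u_t, v_t)$ with $i^* \in \{u_t, v_t\}$ whose other endpoint lies in $i^*$'s $f$-cluster. Let $D$ be this event; under $\neg D$ the entire transcript coincides in the two worlds, hence the algorithm's output coincides, yielding $\Pr[\text{correct}] \le \tfrac{1}{2} + \tfrac{1}{2}\Pr[D]$, so any algorithm succeeding with probability $\tfrac{2}{3}$ must have $\Pr[D] \ge \tfrac{1}{3}$.

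\textbf{Main technical step.} The heart of the argument, and the step I anticipate to be the main obstacle, is bounding $\Pr[D] = O(T/(nk))$ where $T$ is the number of queries. Let $M$ be the number of queries returning ``same cluster'' and $U$ the set of endpoints of such queries; since $\Pr_{i^*}[D \mid f, \text{coins}] = |U|/n \le 2M/n$, it suffices to show $\E[M] = O(T/k)$. The subtlety is adaptivity: the algorithm can try to bias queries toward ``same cluster'' pairs, for example by issuing a query whose answer is already implied by the transitive closure of earlier answers (conditional probability one). I would handle this by first observing that such redundant queries may be removed without loss of generality (they convey no information about the YES/NO distinction), and then analyzing the adaptive decision tree step by step: for any history $\cF_{t-1}$, the posterior of $f$ is uniform over labelings consistent with the discovered equivalences and non-equivalences, and for any non-redundant pair this posterior assigns ``same cluster'' probability at most $1/(k - c_t)$, with $c_t$ bounded on average by a constant via concentration of component sizes in the uniform prior on $f$. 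Summing yields $\E[M] = O(T/k)$, hence $\Pr[D] = O(T/(nk))$, and combining with $\Pr[D] \ge \tfrac{1}{3}$ gives $T = \Omega(nk)$.

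\textbf{Small-$n$ regime and $(k,z)$ extension.} For $n = O(k\log k)$, I would obtain the $\Omega(k^2)$ bound via a simpler ``hidden pair'' construction on $n' = k+1$ points (padded by duplicates of a fixed point to reach $n$ when $n > k+1$): the YES distribution plants a single uniformly random same-cluster pair at distance $\delta$ and sets every other pair to distance $R$ (yielding $k$ clusters, cost $\le 1$), while the NO distribution sets every pair to distance $R$ (yielding $k+1$ singletons, cost $R$). Any algorithm's transcripts in the two worlds coincide until it hits the planted pair, which is uniformly random among the $\binom{k+1}{2} = \Theta(k^2)$ possible pairs, so distinguishing with probability $\tfrac{2}{3}$ requires $T = \Omega(k^2)$ queries. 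Finally, the extension to $(k,z)$-clustering for any constant $z$ (including $k$-medians and $k$-means at $z = 1, 2$) follows because the orphan (respectively the uncovered singleton) contributes $R^z$ to the sum of $z$-th powers in the NO world, dominating the YES contribution and leaving a multiplicative gap of $\Omega(R)$ after taking $z$-th roots.
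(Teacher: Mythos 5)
Your overall architecture is the same as the paper's: the same planted-orphan distribution for the $n=\Omega(k\log k)$ regime (random $k$-coloring $f$, with the NO instance detaching a uniformly random point $i^*$), the same single-hidden-pair construction padded with duplicates for the $\Omega(k^2)$ regime, and the same reduction of correctness to the event $D$ that the algorithm queries $i^*$ against one of its cluster-mates. Your coupling formulation $\Pr[\mathrm{correct}]\le \tfrac12+\tfrac12\Pr[D]$ and the bound $\Pr[D]\le \E[|U|]/n$ (with $U$ the set of endpoints of ``same-cluster'' answers in the YES world) is a clean repackaging of what the paper does leaf-by-leaf with likelihood ratios. The $(k,z)$ extension via $\delta=n^{-1/z}$ is fine and essentially matches the paper's observation that the objectives differ by at most a $\poly(n)$ factor, which is absorbed into the arbitrary $R$.

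The gap is in the one step you yourself flag as the heart of the argument: the claim that every non-redundant query returns ``same cluster'' with conditional probability $1/(k-c_t)$ where ``$c_t$ is bounded on average by a constant via concentration of component sizes in the uniform prior on $f$.'' This is not the right mechanism and the statement is false as written. The quantity $c_t$ --- the number of labels the algorithm has ruled out for an endpoint via earlier ``different'' answers --- is controlled by the algorithm's query budget, not by the prior: by spending $k-1$ queries on a single point the algorithm can drive its conditional ``same'' probability for a targeted pair all the way to $1$ (pigeonhole on the ruled-out labels), and such forced queries are not ``implied by the transitive closure of earlier answers,'' so your redundancy-removal step does not eliminate them. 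No averaging over the prior rescues a per-query $O(1/k)$ bound here. The correct fix, which is what the paper's Claims 4--6 implement, is a charging argument: each ``different'' answer rules out at most one candidate label per endpoint, so a point must be involved in at least $k/2$ queries before its conditional ``same'' probability can exceed $2/k$ (the ``light/heavy'' dichotomy); with a total budget of $o(nk)$ queries only $o(n)$ points ever become heavy, and these are excluded deterministically from the union bound over the location of $i^*$ rather than controlled on average. You would need to add exactly this bookkeeping --- bound $\E[|U\cap \mathrm{light}|]$ by $O(T/k)$ via the per-query bound restricted to queries with a light, still-alive endpoint, and bound $|\mathrm{heavy}|\le O(T/k)$ deterministically --- to make the main step go through.
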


Note that the above query complexity lower bound holds even for offline algorithms, which are given $\bD$ all at once and allowed to make an arbitrary sequence of adaptive queries. In particular, this demonstrates an $\Omega(k)$ lower bound on the expected amortized update time of any dynamic $k$-centers algorithm, even in the \textit{insertion-only} model, so long as either $n = \Omega(k \log k)$ or $n = O(k^2)$ (and otherwise the lower bound is at least $\Omega(k /\log k)$). Thus, taken together, Theorems \ref{thm:main} and \ref{thm:LBMain} resolve the amortized complexity of fully dynamic $k$-centers clustering in arbitrary metric spaces up to logarithmic factors. 

\renewcommand{\arraystretch}{1.3}
\begin{figure}[t]
\begin{tabular}{|c|c|c|c|c|}\hline

	Metric Space & \textbf{Our Approx.} &\textbf{ Our Runtime} & Prior Approx. &  Prior Runtime \\ \hline
	
	Arbitrary Metric Space& $2+\eps $& $\tilde{O}(  k) $& $2+\eps$ &  $ \tilde{O}( k^2) $ \cite{chan2018fully}\\ \hline
	$(\R^d,\ell_p)$, $p \in [1,2]$ & $c (4+\eps) $ &$\tilde{O}(  n^{1/c}) $& $O(c \cdot \log n)$ &  $\tilde{O}( n^{1/c}) $ \cite{schmidt2019fully}\\ \hline
	Eucledian Space $(\R^d,\ell_2)$  & $c (\sqrt{8}+\eps)  $ & $\tilde{O}(  n^{1/c^2 + o(1)}) $& $O(c \cdot \log n)$ & $ \tilde{O}( n^{1/c}) $ \cite{schmidt2019fully}\\ \hline
	Hamming Metric &  $c (4+\eps) $ &$\tilde{O}(  n^{1/c}) $& -- &  -- \\ \hline
	Jaccard Metric &  $c (4+\eps)  $ &$\tilde{O}(  n^{1/c}) $& -- &  -- \\ \hline
    \begin{tabular}{c}
          EMD over $[D]^d$   \\
          $d= O(1)$
     \end{tabular} &  $O(\log D \cdot   c )  $ &$\tilde{O}(  n^{1/c}) $& -- &  -- \\ \hline
     \begin{tabular}{c}
          EMD over $[D]^d$   \\
          sparsity $s$
     \end{tabular} &  $O(c \log s n \log d )  $ &$\tilde{O}(  n^{1/c}) $& -- &  -- \\ \hline
\end{tabular}
\caption{
Our approximation and amortized runtimes as compared to the previously best known results. Factors of $\log \Delta,\log M,d,\frac{1}{\eps}$ are omitted from the table, although we remark that the dependency on $d,1/\eps$ is linear, as in previous works, and the dependency on $\log \Delta$ is nearly linear $\tilde{O}(\log \Delta)$ for the LSH-based algorithms, and linear for arbitrary metric spaces (also as in previous works). Furthermore, there is no $\log M$ dependency in the runtime of the arbitrary metric space algorithm.  Note that the algorithms for LSH-spaces maintain a $k$-centers solution simultaneously for all $k \leq n$. 
Furthermore, the first result on Earth Mover's Distance (EMD) over constant dimensional Euclidean space $[D]^d$ follows immediately from embedding EMD into $\ell_1$ with distortion $O(\log D)$ of Indyk and Thaper, followed by an application of our algorithm for $\ell_1$ (see \cite{IT03} for the embedding and further definitions). The second result on EMD for subsets of $[D]^d$ with size at most $s$ follows similarly via the embedding into $\ell_1$ of Andoni, Indyk, and Krauthgamer \cite{andoni2008earth}.}
\label{fig:results}
\end{figure}

\paragraph{Improved Fully Dynamic $k$-centers via Locally Sensitive Hashing.}
The lower bound of Theorem \ref{thm:LBMain} demonstrates that, in general, one cannot beat $\Omega(k)$ amortized update time for fully Dynamic $k$-centers clustering. However, as observed in \cite{schmidt2019fully, goranci2019fully}, for the case of Euclidean space, or metrics with bounded doubling dimension, sublinear in $k$-update time is in fact possible.

At a high level, the aforementioned improvements for specialized metric spaces can be distilled to the following observation: given a current set of $k$ centers $C \subset \cX$ that yield a solution with cost $R$, and a new data point $x \in \cX$, in order to maintain a $c$-approximate $k$-centers solution, one must immediately distinguish between the case that $d(x,C) < R$ and $d(x,C) > cR$. In general, when the metric is a priori unknown to the algorithm, one can do no better than a brute force search, checking for each $y \in C$ whether $d(x,y) < R$, resulting in an $\Omega(k)$ update time. On the other hand, for metrics which admit fast, sublinear (approximate) nearest neighbor search data structures, one can distinguish between these two cases in $o(k)$ time. This is precisely the fact that is exploited in \cite{schmidt2019fully, goranci2019fully}. However, these results leveraged specific nearest neighbor data structures, along with specialized clustering algorithms to employ them. Moreover, for the case of Euclidean space, the resulting approximation factors were still logarithmic. Therefore, it is natural to ask whether \textit{any} space that admits sublinear time nearest neighbor search data structures also admits sublinear-in-$k$ update time fully dynamic $k$-centers algorithms. 

In Section \ref{sec:LSH}, we demonstrate an affirmative answer to the above for nearly all such metric spaces --- namely, all metric spaces that admit \textit{locally sensitive hash functions} (LSH). Locally sensitive hashing is a very well studied technique \cite{datar2004locality, broder1997resemblance, indyk1998approximate,andoni2006near, har2012approximate,andoni2014beyond,dasgupta2011fast}, and is perhaps the most common method for designing approximate nearest neighbor search algorithms. Specifically, we develop a black box transformation 
which takes a LSH family for a metric space and produces a faster fully dynamic $k$-centers algorithm. Our main result for LSH spaces is stated below, and utilizes the standard notion of $(r,cr,p_1,p_2)$-sensitive hash families (see Definition \ref{def:LSH}). 

\noindent \textbf{Theorem} \ref{thm:lshMain}. {\it
Let $(\cX,d)$ be a metric space, and fix $\delta \in (0,1/2)$. Suppose that for any $r \in (r_{\min}, r_{\max})$ there exists an $(r,cr,p_1,p_2)$-sensitive hash family $\cH_r: \cX \to U$, such that each $h \in \cH_r$ can be evaluated in time at most $\Run(\cH)$, and such that $p_2$ is bounded away from $1$. Then there is a fully dynamic algorithm that, on a sequence of $M$ insertions and deletions of points from $\cX$, given an upper bound $M \leq \hat{M} \leq \poly(M)$, with probability $1-\delta$, correctly maintains a $c(2+\eps)$-approximate $k$-centers clustering to the active point set $P^t$ at all time steps $t$, simultaneously for all $k \geq 1$. The total runtime of the algorithm is at most 
\[\tilde{O}\left(M \cdot \frac{\log \Delta \log \delta^{-1}}{\eps p_1} n^{2 \rho} \cdot \Run(\cH) \right)\]
where $\rho = \frac{\ln p_1}{\ln p_2}$, and $n$ is an upper bound on the maximum number of points at any time step. 
}

We remark that the quantity $\rho$ is a primary quantity of concern in most works which study locally sensitive hashing. Therefore, for many well known metric spaces, an upper (and sometimes matching lower) bound on $\rho$ is known, immediately resulting in improved $k$-centers algorithms for these metrics.

As an application of Theorem \ref{thm:lshMain}, we obtain improved fully dynamic algorithms for Eucledian space, $\ell_p$ spaces, the Hamming metric, and the Jaccard metric, by using known locally sensitive hash functions for these spaces \cite{broder1997resemblance, datar2004locality, har2012approximate}. We remark that previously, there were no known fully dynamic algorithms for the Hamming metric or the Jaccard Metric that achieved sublinear in $k$ update time. For the case of Euclidean space, we improve on the prior best approximation by a factor of $\Omega(\log n)$. 

For all the aforementioned metrics, we give the first sublinear (in $n$) update time algorithms that obtain a constant approximation to $k$-centers simultaneously for all $k$.  In particular, we obtain an approximation of at most $c(4+\eps)$ in amortized update time at most $\tilde{O}(n^{1/c})$ (omitting $1/\eps,\log \Delta, \log M$ factors), for any $c \geq 1$. Setting $c = \eps \log n/ \log \log n$ for any constant $\eps > 0$, we obtain a $\eps \frac{\log n}{ \log \log n}$-approximation in amortized $\polylog(n,M,\Delta)$ update time. 
Our main results, along with the prior best known bounds, are summarized in Figure \ref{fig:results}.
We state the results formally in corollaries which follow.

\noindent \textbf{Corollary} \ref{cor:Euclidean} and \ref{cor:Jaccard}.  {\it
 Fix any $c \geq 1$. Then there is a fully dynamic algorithm that, on a sequence of $M$ insertions and deletions of points from either $d$-dimensional $\ell_p$ space $(\R^d , \ell_p)$ for $p \in [1,2]$, the Hamming metric over $\R^d$, or the Jaccard metric over subsets of a finite set $X$, with probability $1-\delta$, correctly maintains a $c(4+\eps)$-approximate $k$-centers clustering to the active point set $P^t$ at all time steps $t \in [M]$, and simultaneously for all $k \geq 1$. The total runtime is at most 
\[\tilde{O}\left(M \frac{ \log \delta^{-1} \log \Delta  }{\eps} \Run(\cH)  \cdot n^{1/c}\right),\]
where $\Run(\cH) = |X|$ for the case of the Jaccard metric, and $\Run(\cH)=d$ otherwise. 
}

In particular, our algorithm for $\ell_1$-space immediately results in a collection of fully dynamic $k$-centers algorithms for metric spaces which admit low-distortion embeddings into $\ell_1$, including the \textit{earth mover's distance} (EMD)  \cite{IT03,andoni2008earth}, and the \textit{edit distance} \cite{10.1145/1284320.1284322}. We omit formal definitions of these metrics, and refer the reader to the aforementioned references for the embeddings into $\ell_1$, after which our algorithms for $\ell_1$ can be applied with approximation blow-up given by the distortion of the embedding. 


For the case of standard Euclidean space ($p=2$), one can use the celebrated ball carving technique of Andoni and Indyk \cite{andoni2006near}, which results in improved locally sensitive hash functions for moderate values of $c$, resulting in the following:

\noindent \textbf{Corollary} \ref{cor:Euclidean2}. {\it 
Fix any $c \geq 1$. Then there is a fully dynamic algorithm which, on a sequence of $M$ insertions and deletions of points from $d$-dimensional Euclidean space $(\R^d , \ell_2)$,  with probability $1-\delta$, correctly maintains a $c(\sqrt{8}+\eps)$-approximate $k$-centers clustering to the active point set $P^t$ at all time steps $t \in [M]$, and simultaneously for all $k \geq 1$. The total runtime is at most 
\[\tilde{O}\left(M \frac{ \log \delta^{-1} \log \Delta  }{\eps} d n^{1/c^2 + o(1)}\right).\]
}

	\subsection{Other Related Work}
	
	The first dynamic algorithms for $k$-clustering tasks were given by Charikar, Chekuri, Feder, and Motwani~\cite{charikar2004incremental}. For $k$-centers, they gave a deterministic algorithm which maintains a $8$-approximation in amortized $O(k)$ time per update. The first fully-dynamic algorithm for $k$-centers clustering with non-trivial update time was a randomized algorithm given by Chan, Guerqin, and Sozio \cite{chan2018fully}, which yields a $(2+\eps)$-approximation in $O(\frac{\log\Delta}{\eps}k^2 )$ amortized runtime per update. 
	For specialized metric spaces,
	Schmidt and Sohler \cite{schmidt2019fully} demonstrated that in $d$-dimensional Euclidean space it is possible to obtain a $O(c \log n)$ approximation in $\tilde{O}(n^{1/c})$ expected update time (ignoring $d,\log \Delta$ factors), and \cite{goranci2019fully} gave a  $(2+\epsilon$)-approximation for any metric space with constant doubling dimension, running in polylogarithmic (in $n$) update time, but exponential in the doubling dimension.
	
	The problem of $k$-centers clustering has also been studied in the incremental streaming model, where the emphasis is on small space algorithms, but points are only inserted and never deleted. In \cite{charikar2003better,mccutchen2008streaming}, streaming algorithms for $k$-centers with outliers were given, where the clustering can ignore a small number of points. Streaming algorithms for a matroid generalization of $k$-centers were also considered in \cite{kale2019small}. Additionally, $k$-centers has also been studied in the \textit{sliding window} streaming model in \cite{cohen2016diameter}, where points are only inserted but expire after a fixed number of subsequent updates. 
	

	Fully dynamic clustering for $k$-means and $k$-medians objectives have also been considered. In particular, in \cite{henzinger2020fully} it was shown that \textit{coresets} for $k$-means and $k$-medians can be maintained in a fully dynamic stream with update time roughly $O(k^2)$. Streaming algorithms for $k$-means and $k$-medians via coresets in the insertion-only streaming setting are very well studied (see, e.g. \cite{har2004coresets,feldman2011unified,braverman2016new,feldman2020turning,huang2020coresets}). In general, comparable coresets for $k$-centers are not possible, although some guarantees for \textit{composable coresets} are possible for specialized metrics, such as those with bounded doubling dimension \cite{aghamolaei2019composable}.
	Beyond $k$-clustering tasks, in \cite{cohen2019fully}, fully dynamic algorithms were given for facility location, which is closely related to the $k$-medians objective. 

Lastly, in addition to dynamic algorithms, another line of work focuses on optimizing the \textit{consistency} of the clusterings produced during a fully dynamic or incremental stream \cite{lattanzi2017consistent,cohen2019fully, Guo2020consistent}, which is defined as the total number of changes made to the set of centers over the stream. 


\subsection{Technical Overview}\label{sec:tech}
We now describe the main techniques employed in our algorithm for general metric spaces, in our faster algorithms for LSH spaces, and in our query complexity lower bound for $k$-clustering in general metric spaces. For the remainder of the section, we fix a metric space $(\cX,d)$ and a point set $P \subset \cX$ of size $n$, with $r_{\min} \leq d(x,y) \leq r_{\max}$ for all $x,y \in P$, and set $\Delta = r_{\max}/r_{\min}$. 

\subsubsection{Algorithm for General Metric Spaces} Our starting point is the well-known reduction of Hochbaum and Shmoys \cite{hochbaum1986unified} from approximating $k$-centers to computing a maximal independent set (MIS) in a collection of \textit{threshold} graphs. Formally, given a real $r>0$, the $r$-threshold graph of a point set $P$ is the graph $G_r = (V,E_r)$ with vertex set $V = P$, and where  $(x,y) \in E_r$ is an edge if and only if $d(x,y) \leq r$. One computes an MIS $\cI_r$ in the graph $G_r$ for each $r = (1+\eps)^i r_{\min}$ with $i=0,1,\dots,\lceil \log_{1+\eps} \Delta \rceil$. If $|\cI_r| \leq k$, then $\cI_r$ is a $k$-centers solution of cost at most $r$. If $|\cI_r| > k+1$, then there are $k+1$ points which are pair-wise distance at least $r$ apart, therefore, by the triangle inequality, the optimal cost is at least $r/2$. These facts together yield a $2+\eps$ approximation.

By the above, it suffices to maintain an MIS in $O(\eps^{-1} \log \Delta)$ threshold graphs. Now the problem of maintaining an MIS in a fully dynamic sequence of \textit{edge} insertions and deletions to a graph is very well studied \cite{assadi2019fully,gupta2018simple, onak2018fully,du2018improved, censor2016optimal, chechik2019fully, behnezhad2019fully}. Notably, this line of work has culminated with the algorithms of \cite{chechik2019fully,behnezhad2019fully}, which maintain an MIS in expected $\polylog n$ update time per edge insertion or deletion. Unfortunately, point insertions and deletions from a metric space correspond to \textit{vertex} insertions and deletions in a threshold graph. Since a single vertex update can change up to $O(n)$ edges in the graph at once, one cannot simply apply the prior algorithms for fully dynamic edge updates. Moreover, notice that in this vertex-update model, we are only given access to the graph via queries to the adjacency matrix. Thus, even finding a single neighbor of $v$ can be expensive.

On the other hand, observe that in the above reduction to MIS, one does not always need to compute the entire MIS; for a given threshold graph $G_r$, the algorithm can stop as soon as it obtains an independent set of size at least $k+1$. This motivates the following problem, which is to return either
an MIS of size at most $k$, or an independent set of size at least $k+1$. We refer to this as the $k$-Bounded MIS problem. Notice that given an MIS $\cI$ of size at most $k$ in a graph $G$ and a new vertex $v$, if $v$ is not adjacent to any $u \in \cI$, then $\cI \cup \{v\}$ is an MIS, otherwise $\cI$ is still maximal. Thus, while an insertion of a vertex $v$ can add $\Omega(n)$ edges to $G$, for the $k$-Bounded MIS problem, one only needs to check the $O(k)$ potential edges between $v$ and $\cI$ to determine if $\cI$ is still maximal. Thus, our goal will be to design a fully dynamic algorithm for $k$-Bounded MIS with $\tilde{O}(k)$ amortized update time in the vertex-update model.

\paragraph{The Algorithm for $k$-Bounded MIS.}
To accomplish the above goal, we will adapt several of the technical tools employed by the algorithms for fully dynamic MIS in the edge-update model. Specifically, one of the main insights of this line of work is to maintain the \textit{Lexicographically First Maximum Independent Set} (LFMIS) with respect to a random permutation $\pi: V \to [0,1]$ of the vertices.\footnote{LFMIS with respects to random orderings were considered in \cite{censor2016optimal,assadi2019fully,chechik2019fully,behnezhad2019fully}.} The LFMIS is a natural object obtained by greedily adding the vertex with smallest $\pi(v)$ to the MIS, removing it and all its neighbors, and continuing iteratively until no vertices remain. Maintaining an LFMIS under a random ranking has several advantages from the perspective of dynamic algorithms. Firstly, it is \textit{history-independent}, namely, once $\pi$ is fixed, the current LFMIS depends only on the current graph, and not the order of insertions and deletions which led to that graph. Secondly, given a new vertex $v$, the probability that adding $v$ to the graph causes a large number of changes to be made to the LFMIS is small, since $\pi(v)$ must have been similarly small for this to occur.

Given the above advantages of an LFMIS, our goal will be to maintain the set $\LFMIS_{k+1}$ consisting of the first $\min\{k+1,|\LFMIS|\}$ vertices in the overall LFMIS with respect to a random ranking $\pi$; we refer to $\LFMIS_{k+1}$ as the top-$k$ LFMIS. Notice that maintaining this set is sufficient to solve the $k$-Bounded MIS problem. The challenge in maintaining the set $\LFMIS_{k+1}$ will be to handle the ``excess'' vertices which are contained in the $\LFMIS$ but are not in $\LFMIS_{k+1}$, so that their membership in $\LFMIS_{k+1}$ can later be quickly determined when vertices with smaller rank in $\LFMIS_{k+1}$ are removed.
To handle these excess vertices, we make use of a priority queue $\cQ$, with priorities given by the ranking $\pi$. 
When the LFMIS becomes larger than $k+1$, we store additional arriving vertices that do not belong to $\LFMIS_{k+1}$ in $\cQ$. Specifically, if $u_{k+1}$ is the vertex with the $(k+1)$-st largest rank in $\LFMIS_{k+1}$, then if $\pi(v) > \pi(u_{k+1})$ it follows that $v$ could not be part of $\LFMIS_{k+1}$, and therefore $v$ can be added to $\cQ$. Whenever a new vertex with rank less than $\pi(u_{k+1})$ is added to $\LFMIS_{k+1}$, forcing the vertex $u_{k+1}$ out of the top-$k$, we remove $u_{k+1}$ and add it to $\cQ$. 
Conversely, whenever a vertex is removed from $\LFMIS_{k+1}$, we repeatedly attempt to insert the vertex in $\cQ$ with smallest rank until either $\cQ$ is empty or until $|\LFMIS_{k+1}| =k+1$. 

Now in general, the key difficulty in dynamically maintaining a MIS is that when a vertex $v$ in a MIS is deleted, potentially all of the neighbors of $v$ may need to be added to the MIS, resulting in a large update time. Firstly, in order to keep track of which vertices could possibly enter the LFMIS when a vertex is removed from it, we maintain a mapping $\ell:V \to \LFMIS$, such that for each $u \notin \LFMIS$, we have $\ell(u) \in \LFMIS$ and $(u,\ell(u))$ is an edge. The ``leader'' $\ell(u)$ of $u$ serves as a certificate that $u$ cannot be added to the MIS. When a vertex $v \in \LFMIS$ is removed from the LFMIS, we only need to search through the set $\cF_v = \{u \in V \;| \; \ell(u) = v\}$ to see which vertices should be added to the LFMIS. Note that this can occur when $v$ is deleted, or when a neighbor of $v$ with smaller rank is added to the LFMIS. Consequentially, the update time of the algorithm is a function of the number points $u$ whose leader $\ell(u)$ changes on that step. For each such $u$, we can check in $O(k)$ time if it should be added to $\LFMIS_{k+1}$ by querying the edges between $u$ and the vertices in $\LFMIS_{k+1}$. By a careful amortized analysis, we can prove that the total runtime of this algorithm is indeed at most an $O(k)$ factor larger than the total number of leader changes. This leaves the primary challenge of designing and maintaining a leader mapping which changes infrequently. 


A natural choice for such a leader function is to set $\ell(u)$ to be the \textit{eliminator} of $v$ in the LFMIS. Here, for any vertex $u$ not in the LFMIS, the eliminator $\elim_\pi(u)$ of $u$ is defined to be its neighbor with lowest rank  that belongs to the LFMIS. 
The eliminators have the desirable property that they are also history-independent, and therefore the number of changes to the eliminators on a given update depends only on the current graph and the update being made. Maintaining the eliminators was an important component of the MIS algorithm of \cite{behnezhad2019fully}. In particular, a key result of \cite{behnezhad2019fully} is that the expected number of changes to the eliminators of the graph, even after the insertion or removal of an entire vertex, is at most $O(\log n)$. Therefore, if we could maintain the mapping $\ell(v) = \elim_{\pi}(v)$ by keeping track of the eliminators, our task would be complete.

Unfortunately, keeping track of the eliminators will not be possible in the vertex-update model, since we can only query a small fraction of the adjacency matrix after each update. In particular, when a vertex $v$ is inserted, it may change the eliminators of many of its neighbors, but we cannot afford to query all $\Omega(n)$ potential neighbors of $v$ to check which eliminators have changed. Instead, our solution is to maintain a leader mapping $\ell(v)$ which is an ``out-of-date'' version of the eliminator mapping. Each time we check if a vertex $v$ can be added to $\LFMIS_{k+1}$, by searching through its neighbors in $\LFMIS_{k+1}$, we ensure that either $v$ is added to $\LFMIS_{k+1}$ or its leader $\ell(v)$ is updated to the current eliminator of $v$, thereby aligning $\ell(v)$ with $\elim_\pi(v)$. However, thereafter, 
the values of $\ell(v)$ and $\elim_{\pi}(v)$ can become misaligned in several circumstances. In particular, the vertex $v$ may be moved into the queue $\cQ$ due to its leader $\ell(v)$ either leaving the LFMIS, or being pushed out of the top $k+1$ vertices in the LFMIS. In the second case, we show that $v$ can follow its leader to $\cQ$ without changing $\ell(v)$, however, in the first case $\ell(v)$ is necessarily modified. On the other hand, as noted, the eliminator of $v$ can also later change without the algorithm having to change $\ell(v)$.
Our analysis proceeds by a carefully accounting, in which we demonstrate that each change in an eliminator can result in at most a constant number of changes to the leaders $\ell$, from which an amortized bound of $O(\log n)$ leader changes follows via the results of \cite{behnezhad2019fully}. 



\paragraph{Comparison to the Prior Algorithm of \cite{chan2018fully}.}
The prior fully dynamic $k$-centers algorithm of Chan, Gourqin, and Sozio \cite{chan2018fully}, which obtained an amortized $O(\eps^{-1}\log \Delta \cdot k^2)$ update time, also partially employed the idea of maintaining an LFMIS (although the connection to MIS under lexicographical orderings was not made explicit in that work). However, instead of consistently maintaining the LFMIS with respect to a random ranking $\pi$, they begin by maintaining an LFMIS with respect to the ordering $\pi'$ in which the points were originally inserted into the stream. Since this ordering is adversarial, deletions in the stream can initially be very expensive to handle. To prevent bad deletions from repeatedly occurring, whenever a deletion to a center $c$ occurs, the algorithm of \cite{chan2018fully} randomly reorders all points which are contained in clusters that come after $c$ in the current ordering being used. 
In this way, the algorithm of \cite{chan2018fully} gradually converts the adversarial ordering $\pi'$ into a random ordering $\pi$. However, by reordering \textit{all} points which occurred after a deleted center $c$, instead of just the set of points which were led by that center (via a mapping $\ell$), the amortized update time of the algorithm becomes $O(k^2)$.\footnote{Consider the stream which inserts $k$ clusters of equal size $n/k$, and then begins randomly deleting half of each cluster in reverse order. By the time a constant fraction of all the points are deleted, for each deletion the probability a leader is deleted is $\Omega(k/n)$, but such a deletion causes $O(nk)$ work to be done by the algorithm. } 
In contrast, one of our key insights is to update the entire clustering to immediately reflect a random LFMIS ordering after each update. 





\subsubsection{Algorithm for LSH Spaces}
The extension of our algorithm to LSH spaces is based on the following observation: each time we attempt to add a vertex $v$ to $\LFMIS_{k+1}$, we can determine the fate of $v$ solely by finding the vertex $u \in \LFMIS_{k+1}$ in the neighborhood of $v$ of minimal rank (i.e., the eliminator of $v$, if it is contained in $\LFMIS_{k+1}$). If $\pi(u) < \pi(v)$, we simply set $\ell(v) = u$ and proceed. Otherwise, we must find all other neighbors $w$ of $v$ in $\LFMIS_{k+1}$, remove them from the LFMIS, and set $\ell(w) = u$. Finding the vertex $u$ can therefore be cast as an $r$-\textit{near neighbor search} problem: here, one wants to return any $u \in \LFMIS_{k+1}$ which is at distance at most $r$ from $u$, with the caveat that we need to return such vertices in order based on their ranking. 
Since, whenever $u$ enters the LFMIS, each point $w$ that we search through which leaves $\LFMIS_{k+1}$ had its leader change, if we can find each consecutive neighbor of $u$ in $\LFMIS_{k+1}$ in time $\alpha$, we could hope to bound the total runtime of the algorithm by an $O(\alpha)$ factor more than the total number of leader changes, which we know to be small by analysis of the general metric space algorithm. 

To achieve values of $\alpha$ which are sublinear in $k$, we must necessarily settle for an \textit{approximate near neighbor search} (ANN) algorithm. A randomized, approximate $(r,cr)$-nearest neighbor data structure will return any point in $\LFMIS_{k+1}$ which is at distance at most $cr$, assuming there is at least one point at distance at most $r$ in $\LFMIS_{k+1}$. In other words, such an algorithm can be used to find all edges in $G_r$, with the addition of any arbitrary subset of edges in $G_{cr}$. By relaxing the notion of a threshold graph to allow for such a $c$-approximation, one can hope to obtain a $c(2+\eps)$-approximation to $k$-centers via solving the $k$-Bounded MIS problem on each relaxed threshold graph. 

However, there are several serious challenges in the above black-box reduction to an arbitrary ANN data structure. Firstly, the above algorithm \textit{adaptively} queries the ANN data structure: the points which are inserted into $\LFMIS_{k+1}$, as well as the future edges which are reported by the data structure, depend on the prior edges which were returned by the data structure. Such adaptive reuse breaks down traditional guarantees of randomized algorithms, hence designing such algorithms which are robust to adaptivity is the subject of a growing body of research \cite{ben2020framework,cherapanamjeri2020adaptive,HassidimKMMS20,WoodruffZ20,ACSS21}. More nefariously, the adaptivity also goes in the other direction: namely, the random ordering $\pi$ influences which points will be added to the set $\LFMIS_{k+1}$, in turn influencing the future queries made to the ANN data structure, which in turn dictate the edges which exist in the graph (by means of queries to the ANN oracle). Thus, the graph itself cannot be assumed to be independent of $\pi$! 

The key issue above is that, when using an ANN data structure, the underlying relaxed threshold graph is no longer a deterministic function of the point set $P$, and is instead ``revealed'' as queries are made to the ANN data structure. We handle this issue by demonstrating that, for the class of ANN algorithms based on locally sensitive hash functions, one can define a graph $G$ which is only a function of the randomness in the ANN data structure, and not the ordering $\pi$. The edges of this graph are defined in a natural way --- two points are adjacent if they collide in at least one of the hash buckets. By an appropriate setting of parameters, the number of collisions between points at distance larger than $cr$ can be made small. By simply ignoring such erroneous edges as they are queried, the runtime increases by a factor of the number of such collisions. Lastly, by storing the points in each hash buckets via a binary search tree, we can ensure that edge queries can be answered in the order of their ranking, satisfying the properties we needed for usage in our main algorithm for $k$-Bounded MIS.

\subsubsection{Query Complexity Lower Bound for General Metric Spaces}
The lower bound of $\Omega(k^2)$ is fairly straightforward, and follows from hiding a small distance of $\eps$ in a distance matrix $\bD$ consisting otherwise of all $1$'s on the off-diagonals. In this case, adaptive algorithms have no advantage over non-adaptive algorithms, the latter of which are simpler to analyze. Proving the $\Omega(nk)$ query lower bound requires more care, due to the handling of adaptive testers. The hard distribution is as follows: in one case we randomly plant $k$ clusters each of size roughly $n/k$, where points within a cluster are close and points in separate clusters are far. In the second case, we do the same, and subsequently choose a point $i \sim [n]$ randomly and move it very far from all points (including its own cluster). Adaptive algorithms can gradually winnow the set of possible locations for $i$ by discovering connected components in the clusters, and eliminating the points in those components. Our proof follows by demonstrating that a sizable fraction of the probability space of the input distribution lies in computation paths of an adaptive algorithms which eliminate few data points, and therefore have small advantage in discovering the planted point $i$.

\section{Preliminaries}\label{sec:prelims}


We begin with basic notation and definitions. For any positive integer $n$, we write $[n]$ to denote the set $ \{1,2,\dots,n\}$. In what follows, we will fix any metric space $(\cX,d)$. 
A fully dynamic stream is a sequence $(p_1,\sigma_1),\dots,(p_M,\sigma_M)$ of $M$ updates such that $p_i \in \cX$ is a point, and $\sigma_i \in \{+,-\}$ signifies either an insertion or deletion of a point. Naturally, we assume that a point can only be deleted if it was previously inserted. Moreover, we may assume without loss of generality that each point is inserted at most once before being deleted, as duplicate points will not change the $k$-centers cost. 

We call a point $p \in \cX$ active at time $t$ if $p$ was inserted at some time $t' < t$, and not deleted anytime between $t'$ and $t$. We write $P^{t} \subset \cX$ to denote the set of active points at time $t$. We let $r_{\min},r_{\max}$ be reals such that for all $t \in [M]$ and $x,y \in P^t$, we have $r_{\min} \leq d(x,y) \leq r_{\max}$, and set $\Delta = r_{\max} / r_{\min}$ to be the aspect ratio of the point set. As in prior works \cite{chan2018fully,schmidt2019fully}, we assume that an upper bound on $\Delta$ is known.

\paragraph{Clustering Objectives.}
In the $k$-centers problem, given a point set $P$ living in a metric space $(\cX,d)$, the goal is to output a set of $k$ \textit{centers} $C = \{c_1,\dots,c_k\} \subset \cX$, along with a mapping $\ell: P \to \{c_1,\dots,c_k\}$, 
such that the following objective is minimized:
\[ \cost_{k,\infty}(\cC) =   \max_{p \in P} d(p,\ell(p))    \]
In other words, one would like for the maximum distance from $p$ to the $\ell(p)$, over all $p \in P$, to be minimized. Additionally, for any real $z >0$, we introduce the $(k,z)$-clustering problem, which is to minimize 
\[ \cost_{k,z}(\cC) =   \sum_{p \in P} d^z(p,\ell(p))    \]
We will be primarily concerned the the $k$-centers objective, but we introduce the more general $(k,z)$-clustering objective, which includes both $k$-medians (for $z=1$) and $k$-means (for $z=2$), as our lower bounds from Section \ref{sec:LB} will hold for these objectives as well. We remark that while $\ell(p)$ is usually fixed by definition to be the closest point to $p$ in $C$, the closest point may not necessarily be easy to maintain in a fully dynamic setting. Therefore, we will evaluate the cost of our algorithms with respects to both the centers and the mapping $\ell$ from points to centers. For any $p \in P$, we will refer to $\ell(p)$ as the \textit{leader} of $p$ under the mapping $\ell$, and the set of all points lead by a given $c_i$ is the cluster led by $c_i$.


In addition to maintaining a clustering with approximately optimal cost, we would like for our algorithms to be able to quickly answer queries related to cluster membership, and enumeration over all points in a cluster. Specifically, we ask that our algorithm be able to answer the following queries at any time step $t$: 

\begin{figure}[H]
\begin{Frame}[Queries to a Fully Dynamic Clustering Algorithm]
\begin{enumerate}
    \item \textbf{Membership Query:} Given a point $p \in P^{t}$, return the center $c = \ell(p)$ of the cluster $C$ containing~$p$.
    \item  \textbf{Cluster Enumeration:} Given a point $p \in P^{t}$, list all points in the cluster $C$ containing~$p$.
\end{enumerate}
\end{Frame}
\end{figure}

In particular, after processing any given update, our algorithms will be capable of repsonding to membership queries in $O(1)$-time, and to clustering enumeration queries in time $O(|C|)$, where $C$ is the clustering containing the query point.

\section{From Fully Dynamic $k$-Centers to $k$-Bounded MIS} 
\label{sec:kCenters}


In this section, we describe our main results for fully dynamic $k$-centers clustering, based on our main algorithmic contribution, which is presented in Section \ref{sec:generalMetric}. We begin by describing how
the problem of $k$-centers clustering of $P$ can be reduced to maintaining a maximal independent set (MIS) in a graph. In particular, the reduction will only require us to solve a weaker version of MIS, where we need only return a MIS of size at most $k$, or an independent set of size at least $k+1$. Formally, this problem, which we refer to as the $k$-Bounded MIS problem, is defined as follows.

\begin{definition}[$k$-Bounded MIS]
Given a graph $G = (V,E)$ and an integer $k \geq 1$, the $k$-bounded MIS problem is to output a maximal independent set $\cI \subset V$ of size at most $k$, or return an independent set $\cI \subset V$ of size at least $k+1$.
\end{definition}


\paragraph{Reduction from $k$-centers to $k$-Bounded MIS.}
The reduction from $k$-centers to computing a maximum independent set in a graph is well-known, and can be attributed to the work of Hochbaum and Shmoys \cite{hochbaum1986unified}. The reduction was described in Section \ref{sec:tech}, however, both for completeness and so that it is clear that only a $k$-Bounded MIS is required for the reduction, we spell out the full details here.

Fix a set of points $X$ in a metric space, such that $r_{\min} \leq d(x,y) \leq  r_{\max}$ for all $x,y \in X$. Then, for each $r= r_{\min}, (1+\eps/2) r_{\min} , (1+\eps/2)^2 r_{\min}, \dots, r_{\max}$, one creates the \textit{$r$-threshold graph} $G_r = (V,E_r)$, which is defined as the graph with vertex set $V = X$, and $(x,y) \in E_r$ if and only if $d(x,y) \leq r$. One then runs an algorithm for $k$-Bounded MIS on each graph $G_r$, and finds the smallest value of $r$ such that the output of the algorithm $\cI_r$ on $G_r$ satisfies $|\cI_r| \leq k$ --- in other words, $\cI_r$ must be a MIS of size at most $k$ in $G_r$. Observe that $\cI_r$ yields a solution to the $k$-centers problem with cost at most $r$, since each point in $X$ is either in $\cI_r$ or is at distance at most $r$ from a point in $\cI_r$. Furthermore, since the independent set $\cI_{r/(1+\eps/2)}$ returned from the algorithm run on $G_{r/(1+\eps/2)}$ satisfies $|\cI_{r/(1+\eps/2)}| \geq k+1$ it follows that there are $k+1$ points in $X$ which are pair-wise distance at least $r/(1+\eps/2)$ apart. Hence, the cost of any $k$-centers solution (which must cluster two of these $k+1$ points together) is at least $r/(2+\eps)$ by the triangle inequality. It follows that $\cI$ yields a $2+\eps$ approximation of the optimal $k$-centers cost.

Note that, in addition to maintaining the centers $\cI$, for the purposes of answering membership queries, one would also like to be able to return in $O(1)$ time, given any $x \in X$, a fixed $y \in \cI$ such that $d(x,y) \leq r$. We will ensure that our algorithms, whenever they return a MIS $\cI$ with size at most $k$, also maintain a mapping $\ell: V \setminus \cI \to \cI$ which maps any $x$, which is not a center, to its corresponding center $\ell(x)$. 

Observe that in the context of $k$-clustering, insertions and deletions of points correspond to insertions and deletions of entire vertices into the graph $G$. This is known as the fully dynamic \textit{vertex update model}. Since one vertex update can cause as many as $O(n)$ edge updates, we will not be able to read all of the edges inserted into the stream. Instead, we assume our dynamic graph algorithms can query for whether $(u,v)$ is an edge in constant time (i.e., constant time oracle access to the adjacency matrix).\footnote{This is equivalent to assuming that distances in the metric space can be computed in constant time, however if such distances require $\alpha$ time to compute, this will only increase the runtime of our algorithms by a factor of $\alpha$.}

Our algorithm for $k$-Bounded MIS will return a very particular type of MIS. Specifically, we will attempt to return the first $k+1$ vertices in a \textit{Lexicographically First MIS} (LFMIS), under a random lexicographical ordering of the vertices.

\paragraph{Lexicographically First MIS (LFMIS).} The LFMIS of a graph $G = (V,E)$ according to a ranking of the vertices specified by a mapping $\pi:V \to [0,1]$ is a unique MIS defined as by the following process. Initially, every vertex $v \in V$ is alive. We then iteratively select the alive vertex with minimal rank $\pi(v)$, add it to the MIS, and then kill $v$ and all of its alive neighbors. We write $\LFMIS^\pi(G)$ to denote the LFMIS of $G$ under $\pi$. For each vertex $v$, we define the \textit{eliminator} of $v$, denoted $\elim_{G,\pi}(v)$ to be the vertex $u$ which kills $v$ in the above process; namely,  $\elim_{G,\pi}(v)$ is the vertex with smallest rank in the set $(N(v) \cup \{v\}) \cap \LFMIS^\pi(G,\pi)$.

\begin{definition}
Given a graph $G = (V,E)$, $\pi:V \to [0,1]$, and an integer $k \geq 1$, we define the top-$k$ LFMIS of $G$ with respect to $\pi$, denoted $\LFMIS_k(G,\pi)$ to be the set consisting of the first $\min\{k,|\LFMIS(G,\pi)|\}$ vertices in $\LFMIS(G,\pi)$ (where the ordering is with respect to $\pi$). When $G,\pi$ are given by context, we simply write $\LFMIS_k$. 
\end{definition}

It is clear that returning a top-$(k+1)$ LFMIS of $G$ with respect to any ordering will solve the $k$-Bounded MIS problem. In order to also obtain a mapping $\ell$ from points to their centers in the independent set, we define the following augmented version ofthe top-$k$ LFMIS problem, which we refer to as a \textit{top-$k$ LFMIS with leaders}.

\begin{definition}\label{def:LFMISLead}
A top-$k$ LFMIS with leaders consists of the set $\LFMIS_k(G,\pi)$, along with a \textit{leader mapping function} $\ell:V \to V \cup \{\bot\}$, such that $(v,\ell(v)) \in E$ whenever $\ell(v) \neq \bot$, and such that if $\LFMIS_k(G,\pi) = \LFMIS(G,\pi)$, then $\ell(v) \in \LFMIS_k(G,\pi)$ for all $v \in V \setminus \LFMIS_k(G,\pi)$, and $\ell(v) = \bot$ for all $v \in \LFMIS_k(G,\pi)$. 
\end{definition}

The main goal of the following Section \ref{sec:generalMetric} will be to prove the existence of a $\tilde{O}(k)$ amortized update time algorithm for maintaining a top-$k$ LFMIS with leaders of a graph $G$ under a fully dynamic sequence of insertions and deletions of vertices from $G$. Specifically, we will prove the following theorem.

\noindent \textbf{Theorem} \ref{thm:LFMISMain}. {\it There is a algorithm which, on a fully dynamic stream of insertions and deletions of vertices to a graph $G$, maintains at all time steps a top-$k$ LFMIS of $G$ with leaders under a random ranking $\pi: V \to [0,1]$. The expected amortized per-update time of the algorithm is $O(k \log n + \log^2 n)$, where $n$ is the maximum number active of vertices at any time. Moreover, the algorithm does not need to know $n$ in advance. }

Next, we demonstrate how Theorem \ref{thm:LFMISMain} immediately implies the main result of this work (Theorem \ref{thm:main}). Firstly, we prove a proposition which demonstrates that any fully dynamic Las Vegas algorithm with small runtime in expectation can be converted into a fully dynamic algorithm with small runtime with high probability.

\begin{proposition}\label{prop:highProb}
Let $\cA$ be any fully dynamic randomized algorithm that correctly maintains a solution for a problem $\cP$ at all time steps, and runs in amortized time at most $\Run(\cA)$ in expectation. Then there is a fully dynamic algorithm for $\cP$ which runs in amortized time at most $O(\Run(\cA) \log\delta^{-1})$ with probability $1-\delta$ for all $\delta \in (0,1/2)$.
\end{proposition}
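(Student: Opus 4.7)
The plan is to employ a standard parallel-copies-with-round-robin-scheduling technique. I would take $k = \Theta(\log M + \log \delta^{-1})$ independent copies $\cA_1, \dots, \cA_k$ of $\cA$ with independent randomness (where $M$ is an upper bound on the stream length), and simulate all of them in parallel under a strict round-robin schedule that hands each copy one CPU cycle in turn. Whenever the outside world asks for a solution, we return the output of any copy that has already finished processing every update received so far; the analysis below shows that such a copy exists with probability at least $1-\delta$. Correctness at every time step is immediate since each copy individually maintains a correct solution for $\cP$ by hypothesis.

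The first step is scheduling accounting. Under round-robin, after $W$ total CPU cycles have elapsed each of the $k$ copies has used exactly $W/k$ of its own cycles. Thus, letting $R_i^{(T)}$ denote the random runtime of copy $i$ needed to process the first $T$ updates, the first copy to complete these $T$ updates does so after exactly $k \cdot \min_{i \in [k]} R_i^{(T)}$ global cycles. It therefore suffices to bound $\min_i R_i^{(T)}$ with high probability, uniformly in $T$.

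For a single fixed $T$, the hypothesis $\E[R_i^{(T)}] \leq T \cdot \Run(\cA)$ together with Markov's inequality gives $\Pr[R_i^{(T)} > 2T \Run(\cA)] \leq 1/2$, and since the $k$ copies are independent, $\Pr[\min_i R_i^{(T)} > 2T \Run(\cA)] \leq 2^{-k}$. To upgrade this to a simultaneous bound over all prefix lengths, I would use a doubling-grid union bound: apply the tail bound at each $T$ in the grid $\cT := \{1,2,4,\dots,2^{\lceil \log_2 M \rceil}\}$ and union-bound over its $O(\log M)$ elements. The map $T \mapsto \min_i R_i^{(T)}$ is monotone nondecreasing (inherited from the monotonicity of each $R_i^{(T)}$), so rounding any $T \in [M]$ up to the next grid point transfers the bound at the cost of only a factor of $2$ in the constant. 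Choosing $k = \Theta(\log M + \log \delta^{-1})$ and using the standing assumption that $M \leq \poly(n)$ (so $\log M$ is absorbed into $\log \delta^{-1}$, matching how lower-order logs are absorbed throughout the paper), we conclude that with probability at least $1-\delta$, for every $T \in [M]$ the total simulation time is $k \cdot \min_i R_i^{(T)} = O(T \cdot \Run(\cA) \log \delta^{-1})$, yielding the claimed amortized update time.

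The main obstacle, beyond the routine Markov + independence step, is the quantifier ``for every $T$'': a plain Markov bound only controls one prefix at a time. The geometric-grid union bound together with the monotonicity of $\min_i R_i^{(T)}$ in $T$ is precisely what upgrades the single-prefix bound to a uniform-in-$T$ statement at the price of only an $O(\log M)$ blowup, which is then absorbed into the $\log \delta^{-1}$ factor.
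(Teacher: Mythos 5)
Your parallel-copies construction is sound as far as it goes (the round-robin accounting, the Markov-plus-independence tail bound, and the doubling-grid union bound with monotonicity are all fine), but it does not prove the stated bound. Running $k$ copies in lockstep costs a multiplicative factor of $k$ on \emph{every} update, so your amortized time is $O(\Run(\cA)\cdot k) = O(\Run(\cA)(\log M + \log\delta^{-1}))$, not $O(\Run(\cA)\log\delta^{-1})$. The step where you ``absorb $\log M$ into $\log\delta^{-1}$'' is not valid: for constant $\delta$ (say $\delta = 1/3$) the claimed bound is $O(\Run(\cA))$ while yours is $O(\Run(\cA)\log M)$, and this extra factor would propagate into Theorem \ref{thm:main} as an extra $\log n$. (Even with the sharpest version of your union bound you only need $k = \Theta(\log\log M + \log\delta^{-1})$, which still leaves a $\log\log M$ loss.) A secondary issue is that your construction must know an upper bound on $M$ in advance to set $k$, whereas the paper's algorithm does not.

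The paper avoids the extra factor by a different mechanism: it runs a \emph{single} instance and restarts it with fresh randomness whenever, at current time $t$, the cumulative runtime exceeds $4t\Run(\cA)$. The number of restarts $\bZ_i$ occurring while $t \in [2^i, 2^{i+1}]$ is geometrically distributed, and the failure threshold $T_i = \log(2/\delta) + \lceil \log M\rceil - i$ is calibrated to \emph{decrease} with the scale $i$: large scales tolerate fewer excess restarts, but each excess restart there is exponentially unlikely, so the union bound over scales still sums to $\delta$ while the total excess work $\sum_i 2^{i+1} T_i$ telescopes to $O(M\log\delta^{-1}) + O(M)\sum_j j 2^{-j} = O(M\log\delta^{-1})$. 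The essential idea you are missing is this scale-dependent calibration (or some other device that avoids paying the redundancy factor on every single update); without it, any fixed level of redundancy sufficient to union-bound over all $\Theta(\log M)$ prefix scales necessarily shows up multiplicatively in the amortized cost.
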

\begin{proof}
The algorithm is as follows: we maintain at all time steps a single instance of $\cA$ running on the dynamic stream. If, whenever the current time step is $t$, the total runtime of the algorithm exceeds $4 t \Run(\cA)$, we delete $\cA$ and re-instantiate it with fresh randomness. We then run the re-instantiated version of $\cA$ from the beginning of the stream until either we reach the current time step $t$, or the total runtime again exceeds $4 t \Run(\cA)$, in which case we re-instantiate again. 

Let $M$ be the total length of the stream. For each $i=0,1,2,\dots,\lceil \log M \rceil$, let $\bZ_i$ be the number of times that $\cA$ is re-instantiated while the current time step is between $2^i$ and $2^{i+1}$. Note that the total runtime is then at most 
\[ 4 \Run(\cA) \cdot \left(M +  \sum_{i=0}^{\lceil \log M \rceil} 2^{i+1} \bZ_i \right) \]
Fix any $i \in \{0,1,\dots,\lceil \log M \rceil\}$, and let us bound the value $\bZ_i$. 
Each time that $\cA$ is restarted when the current time $t$ step is between $2^i$ and $2^{i+1}$, the probability that the new algorithm runs in time more than $2^{i+2}\Run(\cA) \geq 4  t \Run(\cA) $ on the first $2^{i+1}$ updates is at most $1/2$ by Markov's inequality. Thu, the probability that $\bZ_i > T_i + 1$ is at most $2^{-T_i}$, for any $T_i \geq 0$. Setting $T_i = \log(2/\delta) + \lceil \log M \rceil - i $, we have
\begin{equation}
    \begin{split}
        \sum_{i=0}^{\lceil \log M \rceil} \pr{\bZ_i > T_i + 1 }& \leq \frac{\delta}{2}   \sum_{i=0}^{\lceil \log M \rceil} \left(\frac{1}{2}\right)^{\lceil \log M \rceil - i} \\ 
        & < \delta
    \end{split}
\end{equation}
In other words, by a union bound, we have $\bZ_i \leq T_i+1$ for all $i=0,1,2,\dots,\lceil \log M \rceil$, with probability at least $1-\delta$. Conditioned on this, the total runtime is at most 

\begin{equation}
    \begin{split}
         4 \Run(\cA) \cdot \left(M +  \sum_{i=0}^{\lceil \log M \rceil} 2^{i+1} (T_i+1) \right) &= O\left(\Run(\cA)\left( \sum_{i=0}^{\lceil \log M \rceil} 2^{i} (  \log \delta^{-1} + \lceil \log M \rceil - i )  \right)\right) \\
         & = O\left(\Run(\cA)\left( M \log \delta^{-1} + M \sum_{i=0}^{\lceil \log M \rceil} \frac{i }{2^{i}}  \right)\right) \\
          & = O\left(\Run(\cA)\cdot  M \log \delta^{-1} \right) \\
    \end{split}
\end{equation}
which is the desired total runtime. 

\end{proof}

Given Theorem \ref{thm:LFMISMain}, along with the reduction to top-$k$ LFMIS from $k$-centers described in this section, which runs $O(\eps^{-1} \log \Delta)$ copies of a top-$k$ LFMIS algorithm, we immediately obtain a fully dynamic $k$-centers algorithm with $\tilde{O}(k)$ expected amortized update time. By then applying Proposition \ref{prop:highProb}, we obtain our main theorem, stated below. 

	\noindent \textbf{Theorem} \ref{thm:main}. {\it 
    There is a fully dynamic algorithm which, on a sequence of insertions and deletions of points from a metric space $\cX$, maintains a $(2+\eps)$-approximation to the optimal $k$-centers clustering. The amortized update time of the algorithm is $O(\frac{\log \Delta \log n}{\eps}(k  + \log n))$ in expectation, and $O(\frac{\log \Delta \log n}{\eps}(k  + \log n) \log \delta^{-1})$  with probability $1-\delta$ for any $\delta \in (0,\frac{1}{2})$, where $n$ is the maximum number of active points at any time step. 
    
    The algorithm can answer membership queries in $O(1)$-time, and enumerate over a cluster $C$ in time $O(|C_i|)$. 
}

%
\section{Fully Dynamic $k$-Bounded MIS with Vertex Updates}
\label{sec:generalMetric}

Given the reduction from Section \ref{sec:kCenters}, the goal of this section will be to design an algorithm which maintains a top-$k$ LFMIS with leaders  (Definition \ref{def:LFMISLead}) under a graph which receives a fully dynamic sequence of \textit{vertex} insertions and deletions. As noted, maintaining a top-$(k+1)$ LFMIS immediately results in a solution to the $k$-Bounded MIS problem. We begin by formalizing the model of vertex-valued updates to dynamic graphs. 


\paragraph{Fully Dynamic Graphs with Vertex Updates}
 In the vertex-update fully dynamic setting, at each time step a vertex $v$ is either inserted into the current graph $G$, or deleted from $G$, along with all edges incident to $v$. This defines a sequence of graphs $G^1, G^2,\dots,G^M$, where $G^t= (V^t,E^t)$ is the state of the graph after the $t$-th update. Equivalently, we can think of there being an ``underlying'' graph $G = (V,E)$, where at the beginning all vertices are \textit{inactive}. At each time step, either an active vertex is made inactive, or vice-versa, and $G^{t}$ is defined as the subgraph induced by the active vertices at time $t$. The latter is the interpretation which will be used for this section. 


Since the degree of $v$ may be as large as the number of active vertices in $G$, our algorithm will be unable to read all of the edges incident to $v$ when it arrives. Instead, we require only query access to the adjacency matrix of the underlying graph $G$. Namely, we assume that we can test in constant time whether $(u,v) \in E$ for any two vertices $u,v$.

For the purpose of $k$-centers clustering, we will need to maintain a top-$k$ LFMIS with leaders $(\LFMIS_k(G,\pi),\ell)$, along with a Boolean value indicating whether 
$\LFMIS_k(G,\pi) = \LFMIS(G,\pi)$. To do this, we can instead attempt to maintain the set $\LFMIS_{k+1}(G,\pi)$, as well as a leader mapping function $\ell:V \to V \cup \{\bot\}$, with the relaxed property that if $\LFMIS_k(G,\pi) = \LFMIS(G,\pi)$, then $\ell(v) \in \LFMIS(G,\pi)$ for all $v \in V \setminus \LFMIS(G,\pi)$ and $\ell(v) = \bot$ for all $v \in \LFMIS(G,\pi)$. We call such a leader function $\ell$ with this relaxed property a \textit{modified} leader mapping. Thus, in what follows, we will focus on maintaining a top-$k$ LFMIS with this modified leader mapping. 


\subsection{The Data Structure}
We now describe the main data structure and algorithm which will maintain a top-$k$ LFMIS with leaders in the dynamic graph $G$.
We begin by fixing a random mapping $\pi:V \to [0,1]$, which we will use as the ranking for our lexicographical ordering over the vertices. It is easy to see that if $|V| = n$, then by discretizing $[0,1]$ so that $\pi(v)$ can be represented in $O(\log n)$ bits we will avoid collisions with high probability. At every time step, the algorithm will maintain an ordered set $\ALG $ of vertices in a linked list, sorted by the ranking $\pi$, with $|\ALG| \leq k+1$. We will prove that, after every update $t$, we have $\ALG = \LFMIS_{k+1}(G^t,\pi)$.\footnote{We use a separate notation $\ALG$, instead of $\LFMIS_{k+1}$, to represent the set maintained by the algorithm, until we have demonstrated that we indeed have $\ALG = \LFMIS_{k+1}(G^t,\pi)$ at all time steps $t$.}  We will also maintain a mapping $\ell:V \to V \cup \{\bot\}$ which will be our leader mapping function. Initially, we set $\ell(v) = \bot$ for all $v$. 
Lastly, we will maintain a (potentially empty) priority queue $\cQ$ of \textit{unclustered} vertices, where the priority is similarly given by $\pi$.

Each vertex $v$ in $G_t$ will be classified as either a \textit{leader}, a \textit{follower}, or \textit{unclustered}. Intuitively, when $\LFMIS_k(G,\pi) = \LFMIS(G,\pi)$, the leaders will be exactly the points in $\LFMIS_k(G,\pi)$, the followers will be all other points $v$ which are mapped to some $\ell(v) \in \LFMIS_k(G,\pi)$ (in other words, $v$ ``follows'' $\ell(v)$), and there will be no unclustered points. At intermediate steps, however, when $|\LFMIS(G,\pi)| \geq k + 1$, we will be unable to maintain the entire set $\LFMIS(G,\pi)$ and, therefore, we will store the set of all vertices which are not in $\LFMIS_{k+1}$ in the priority queue $\cQ$ of unclustered vertices. The formal definitions of leaders, followers, and unclustered points follow.

Every vertex currently maintained in $\ALG$ is be a leader. Each leader $v$ may have a set of follower vertices, which are vertices $u$ with $\ell(u) = v$, in which case we say that $u$ follows $v$. By construction of the leader function $\ell$, every follower-leader pair $(u,\ell(u))$ will be an edge of $G$.  We write $\cF_v = \{u \in V : \ell(u) = v\}$ to denote the (possibly empty) set of followers of a leader $v$. For each leader, the set $\cF_v$ will be maintained as part of the data structure at the vertex $v$.

Now when the size of $\LFMIS$ exceeds $k+1$, we will have to remove the leader $v$ in $\LFMIS$ with the largest rank, so as to keep the size of $\ALG$ at most $k+1$. The vertex $v$ will then be moved to the queue $\cQ$, along with its priority $\pi(v)$. The set $\cF_v$ of followers of $v$ will continue to be followers of $v$ --- their status remains unchanged. In this case, the vertex $v$ is now said to be an \textit{inactive} leader, whereas each leader currently in $\ALG$ is called an \textit{active} leader. If, at a later time, we have $\pi(v) < \max_{u \in \ALG} \pi(u)$, then it is possible that $v$ may be part of $\LFMIS_{k+1}$, in which case we will attempt to reinsert the inactive leader $v$ from $\cQ$ back into $\ALG$. Note, importantly, that whenever $\pi(v) < \max_{u \in \ALG} \pi(u)$ occurs at a future time step for a vertex $v \in \cQ$, then either $v$ is part of $\LFMIS_{k+1}$, or it is a neighbor of some vertex $u \in \LFMIS_{k+1}$ of lower rank. In both cases, we can remove $v$ from $\cQ$ and attempt to reinsert it, with the guarantee that after this reinsertion $v$ will either be an active leader, or a follower of an active leader.


\paragraph{The Unclustered Queue.} We now describe the purpose and function of the priority queue $\cQ$. 
Whenever either a vertex $v$ is inserted into the stream, or it is a follower of a leader $\ell(v)$ who is removed from the $\LFMIS$, we must attempt to reinsert $v$, to see if it should be added to $\LFMIS_{k+1}$. However, if $|\ALG| = k+1$, then the only way that $v$ should be a part of $\LFMIS_{k+1}$ (and therefore added to $\ALG$) is if $\pi(v) < \max_{u \in \ALG} \pi(u)$. If this does not occur, then we do not need to insert $v$ right away, and instead can defer it to a later time when either $|\ALG| < k+1$ or $\pi(v) < \max_{u \in \ALG} \pi(u)$ holds. Moreover, by definition of the modified leader mapping $\ell$, we only need to set $\ell(v)$ when $|\LFMIS_{k+1}| < k+1$. We can therefore add $v$ to the priority queue $\cQ$. 

Every point in the priority queue is called an \textit{unclustered point}, as they are not currently part of a valid $k$-clustering in the graph. By checking the top of the priority queue at the end of processing each update, we can determine whenever either of the events $|\ALG| \leq k$ or $\pi(v) < \max_{u \in \ALG} \pi(u)$ holds; if either is true, we iteratively attempt to reinsert the top of the queue until the queue is empty or both events no longer hold. This will ensure that either all points are clustered (so $\cQ = \emptyset$), or $|\ALG| = k+1$ and $\ALG = \LFMIS_{k+1}$ (since no point in the queue could have been a part of $\LFMIS_{k+1}$).

\paragraph{The Leader Mapping.}  
Notice that given a top-$k$ LFMIS, a valid leader assignment is always given by $\ell(v) = \elim_{G,\pi}(v)$, where $\elim_{G,\pi}(v)$ is the eliminator of $v$ via $\pi$ as defined in Section \ref{sec:kCenters}; this is the case since if  $\LFMIS_k(G,\pi) = \LFMIS(G,\pi)$ then each vertex is either in $\LFMIS_k(G,\pi)$ or eliminated by one of the vertices in $\LFMIS_k(G,\pi)$.
Thus, intuitively, our goal should be to attempt to maintain that $\ell(v) = \elim_{G,\pi}(v)$ for all $v \notin \LFMIS_k(G,\pi)$.
However, the addition of a new vertex which enters $\LFMIS_{k}(G,\pi)$ can change the eliminators of many other vertices \textit{not} in $\LFMIS_k(G,\pi)$. Discovering which points have had their eliminator changed immediately on this time step would be expensive, as one would have to search through the followers of all active leaders to see if any of their eliminators changed. Instead, we note that at this moment, so long as the new vertex does not share an edge with any other active leader, we do not need to modify our leader mapping. Instead, we can \textit{defer} the reassignment of the leaders of vertices $v$ whose eliminator changed on this step, to a later step when their leaders are removed from $\LFMIS_{k+1}(G,\pi)$. Demonstrating that the number of changes to the leader mapping function $\ell$, defined in this way, is not too much larger than the number of changes to the eliminators of all vertices, will be a major component of our analysis.

\paragraph{The Algorithm and Roadmap.} Our main algorithm is described in three routines: Algorithms \ref{alg:main}, \ref{alg:ins}, and \ref{alg:del}. Algorithm \ref{alg:main} handles the inital insertion or deletion of a vertex in the fully dynamic stream, and then calls at least one of Algorithms \ref{alg:ins} or \ref{alg:del}. Algorithm \ref{alg:ins} handles insertions of vertices in the data structure, and Algorithm \ref{alg:del} handles deletions of vertices from the data structure. We begin in Section \ref{sec:correctness} by proving that our algorithm does indeed solve the top-$k$ LFMIS problem with the desired modified leader mapping. Then, in Section \ref{sec:amortized}, we analyze the amortized runtime of the algorithm.




\begin{algorithm}[ht]
\DontPrintSemicolon
	\caption{Process Update}\label{alg:main}
	\KwData{An update $(v,\sigma)$, where $\sigma \in \{+,-\}$.}
 \If{$\sigma = +$ is an insertion of $v$}{
 Generate $\pi(v)$, and set $\ell(v) = \bot$.\;
 Call $\ins(v,\pi(v))$.\;}
\If{$\sigma = -$ is a deletion of $v$}{
Call $\delete(v)$.\;
}

\While{$|\cQ| \neq \emptyset \boldsymbol{\wedge} \left( |\ALG| \leq  k \boldsymbol{\vee} \min_{w \in \cQ} \pi(w) < \max_{w \in \ALG} \pi(w)  \right)$}{ \label{line:whileMain}  
$u \leftarrow \arg \min_{w \in \cQ} \pi(w)$.\;
Delete $u$ from $\cQ$, and call $\ins(u,\pi(u))$. \;
}
\end{algorithm}


\begin{algorithm}[ht]
\DontPrintSemicolon
	\caption{$\ins(v,\pi(v))$} \label{alg:ins}
 \If{$|\ALG| = k+1$ $\boldsymbol{\wedge}$ $\pi(v) > \max_{u \in \ALG} \pi(u)$}{\label{line:firstIfIns}
 Insert $(v,\pi(v))$ into $\cQ$. \;}
\Else{
Compute $S =  \ALG \cap N(v)$\;
\If{$S = \emptyset$}{
Add $v$ to $\ALG$. \;
\If{ $|\ALG| = k+2$}{
Let $u = \arg \max_{u' \in \ALG} \pi(u') $. \;
Remove $u$ from $\ALG$, and insert $(u,\pi(u))$ into $\cQ$. \label{line:delOverflow} \;
}
}
\Else{
    $u^* = \arg \min_{u' \in S} \pi(u')$\;
    \If{$\pi(u^*) < \pi(v)$}{
    \If{$v$ is a leader}{
    For each $w \in \cF_v$, insert $(w,\pi(w))$ to $\cQ$, and set $\ell(w) = \bot$ \label{line:queue1Ins} \;
    Delete the list $\cF_v$. \;
    }
    Add $v$ to $\cF_{u^*}$ as a follower of $u^*$, set $\ell(v) = u^*$. \;
    }
    \Else{
    For each $w \in \cup_{u \in S} \cF_u$, add $(w,\pi(w))$ to $\cQ$, and set $\ell(w) = \bot$. \label{line:queue2Ins} \;
    For each $u \in S$, set $\ell(u) = v$ to be a follower of $v$, remove $u$ from $\ALG$, and delete the list $\cF_u$. \;
    }
 }  
}
\end{algorithm}

\begin{algorithm}[!ht]
\DontPrintSemicolon
	\caption{$\delete(v)$ }\label{alg:del}
	
	\If{$v$ is a follower}{
	Delete $v$ from $\cF_{\ell(v)}$, and remove $v$ from the set of vertices.  \; 
	}
	\ElseIf{$v \in \cQ$}{
	\If{$v$ is a leader}{
	 For each $w \in \cF_v$, insert $(w,\pi(w))$ to $\cQ$, and set $\ell(w) = \bot$ \label{line:queue1Del} \;
	 Delete the list $\cF_v$, and remove $v$ from $\cQ$ and the set of vertices. \;
	}
	\Else{
	Delete $v$ from $\cQ$ and the set of vertices. \;
	}
	}
	\Else(\tcc*[r]{Must have $v \in \ALG$}){
	 For each $w \in \cF_v$, insert $(w,\pi(w))$ to $\cQ$, and set $\ell(w) = \bot$\label{line:queue2Del} \;
	 	 Delete the list $\cF_v$, and remove $v$ from $\ALG$ and the set of vertices. \;
	}
\end{algorithm}


\subsection{Correctness of the Algorithm}\label{sec:correctness}
We will now demonstrate the correctness of the algorithm, by first proving two Propositions. 

\begin{proposition}\label{prop:IS}
After every time step $t$, the set $\ALG$ stored by the algorithm is an independent set in $G^{(t)}$.
\end{proposition}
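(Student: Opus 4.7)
The plan is to prove the proposition by induction on the time step $t$. For the base case $t = 0$, the set $\ALG$ is empty and is therefore trivially an independent set. For the inductive step, assuming $\ALG$ is independent in $G^{(t-1)}$ after processing update $t-1$, I would show the invariant is preserved after the $t$-th update. If the $t$-th update is a deletion of a vertex, then edges can only be removed from the graph, so the independence of any surviving subset is automatic; the real work is in tracking the modifications the algorithm itself makes to $\ALG$.

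First I would enumerate every place in Algorithms \ref{alg:main}, \ref{alg:ins}, and \ref{alg:del} where $\ALG$ is mutated. By direct inspection, these are exactly: (i) in $\ins$, when $S = \ALG \cap N(v) = \emptyset$ and $v$ is appended to $\ALG$; (ii) in $\ins$ at line \ref{line:delOverflow}, where, after case (i) makes $|\ALG| = k+2$, the maximum-rank vertex is evicted; (iii) in $\ins$ in the branch $\pi(u^*) \geq \pi(v)$, where every $u \in S$ is demoted to a follower of $v$ and removed from $\ALG$ while $v$ is promoted into $\ALG$ as a new leader; and (iv) in $\delete$, where a leader $v$ is itself removed from $\ALG$. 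Every mutation of $\ALG$ triggered by the while loop of Algorithm \ref{alg:main} occurs via a call to $\ins$ and therefore falls under (i)--(iii).

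Then I would verify each case. Cases (ii) and (iv) are immediate, since removing a vertex from an independent set preserves independence. For case (i), the branch condition $S = \emptyset$ asserts exactly that $v$ has no neighbor in the current $\ALG$, so $\ALG \cup \{v\}$ is independent. For case (iii), the set $S$ is computed as $\ALG \cap N(v)$ at the top of the current $\ins$ call, and every vertex of $S$ is removed from $\ALG$ before $v$ is promoted; therefore $(\ALG \setminus S) \cup \{v\}$ contains no edge incident to $v$, and it inherits independence on the remaining vertices from the inductive hypothesis. Chaining case (i) with case (ii) handles the transient set of size $k+2$ without breaking independence.

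The one point that requires care — though I do not expect it to be a genuine obstacle — is that a single top-level update can trigger many internal calls to $\ins$: one for the new vertex, and then one per iteration of the while loop in Algorithm \ref{alg:main} that drains an element of $\cQ$. The inductive invariant therefore has to be threaded through these internal calls, so that the set $S = \ALG \cap N(v)$ freshly recomputed at the top of each $\ins$ really does capture the neighbors of $v$ in the then-current $\ALG$. Since the case analysis above shows that each individual call to $\ins$ or $\delete$ leaves $\ALG$ independent, and $S$ is recomputed at the start of every call, this internal invariant propagates and the inductive step closes, giving the desired conclusion at time $t$.
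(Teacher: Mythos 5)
Your proof is correct and rests on the same key observation as the paper's: a vertex is only added to $\ALG$ when $S = \ALG \cap N(v)$ is empty, or when every vertex of $S$ is simultaneously demoted to a follower. The paper packages this as a proof by contradiction via the most recently inserted of two adjacent vertices in $\ALG$, while you run the same case analysis as a forward induction over all mutation sites; the substance is identical.
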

\begin{proof}
Suppose otherwise, and let $v,u \in \ALG$ be vertices with $(v,u) \in E$. WLOG we have that $v$ is the vertex which entered $\ALG$ most recently of the two. Then we had $ u \in ALG$ at the moment that $\ins(v,\pi(v))$ was most recently called. Since on the step that $\ins(v,\pi(v))$ was most recently called the vertex $p$ was added to $\ALG$, it must have been that $\min_{w \in N(v) \cap \ALG} \pi(w) > \pi(v)$, thus, in particular,  we must have had $\pi(v) < \pi(u)$. However, in this case we would have made $u$ a follower of $v$ at this step and set $\ell(u) = v$, which could not have occurred since then $u$ would have been removed from $\ALG$, which completes the proof. 
\end{proof}

\begin{claim}\label{claim:1}
We always have $|\ALG| \leq k+1$ at all time steps.
\end{claim}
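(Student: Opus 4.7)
The plan is to prove the claim by induction on the number of calls to $\ins$ and $\delete$ made during the entire execution. The base case is immediate since $\ALG = \emptyset$ initially, and since Algorithm \ref{alg:main} never modifies $\ALG$ directly, it suffices to verify that if $|\ALG| \le k+1$ holds on entry to $\ins$ or $\delete$, then it still holds on exit. Every iteration of the while-loop on Line \ref{line:whileMain} simply makes a fresh call to $\ins$, so the inductive bound propagates through those iterations as well.

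For $\delete(v)$, a quick inspection of Algorithm \ref{alg:del} shows that every branch either leaves $\ALG$ unchanged or removes $v$ from $\ALG$ (on Line \ref{line:queue2Del}); vertices are never added to $\ALG$ inside a deletion. Thus $|\ALG|$ can only shrink, and the invariant is trivially preserved.

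For $\ins(v,\pi(v))$ the argument proceeds by case analysis on the branches of Algorithm \ref{alg:ins}. If $|\ALG|=k+1$ and $\pi(v) > \max_{u\in\ALG}\pi(u)$, then $v$ is enqueued in $\cQ$ and $\ALG$ is untouched. Otherwise, we enter the else branch and compute $S=\ALG\cap N(v)$. When $S=\emptyset$, the vertex $v$ is added to $\ALG$, which could momentarily bring its size to $k+2$; the explicit overflow check on Line \ref{line:delOverflow} then evicts the current max-rank element, restoring $|\ALG|\le k+1$. When $S\ne\emptyset$ and $\pi(u^*)<\pi(v)$, the vertex $v$ becomes a follower of $u^*$ and $\ALG$ is untouched. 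Finally, when $S\ne\emptyset$ and $\pi(v)<\pi(u^*)$, the vertex $v$ is promoted to a leader (implicitly added to $\ALG$, since each $u\in S$ is reassigned $\ell(u)=v$, which requires $v$ to itself be a leader), while every $u\in S$ is removed from $\ALG$; since $|S|\ge 1$, the net change in $|\ALG|$ is at most $1-|S|\le 0$.

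The argument is largely routine and no substantive obstacle is anticipated; the only subtlety worth flagging is that in the first sub-case of the else branch, $|\ALG|$ is transiently permitted to reach $k+2$ within the call, which is precisely the situation the overflow eviction on Line \ref{line:delOverflow} is designed to correct.
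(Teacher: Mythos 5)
Your proof is correct and follows essentially the same case analysis as the paper's: deletions and the enqueue branch never grow $\ALG$, the $S=\emptyset$ branch is repaired by the eviction on Line \ref{line:delOverflow}, and the remaining two branches either leave $\ALG$ unchanged or change its size by $1-|S|\le 0$. Your remark that $v$'s promotion to a leader in the final branch is only implicit in the pseudocode is a fair observation, but it does not change the argument.
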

\begin{proof}
After the first insertion the result is clear. We demonstrate that the claim holds inductively after each insertion. Only a call to $\ins(v)$ can increase the size of $\ALG$, so consider any such call. If $N(v) \cap \ALG$ is empty, then we add $v$ to $\ALG$, which can possibly increase its size to $k+2$ if it previously had $k+1$ elements. In this case, we remove the element with largest rank and add it to $\cQ$, maintaining the invariant. If there exists a $u^* \in N(v) \cap \ALG$ with smaller rank than $v$, we make $v$ the follower of the vertex in $ N(v) \cap \ALG$ with smallest rank, in which case the size of $\ALG$ is unaffected. In the final case, all points in $N(v) \cap \ALG$ have larger rank than $v$, in which case all of $N(v) \cap \ALG$ (which is non-empty) is made a follower of $v$ and removed from $\ALG$, thereby decreasing or not affecting the size of $\ALG$, which completes the proof. 
\end{proof}

\begin{proposition}[Correctness of Leader Mapping]
At any time step, if $\ALG \leq k$ then every vertex $v \in V$ is either contained in $\ALG$, or has a leader $\ell(v) \in \ALG$ with $(v,\ell(v)) \in E$.
\end{proposition}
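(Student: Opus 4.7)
The plan is to first establish that whenever $|\ALG| \leq k$ holds at the end of processing an update, the priority queue $\cQ$ must be empty, and then to invoke a structural invariant to conclude that every vertex is either in $\ALG$ or is a follower of some $u \in \ALG$ via $\ell$.

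For the first step, observe that the while loop on line \ref{line:whileMain} of Algorithm \ref{alg:main} exits exactly when either (i) $\cQ = \emptyset$, or (ii) $|\ALG| = k+1$ and $\min_{w \in \cQ} \pi(w) \geq \max_{w \in \ALG}\pi(w)$. Combined with Claim \ref{claim:1}, which gives $|\ALG| \leq k+1$ at all times, the hypothesis $|\ALG| \leq k$ rules out case (ii), and hence forces $\cQ = \emptyset$.

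For the second step, I would prove by induction on the number of updates that every vertex $v$ currently present in $G$ falls into exactly one of four disjoint categories after each update: (a) $v \in \ALG$ with $\ell(v) = \bot$; (b) $v \in \cQ$ is an inactive leader with $\ell(v) = \bot$ and a follower list $\cF_v$; (c) $v$ is a follower, with $\ell(v)$ a leader of type (a) or (b) and $(v, \ell(v)) \in E$; or (d) $v \in \cQ$ is unclustered, with $\ell(v) = \bot$ and no associated follower list. The base case is immediate since $V = \emptyset$ initially. The inductive step is a case analysis over Algorithms \ref{alg:ins} and \ref{alg:del}; the key sub-facts to verify are that (1) whenever $\ell(v)$ is assigned to a value $u^* \neq \bot$, the vertex $u^*$ is chosen from $\ALG \cap N(v)$, so $(v, u^*) \in E$; (2) whenever a leader $u$ is demoted into $\cQ$ via line \ref{line:delOverflow}, its follower list is preserved intact, so its followers remain in category (c); and (3) whenever a leader $u$ is either deleted (lines \ref{line:queue1Del}, \ref{line:queue2Del}) or absorbed as a follower of a new vertex (lines \ref{line:queue1Ins}, \ref{line:queue2Ins}), each of its former followers is pushed into $\cQ$ with $\ell$ reset to $\bot$, so no vertex is left pointing at a non-leader.

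Combining the two ingredients, if $|\ALG| \leq k$ at the end of an update then $\cQ = \emptyset$, and by the invariant categories (b) and (d) are empty. Hence every $v \in V$ is either in $\ALG$, or is a follower with $\ell(v) \in \ALG$ and $(v, \ell(v)) \in E$, which is the claim. The main obstacle I anticipate is sub-fact (3) in Algorithm \ref{alg:ins} when an arriving $v$ is adjacent to a non-empty set $S \subseteq \ALG$ of higher-rank leaders: one must verify that the followers of $\bigcup_{u \in S} \cF_u$ are pushed into $\cQ$ with $\ell$ reset \emph{before} the demoted leaders in $S$ themselves become followers of $v$, to avoid any transient ``follower of a follower'' state that would break the invariant.
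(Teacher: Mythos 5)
Your proposal is correct and follows essentially the same route as the paper's proof: first deduce $\cQ = \emptyset$ from the exit condition of the while loop in Line \ref{line:whileMain} when $|\ALG| \leq k$, then conclude that every remaining vertex is an active leader or a follower of one, with $(v,\ell(v)) \in E$ guaranteed because $\ell$ is only ever assigned to a neighbor. You simply make explicit (via the four-category induction) the leader/follower/unclustered invariant that the paper leaves implicit in its data-structure description, and the ordering concern you flag is already resolved by the pseudocode, which pushes the followers in $\cup_{u \in S}\cF_u$ to $\cQ$ on Line \ref{line:queue2Ins} before demoting the leaders in $S$.
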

\begin{proof}
After processing any update, we first claim that if $\ALG \leq k$ we have $\cQ = \emptyset$. This follows from the fact that the while loop in Line \ref{line:whileMain} of Algorithm \ref{alg:main} does not terminate until one of these two conditions fails to hold. Thus if $\ALG \leq k$, every vertex $v \in V$ is either contained in $\ALG$ (i.e., an active leader), or is a follower of such an active leader, which completes the proof of the proposition, after noting that we only set $\ell(u) = v$ when $(u,v) \in E$ is an edge.
\end{proof}

\begin{lemma}[Correctness of the top-$k$ LFMIS]\label{lem:correctness}
After every time step, we have $\ALG = \LFMIS_{k+1}(G,\pi)$. 
\end{lemma}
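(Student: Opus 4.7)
My plan is to prove the lemma by induction on the number of updates processed. The base case is trivial since both $\ALG$ and $\LFMIS_{k+1}$ of the empty graph are empty. For the inductive step, I would first establish a structural characterization of $\LFMIS_{k+1}(G,\pi)$ as the unique subset $S$ of the active vertices satisfying: (i) $S$ is independent; (ii) $|S|\leq k+1$; and (iii) for every active vertex $v\notin S$ such that either $|S|\leq k$ or $\pi(v)<\max_{u\in S}\pi(u)$, there exists $u\in S$ with $(u,v)\in E$ and $\pi(u)<\pi(v)$. Uniqueness is shown by a secondary induction on vertex rank: comparing $S$ against the output of the greedy LFMIS process in rank order, one shows that at every prefix the two sets coincide, until the greedy process first reaches $k+2$ members, at which point the prefix of size $k+1$ must equal both $S$ and $\LFMIS_{k+1}(G,\pi)$.

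Given the characterization, it suffices to verify that $\ALG$ satisfies (i)--(iii) after every update. Items (i) and (ii) are Proposition~\ref{prop:IS} and Claim~\ref{claim:1}. For (iii), I would maintain two auxiliary invariants throughout the algorithm's execution: (A) every follower $v$ satisfies $(v,\ell(v))\in E$ and $\pi(\ell(v))<\pi(v)$, where $\ell(v)$ may be either an active leader in $\ALG$ or an inactive leader currently residing in $\cQ$; and (B) whenever Algorithm~\ref{alg:main} finishes its outer while loop, either $\cQ=\emptyset$, or $|\ALG|=k+1$ and $\min_{w\in\cQ}\pi(w)\geq\max_{u\in\ALG}\pi(u)$. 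Invariant (B) is immediate from the loop guard. Invariant (A) requires inspecting every line of Algorithms~\ref{alg:ins} and \ref{alg:del} that assigns $\ell$: the only two such assignments are when $v$ is set to follow $u^*=\arg\min_{u'\in S}\pi(u')$ in the branch $\pi(u^*)<\pi(v)$, and when former leaders in $S$ are set to follow a newly-inserted $v$ of smaller rank; both cases preserve $\pi(\ell(\cdot))<\pi(\cdot)$, and any time a leader's followers are orphaned (Lines~\ref{line:queue1Ins}, \ref{line:queue2Ins}, \ref{line:queue1Del}, \ref{line:queue2Del}) they are spilled to $\cQ$ with $\ell(\cdot)=\bot$, so the invariant is never left in a broken state.

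Combining (A) and (B), property (iii) follows by case analysis on an active vertex $v\notin\ALG$: if $v\in\cQ$, then (B) forces $|\ALG|=k+1$ and $\pi(v)\geq\max\pi(\ALG)$, so the rank condition in (iii) fails and the implication is vacuously satisfied; if $v$ is a follower with $\ell(v)\in\ALG$, then $\ell(v)$ is itself the required witness by (A); and if $v$ is a follower with $\ell(v)\in\cQ$ (an inactive leader), chaining (A) and (B) gives $\pi(v)>\pi(\ell(v))\geq\max\pi(\ALG)$, which again makes the rank condition fail.

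The main obstacle I anticipate is not the characterization nor the final case split, but rather the bookkeeping required to verify invariant (A) across all code paths, especially the branches in which an entire $\cF_v$ list is transferred to $\cQ$ when a former leader $v$ either becomes a follower of a new vertex or is overflowed out of $\ALG$ on Line~\ref{line:delOverflow}. In particular, one must argue that the categorization of active vertices into \emph{active leader}, \emph{follower} (pointing to an active or inactive leader), and \emph{unclustered} (sitting in $\cQ$) is a valid partition preserved by every update, including by each re-processing of a popped element inside the outer while loop; this amounts to a careful walk through each branch of Algorithms~\ref{alg:main}--\ref{alg:del}, checking that no vertex is ever left in an intermediate inconsistent state at the end of an update.
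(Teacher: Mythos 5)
Your proposal is correct and rests on the same two pillars as the paper's own proof: the fact that $\ell(v)=u$ is only ever assigned when $(u,v)\in E$ and $\pi(u)<\pi(v)$, and the fact that the termination condition of the while loop in Line~\ref{line:whileMain} of Algorithm~\ref{alg:main} guarantees, at the end of each update, that either $\cQ=\emptyset$ or $|\ALG|=k+1$ with every queued vertex ranked above all of $\ALG$. The paper runs the rank-order induction directly on $\ALG$ versus $\LFMIS_{k+1}(G,\pi)$ and invokes these facts inline, whereas you factor the same induction into a standalone uniqueness characterization and then discharge its hypotheses via your invariants (A) and (B); this is a cleaner modularization of what is essentially the identical argument.
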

\begin{proof}
Order the points in $\ALG = (v_1,\dots,v_r)$ and $\LFMIS_{k+1}(G,\pi) = (u_1,\dots,u_s)$ by rank. We prove inductively that $v_i = u_i$. Firstly, note that $u_1$ is the vertex with minimal rank in $G$. As a result, $u_1$ could not be a follower of any point, since we only set $\ell(u) = v$ when $\pi(v) < \pi(u)$. Thus $u_1$ must either be an inactive leader (as it cannot be equal to $v_j$ for $j > 1$) or an unclustered point. In both cases, one has $u_1 \in \cQ$, which we argue cannot occur. To see this, note that at the end of processing the update, the while loop in  Line \ref{line:whileMain} of Algorithm \ref{alg:main} would necessarily remove $u_1$ from $\cQ$ and insert it. It follows that we must have $u_1 = v_1$.

In general, suppose we have $v_i = u_i$ for all $i \leq j$ for some integer $j < s$. We will prove $v_{j+1} = u_{j+1}$. First suppose $r , s \geq j+1$.
Now by definition of the LSFMIS, the vertex $u_{j+1}$ is the smallest ranked vertex in $V \setminus  \cup_{i \leq j} N(v_i) \cup \{u_i\}$. Since we only set $\ell(u) = v$ when $(u,v) \in E$ is an edge, it follows that $u_{j+1}$ cannot be a follower of $u_{i} = v_i$ for any $i \leq j$. Moreover, since we only set $\ell(u) = v$ when $\pi(v) < \pi(u)$, it follows that $u_{j+1}$ cannot be a follower of $v_{i}$ for any $i > j$, since $\pi(v_i) \geq \pi(u_{j+1})$ for all $i > j$. 
Thus, if $v_{j+1} \neq u_{j+1}$, it follows that either $u_{j+1} \in \cQ$, or $u_{j+1}$ is a follower of some vertex $u' \in \cQ$ with smaller rank than $u_{j+1}$. Then, similarly as above, in both cases the while loop in  Line \ref{line:whileMain} of Algorithm \ref{alg:main} would necessarily remove $u_{j+1}$ (or $u'$ in the latter case) from $\cQ$ and insert it, because $\pi(u_{j+1}) < \pi(v_r)$, and in the latter case if such a $u'$ existed we would have $\pi(u') < \pi(u_{j+1}) < \pi(u_r)$. We conclude that $v_{j+1} = u_{j+1}$.

The only remaining possibility is $r \neq s$. First, if $r > s$, by Claim \ref{claim:1} we have $r \leq k+1$, and by  Proposition \ref{prop:IS} $\ALG$ forms a independent set. Thus $v_1,v_2,\dots,v_s,v_{s+1}$ is an independent set, but since $v_i = u_i$ for $i \leq s$ and $\LFMIS_{k+1}(G,\pi) = \{u_1,\dots,u_s\}$ is a maximal independent set whenever $s \leq k$, this yields a contradiction. Finally, if $r < s$, consider the vertex $u_{r+1}$. Since $u_i = v_i$ for all $i \leq r$, $u_{r+1}$ cannot be a follower of $v_i$ for any $i \in [r]$. As a result, it must be that either $u_{r+1} \in \cQ$ or $u_{r+1}$ is a follower of a vertex in $\cQ$. In both cases, 
at the end of the last update, we had $|\ALG| = r \leq k$ and $\cQ \neq \emptyset$, which cannot occur as the while loop in  Line \ref{line:whileMain} of Algorithm \ref{alg:main} would not have terminated. It follows that $r=s$, which completes the proof.
\end{proof}


\subsection{Amortized Update Time Analysis}\label{sec:amortized}
We now demonstrate that the above algorithm runs in amortized $\tilde{O}(k)$-time per update. We begin by proving a structural result about the behavior of our algorithm. In what follows, let $G^t$ be the state of the graph \textit{after} the $t$-th update. Similarly, let $\ell_t(v) \in V \cup \{\bot\}$ be the value of $\ell(v)$ after the $t$-th update. 

\begin{proposition}\label{prop:orderedPi}
Let $\ins(v_1,\pi(v_1)),\ins(v_2,\pi(v_2)),\dots,\ins(v_r,\pi(v_r))$ be the ordered sequence of calls to the $\ins$ function (Algorithm \ref{alg:ins}) which take places during the processing of any individual update in the stream. Then we have $\pi(v_1) < \pi(v_2) < \cdots < \pi(v_r)$. As a corollary, for any vertex $v$ the function $\ins(v,\pi(v))$ is called at most once per time step. 
\end{proposition}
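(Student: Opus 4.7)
The plan is to prove strict monotonicity by induction on $i$, with the key invariant that at the moment $\ins(v_i,\pi(v_i))$ is invoked, every element currently in $\cQ$ has $\pi$-rank strictly greater than $\pi(v_i)$. I assume distinct ranks throughout, which holds with high probability by the discretization of $\pi$ into $O(\log n)$-bit values. A companion invariant that is implicit in Algorithm~\ref{alg:main} is the end-of-time-step condition: when the while loop on Line~\ref{line:whileMain} terminates, either $\cQ = \emptyset$, or $|\ALG| = k+1$ and $\min_{w\in\cQ}\pi(w) \ge \max_{w\in\ALG}\pi(w)$. I will invoke this at the start of each time step.

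For the inductive step ($i \ge 2$), the vertex $v_i$ is popped as the $\argmin$ of $\cQ$ by the while loop, so all remaining elements of $\cQ$ already have rank $> \pi(v_i)$; it therefore suffices to verify that every subsequent insertion into $\cQ$ during $\ins(v_i,\pi(v_i))$ also has rank $> \pi(v_i)$. I would check this line-by-line. Lines~\ref{line:queue1Ins} and \ref{line:queue2Ins} push followers $w$ of some leader $u$ with $\pi(u) \ge \pi(v_i)$ (namely $u = v_i$ for Line~\ref{line:queue1Ins}, and $u \in S$ with $\pi(u) \ge \pi(u^*) > \pi(v_i)$ for Line~\ref{line:queue2Ins}); since the leader mapping $\ell$ is only ever assigned so that $\pi(\ell(w)) < \pi(w)$, this gives $\pi(w) > \pi(u) \ge \pi(v_i)$. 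Line~\ref{line:delOverflow} pushes $\argmax_{u'\in\ALG}\pi(u')$; we reach this branch only when the guard of Line~\ref{line:firstIfIns} fails while $|\ALG|=k+1$, which forces $\pi(v_i) \le \max_{u'\in\ALG}\pi(u')$, so the overflow vertex is not $v_i$ and has rank strictly above $\pi(v_i)$. Crucially, the ``then'' branch of Line~\ref{line:firstIfIns} (which would insert $v_i$ itself) cannot fire when $i\ge 2$, because the pop condition in Algorithm~\ref{alg:main} implies either $|\ALG|\le k$ or $\pi(v_i) < \max_{w\in\ALG}\pi(w)$, and either of these negates the guard. Thus after $\ins(v_i)$ returns, every element of $\cQ$ has rank $> \pi(v_i)$, so the next pop $v_{i+1}$ satisfies $\pi(v_{i+1}) > \pi(v_i)$.

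For the base case $i = 1$, I split on the type of update. If it is a deletion of some $v$, then $v_1$ is popped by the while loop after $\delete(v)$ runs; the followers pushed into $\cQ$ in Lines~\ref{line:queue1Del} and \ref{line:queue2Del} of Algorithm~\ref{alg:del} have rank $>\pi(v)$ (by the same $\pi(\ell(w))<\pi(w)$ invariant), and the analysis then reduces to the inductive step. If it is an insertion of $v_1$, I carve out the Line~\ref{line:firstIfIns} branch: if its guard fires, then $|\ALG|=k+1$, $\pi(v_1) > \max_{u\in\ALG}\pi(u)$, and $\ALG$ is unchanged; combined with the end-of-time-step invariant, adding $v_1$ to $\cQ$ preserves $\min_{w\in\cQ}\pi(w)\ge \max_{u\in\ALG}\pi(u)$, so the while-loop guard on Line~\ref{line:whileMain} fails and no further $\ins$ calls occur. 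Otherwise the guard of Line~\ref{line:firstIfIns} fails, so either $|\ALG| \le k$ (forcing $\cQ=\emptyset$ at the start of the step by the end-of-time-step invariant, so every post-call element of $\cQ$ was added during $\ins(v_1)$ and has rank $> \pi(v_1)$ by the case analysis), or $|\ALG|=k+1$ with $\pi(v_1) < \max_{u\in\ALG}\pi(u)$; in the latter sub-case, pre-existing $\cQ$ elements satisfy $\pi \ge \max_{u\in\ALG}\pi(u) > \pi(v_1)$, and newly added elements satisfy the same bound. In both sub-cases, $v_2$ (if any) obeys $\pi(v_2) > \pi(v_1)$.

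The corollary is immediate: strict monotonicity of $\pi(v_1),\dots,\pi(v_r)$ combined with injectivity of $\pi$ on $V$ forbids any vertex from appearing more than once among the $v_i$. The main obstacle is the case bookkeeping in the base step, particularly the Line~\ref{line:firstIfIns} carve-out and the need to invoke the end-of-time-step invariant to bound the ranks of vertices in $\cQ$ that were carried over from previous updates; once these are handled, the remaining work is a routine line-by-line check that every write to $\cQ$ inserts a higher-ranked vertex.
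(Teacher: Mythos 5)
Your proposal is correct and follows essentially the same route as the paper: the paper's proof also establishes exactly your two invariants — that each $\ins(v_i,\pi(v_i))$ call is made with $\pi(v_i)$ below the rank of everything in $\cQ$, and that every insertion into $\cQ$ triggered by that call (via Lines~\ref{line:queue1Ins}, \ref{line:queue2Ins}, or \ref{line:delOverflow}, using $\pi(\ell(w))<\pi(w)$) has strictly larger rank — together with the same carve-out for the Line~\ref{line:firstIfIns} branch on the very first call. Your write-up is, if anything, slightly more explicit about the base-case bookkeeping, but the argument is the same.
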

\begin{proof}
Assume $r>1$, since otherwise the claim is trivial. 
To prove the proposition it will suffice to show two facts: $(1)$ whenever a call to $\ins(v_i,\pi(v_i))$ is made, $\pi(v_i)$ is smaller than the rank of all vertices in the queue $\cQ$, and $(2)$ a call to $\ins(v_i,\pi(v_i))$ can only result in vertices with larger rank being added to $\cQ$. 

To prove $(1)$, note that after the first call to $\ins(v_1,\pi(v_1))$, which may have been triggered directly as a result of $v_1$ being added to the stream at that time step, all subsequent calls to $\ins$ can only be made via the while loop of Line \ref{line:whileMain} in Algorithm \ref{alg:main}, where the point with smallest rank is iteratively removed from $\cQ$ and inserted. Thus, fact $(1)$ trivially holds for all calls to $\ins$ made in this while loop, and it suffices to prove it for  $\ins(v_1,\pi(v_1))$ in the case that $v_1$ is added to the stream at the current update (if $v_1$ was added from the queue, the result is again clear). Now if $\cQ \neq \emptyset$ at the moment $\ins(v_1,\pi(v_1))$ is called, it must be the case that $|\ALG| = k+1$ and $\min_{w \in \cQ} \pi(w) > \max_{w \in \ALG} \pi(w)$ (otherwise the queue would have been emptied at the end of the prior update). Thus, if it were in fact the case that $\pi(v_1) >\min_{w \in \cQ} \pi(w)$, then we also have $\pi(v_1) > \max_{w \in \ALG} \pi(w)$, and therefore the call to $\ins(v_1,\pi(v_1))$ would result in inserting $v_1$ into $\cQ$ on Line \ref{line:firstIfIns} of Algorithm \ref{alg:ins}. Such an update does not modify $\ALG$, and does not change the fact that $\min_{w \in \cQ} \pi(w) > \max_{w \in \ALG} \pi(w)$, thus the processing of the update will terminate after the call to $\ins(v_1,\pi(v_1))$ (contradicting the assumption that $r>1$), which completes the proof of $(1)$.

To prove $(2)$, note that there are only three ways for a call to $\ins(v_i,\pi(v_i))$ to result in a vertex $u$ being added to $\cQ$. In the first case, if $\ell(u)$ was an active leader which was made a follower of $v_i$ as a result of $\ins(v_i,\pi(v_i))$, then we have $\pi(u) < \pi(\ell(u)) < \pi(v_i)$. Next, we could have had $\ell(u) = v_i$ (in the event that $v_i$ was an inactive leader being reinserted from $\cQ$), in which case $\pi(u) < \pi(v_i)$. Finally, it could be the case that $u$ was the active leader in $\ALG$ prior to the call to $\ins(v_i,\pi(v_i))$, and was then removed from $\ALG$ as a result of the size of $\ALG$ exceeding $k+1$ and $u$ having the largest rank in $\ALG$. This can only occur if $v_i$ was added to $\ALG$ and had smaller rank than $u$, which completes the proof of $(2)$.

Since by $(1)$ every time $\ins(v_i,\pi(v_i))$ is called $\pi(v_i)$ is smaller than the rank of all vertices in the queue, and by $(2)$ the rank of all new vertices added to the queue as a result of $\ins(v_i,\pi(v_i))$ will continue to be larger than $\pi(v_i)$, it follows that $v_{i+1}$, which by construction must be the vertex with smallest rank in $\cQ$ after the call to $\ins(v_i,\pi(v_i))$, must have strictly larger rank than $v_i$, which completes the proof of the proposition. 
\end{proof}

The following proposition is more or less immediate. It implies, in particular, that a point can only be added to $\cQ$ once per time step (similarly, $v$ can be removed from $\cQ$ once per time step). 

\begin{proposition}\label{prop:afterQ}
Whenever a vertex $v$ in the queue $\cQ$ is removed and $\ins(v,\pi(v))$ is called, the vertex $v$ either becomes a follower of an active leader, or an active leader itself. 
\end{proposition}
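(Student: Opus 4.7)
The plan is to verify the proposition by a direct case analysis of the code paths in Algorithm \ref{alg:main} and Algorithm \ref{alg:ins}. The only place in Algorithm \ref{alg:main} where a vertex is extracted from $\cQ$ and passed to $\ins$ is inside the while loop on Line \ref{line:whileMain}. So whenever $\ins(v,\pi(v))$ is called on a vertex $v$ just removed from $\cQ$, the guard condition of that loop must have held, namely $|\cQ| \neq \emptyset$ together with either $|\ALG| \leq k$ or $\min_{w \in \cQ} \pi(w) < \max_{w \in \ALG} \pi(w)$. Since $v$ was selected as the $\arg\min$ of $\pi$ over $\cQ$, this immediately gives either $|\ALG| \leq k$ or $\pi(v) < \max_{w \in \ALG} \pi(w)$.

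The next step is to rule out the branch of Algorithm \ref{alg:ins} on Line \ref{line:firstIfIns} which reinserts $v$ into $\cQ$. That branch triggers only when $|\ALG| = k+1$ and $\pi(v) > \max_{u \in \ALG} \pi(u)$, and I would observe that both conjuncts of the previous paragraph's disjunction contradict this: if $|\ALG| \leq k$ the first conjunct fails, while if $\pi(v) < \max_{w \in \ALG} \pi(w)$ the second conjunct fails. (We may assume $\pi$ takes distinct values, since the representation of $\pi$ uses $O(\log n)$ bits so collisions happen only with negligible probability, as stated before Algorithm \ref{alg:main}.) Hence execution enters the else branch of Algorithm \ref{alg:ins}.

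The else branch computes $S = \ALG \cap N(v)$, and I would split on the two sub-cases. If $S = \emptyset$, then $v$ is explicitly added to $\ALG$, so it becomes an active leader (Line \ref{line:delOverflow} may then evict some \emph{other} vertex of largest rank to $\cQ$, but does not remove $v$ itself, since by the previous paragraph $\pi(v)$ is strictly smaller than $\max_{w\in\ALG}\pi(w)$ whenever $|\ALG|\ge k+1$ at entry). If $S \neq \emptyset$, let $u^* = \arg\min_{u'\in S}\pi(u')$. If $\pi(u^*) < \pi(v)$ the algorithm sets $\ell(v) = u^*$ and adds $v$ to $\cF_{u^*}$, so $v$ becomes a follower of the active leader $u^*$. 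Otherwise $\pi(u^*) > \pi(v)$, in which case every $u \in S$ is removed from $\ALG$ and assigned $\ell(u) = v$, making $v$ itself the unique active leader of those vertices (and, implicitly, a member of $\ALG$).

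I do not anticipate any real obstacle here: the proposition is essentially a mechanical check of the control flow of the two algorithms. The only subtlety worth flagging is the implicit insertion of $v$ into $\ALG$ in the final sub-case and the ruling out of the self-reinsertion branch in $\ins$, both of which follow from the guard extracted from Line \ref{line:whileMain} as above.
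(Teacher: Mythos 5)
Your proof is correct and takes essentially the same route as the paper's (a direct case analysis of the control flow of Algorithms \ref{alg:main} and \ref{alg:ins}); the paper's own proof is a two-sentence version of your final paragraph. You are in fact more careful than the paper: you explicitly use the while-loop guard on Line \ref{line:whileMain} to rule out the re-insertion branch on Line \ref{line:firstIfIns} and to check that $v$ is not itself the vertex evicted on Line \ref{line:delOverflow}, two edge cases the paper's proof silently glosses over.
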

\begin{proof}
If $v$ shares an edge with a vertex in $\ALG$ with smaller rank, it becomes a follower of such a vertex. Otherwise, all vertices in $N(v) \cap \ALG$ become followers of $v$, and $v$ becomes an active leader by construction (possibly resulting in an active leader of larger rank to be removed from $\ALG$ as a result of it no longer being contained in $\LFMIS_{k+1}$).
\end{proof}

Equipped with the prior structural propositions, our approach for bounding the amortized update time is to first observe that, on any time step $t$, our algorithm only attempts to insert a vertex~$v$, thereby spending $O(k)$ time to search for edges between $v$ and all members of $\ALG$, if either $v$ was the actual vertex added to the stream on step $t$, or when $v$ was added to $\cQ$ on step $t$ or before. Thus, it will suffice to bound the total number of vertices which are ever added into $\cQ$ --- by paying a cost $O(k + \log n)$ for each vertex $v$ which is added to the queue, we can afford both the initial $O(\log n)$ cost of adding it to the priority queue, as well as the $O(k)$ runtime cost of possibly later reinserting $v$ during the while loop in Line \ref{line:whileMain} of Algorithm \ref{alg:main}. We formalize this in the following proposition.

\begin{proposition}\label{prop:T}
Let $T$ be the total number of times that a vertex is inserted into the queue $\cQ$ over the entire execution of the algorithm, where two insertions of the same vertex $v$ on separate time steps are counted as distinct insertions.  Then the total runtime of the algorithm, over a sequence of $M$ insertions and deletions, is at most $O(T (k + \log n) + Mk)$, where $n$ is the maximum number of vertices active at any given time.
\end{proposition}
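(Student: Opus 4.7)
The plan is to amortize every operation through the quantity $T$, charging each unit of work either to one of the $M$ stream updates or to a corresponding insertion into $\cQ$. First I would bound the number of calls to $\ins$ across the entire execution by $M + T$. Each stream update triggers at most one direct call to $\ins$ from Algorithm~\ref{alg:main}, contributing $M$. Any additional $\ins$ call must come from the while loop of Line~\ref{line:whileMain}, where a vertex is popped from $\cQ$; by Proposition~\ref{prop:afterQ} the popped vertex settles as either a follower or an active leader and is not re-added to $\cQ$ during that iteration, so the total number of such pop-and-$\ins$ iterations across the whole stream is bounded by the total number of $\cQ$ removals, which is at most $T$.

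Next I would bound the work performed by a single call to $\ins(v,\pi(v))$. The initial conditional on Line~\ref{line:firstIfIns} costs $O(\log n)$ for the priority-queue insertion and does no further work. Otherwise, computing $S = \ALG \cap N(v)$ requires at most $|\ALG| \leq k+1$ adjacency-matrix queries, by Claim~\ref{claim:1}, and the selection of $u^*$ takes $O(k)$ further work. The potential overflow removal on Line~\ref{line:delOverflow} takes $O(\log n)$. The only other costs are sweeps through follower lists $\cF_v$ or $\cup_{u \in S}\cF_u$ (Lines~\ref{line:queue1Ins} and~\ref{line:queue2Ins}); however, each vertex $w$ touched in such a sweep is \emph{inserted into $\cQ$}, contributing one unit to $T$. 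I would charge the $O(\log n)$ cost of the priority-queue insertion, together with the $O(1)$ cost of moving $w$ and resetting $\ell(w)$, directly to that $\cQ$ insertion. Thus, excluding costs charged to $T$, each $\ins$ call performs $O(k + \log n)$ work, giving a total of $O((M+T)(k + \log n))$ for all $\ins$ calls.

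For the $\delete$ routine, I would store each vertex with a pointer into its follower list $\cF_{\ell(v)}$ (maintained as a doubly linked list) and into its position in $\cQ$ or $\ALG$, so that the first two cases of Algorithm~\ref{alg:del} run in $O(\log n)$ time (dominated by the $\cQ$ removal). In the final case where $v \in \ALG$, the iteration over $\cF_v$ (Lines~\ref{line:queue1Del} and~\ref{line:queue2Del}) is again charged entirely to $T$ via the corresponding $\cQ$ insertions: each $w \in \cF_v$ costs $O(\log n)$ for its queue insertion, charged to that unit of $T$. The per-update overhead of Algorithm~\ref{alg:main} itself—generating $\pi(v)$, checking the while-loop predicate once, and peeking at the head of $\cQ$—is $O(\log n)$ per stream update. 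Summing the contributions yields total runtime
\[
O\bigl(M(k + \log n) + T(k + \log n)\bigr) = O\bigl(Mk + T(k+\log n)\bigr),
\]
as claimed, where I absorb the $O(M \log n)$ term into $O(Mk)$ since $k \geq 1$.

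The main obstacle is not any analytical step but rather the careful bookkeeping required so that every ``follower-list sweep'' in $\ins$ or $\delete$ contributes at most $O(1)$ work per item beyond what is charged to the $\cQ$ insertion. I would handle this by specifying an explicit data-structure layout—storing each $\cF_u$ as a doubly linked list with each vertex holding a pointer to its own node, storing $\ALG$ as a balanced BST keyed by $\pi$ so that $\max$ and insertions take $O(\log k)$ time, and storing $\cQ$ as a standard binary heap—so that all non-sweeping operations inside $\ins$ and $\delete$ indeed run in $O(k + \log n)$ time as required.
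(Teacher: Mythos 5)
Your proposal follows essentially the same charging scheme as the paper: bound the number of $\ins$ calls by $M+T$, charge the $O(k)$ cost of computing $\ALG \cap N(v)$ either to the stream update or to the preceding insertion into $\cQ$, and charge all follower-list sweeps and priority-queue operations to the corresponding $\cQ$ insertions. The amortization is sound.

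The one flawed step is the last line: absorbing the $O(M\log n)$ term into $O(Mk)$ ``since $k\geq 1$'' is not valid, as nothing prevents $k = o(\log n)$. Fortunately the $M\log n$ term is an artifact of your accounting rather than a real cost: the $O(\log n)$ you attribute to a direct $\ins$ call on Line~\ref{line:firstIfIns} is itself an insertion into $\cQ$ and hence should be charged to $T$ (likewise for the overflow removal on Line~\ref{line:delOverflow}), and the remaining per-update overhead of Algorithm~\ref{alg:main} --- generating $\pi(v)$, peeking at the heap minimum, and reading $\max_{w\in\ALG}\pi(w)$ from the head of the sorted list --- is $O(1)$ per predicate evaluation, with the number of evaluations bounded by $O(M+T)$. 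With that reassignment the total is $O(Mk + T(k+\log n))$ directly, without needing the invalid absorption.
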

\begin{proof}
Note that the only actions taken by the algorithm consist of adding and removing vertices $v$ from $\cQ$ (modifying the value of $\ell(v)$ in the process, and possibly deleting $\cF_v$), and computing $\ALG \cap N(v)$ for some vertex $v$. The latter requires $O(k)$ time since we have $|\ALG| \leq k+1$ at all time steps. Given $O(\log n)$ time to insert or query from a priority queue with at most $n$ items, we have that $O(T \log n)$ upper bounds the cost of all insertions and deletions of points to $\cQ$. Moreover, all calls to compute $\ALG \cap N(v)$ for some vertex $v$ either occur when $v$ is the vertex added to the stream on that time step (of which there is at most one), or when $v$ is inserted after previously having been in $\cQ$. By paying each vertex $v$ a sum of $O(k)$ when it is added to $\cQ$, and paying $O(k)$ to each vertex when it is first added to the stream, it can afford the cost of later computing $\ALG \cap N(v)$  when it is removed. This results in a total cost of $O(T(k+\log n) + Mk)$, which completes the proof. 
\end{proof}

In what follows, we focus on bounding the quantity $T$. To accomplish this, observe that a vertex $v$ can be added to $\cQ$ on a given time step $t$ in one of three ways:
\begin{figure}[H]
    \centering
  
\begin{Frame}[Scenarios where $v$ is added to $\cQ$]
    \begin{enumerate}
\item The vertex $v$ was added in the stream on time step $t$. In this case, $v$ is added to $\cQ$ when the if statement on Line \ref{line:firstIfIns} of Algorithm \ref{alg:ins} executes. 
\item The vertex $v$ is added to $\cQ$ when it was previously led by $\ell_{t-1}(v) \in V$, and either $\ell_{t-1}(v)$ becomes a follower of another leader during time step $t$ (resulting in $v$ being added to $\cQ$), or $\ell_{t-1}(v)$ is deleted. This can occur in either Lines \ref{line:queue1Ins} or \ref{line:queue2Ins}  of Algorithm \ref{alg:ins} for the first case, or in Lines \ref{line:queue1Del} or \ref{line:queue2Del} of Algorithm \ref{alg:del} in the case of  $\ell_{t-1}(v)$  being deleted. 
\item The vertex $v$ was previously in $\LFMIS_{k+1}$, and subsequently left $\LFMIS_{k+1}$ because $|\LFMIS_{k+1}| = k+1$ and a new vertex $u$ was added to $\LFMIS_{k+1}$ with smaller rank. This occurs in Line \ref{line:delOverflow} of Algorithm \ref{alg:ins}. 
    \end{enumerate}
\end{Frame}

\end{figure}

Obviously, the first case can occur at most once per stream update, so we will focus on bounding the latter two types of additions to $\cQ$. For any step $t$, define $\cA_\pi^t$ to be the number of vertices that are added to $\cQ$ as a result of the second form of insertions above. Namely, $\cA_\pi^t = |\{v \in G^{t} : \ell_{t-1}(v) \in V, \text{ and } \ell_t(v) \neq \ell_{t-1}(v) \}|$. Next, define $\cB_\pi^t$ to be the number of leaders which were removed from $\ALG$ Line \ref{line:delOverflow} of Algorithm \ref{alg:ins} (i.e., insertions into $\cQ$ of the third kind above). Letting $T$ be as in Proposition \ref{prop:T}, we have $T \leq  M+  \sum_t \cA_\pi^t + \cB_\pi^t $.

To handle $\sum_t \cA_\pi^t$ and $\sum_t \cB_\pi^t$, we demonstrate that each quantity can be bounded by the total number of times that the \textit{eliminator} of a vertex changes. Recall from Section \ref{sec:kCenters} that, given a graph $G=(V,E)$, $v \in V$, and ranking $\pi:V \to [0,1]$, the eliminator of $v$, denoted $\elim_{G,\pi}(v)$, is defined as the vertex of smallest rank in the set $(N(v) \cup \{v\} ) \cap \LFMIS(G,\pi)$. Now define $\cC_\pi^t$ to be the number of vertices whose eliminator changes after time step $t$. Formally, for any two graphs $G,G'$ differing in at most once vertex, we define 
$\cC_\pi(G,G') = \{v \in V | \elim_{G,\pi}(v) \neq \elim_{G',\pi}(v)\}$, and set $\cC_\pi^t = |\cC_\pi(G^{t-1},G^{t})|$. 
We now demonstrate that $\sum_t \cC_\pi^t$ deterministically upper bounds both $\sum_t \cA_\pi^t$ and $\sum_t \cB_\pi^t$.

\begin{lemma}\label{lem:main}
Fix any ranking $\pi: V \to [0,1]$. Then we have $\sum_t \cA_\pi^t \leq 5 \sum_t \cC_\pi^t$, and moreover for any time step $t$ we have $\cB_\pi^t \leq \cC_\pi^t$.
\end{lemma}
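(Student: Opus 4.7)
I will prove the two parts of Lemma~\ref{lem:main} separately: first the per-step inequality $\cB^t_\pi \leq \cC^t_\pi$, then the aggregate bound $\sum_t \cA^t_\pi \leq 5\sum_t \cC^t_\pi$. Throughout, let $L^s := \LFMIS(G^s,\pi)$ and recall $\ALG^s = L^s_{k+1}$ by Lemma~\ref{lem:correctness}.

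For the per-step bound, my plan is to construct an injection $\phi$ from the $\cB^t_\pi$ executions of line~\ref{line:delOverflow} during step~$t$ into $\cC^t_\pi$. Each such execution is uniquely indexed by the vertex $v$ that is added to $\ALG$; by Proposition~\ref{prop:orderedPi} these $v$'s have strictly increasing rank and are therefore distinct. I would first observe that once such a $v$ is added, it stays in $\ALG$ for the remainder of step~$t$: any subsequent $\ins$-call has strictly larger rank and so cannot kick $v$ out via line~\ref{line:delOverflow} nor absorb it as a follower in the $\pi(u^*)>\pi(v)$ branch. Hence $v \in \ALG^t = L^t_{k+1} \subseteq L^t$. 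When $v \notin L^{t-1}$, then $\elim_{G^{t-1},\pi}(v)\neq v = \elim_{G^t,\pi}(v)$, so $v \in \cC^t_\pi$ and I set $\phi(v)=v$. Otherwise $v\in L^{t-1}$ must have been an inactive leader in $L^{t-1}\setminus L^{t-1}_{k+1}$, so its rank-position in $L$ strictly decreased from $>k+1$ to $\leq k+1$; a counting argument comparing $|L^{t-1}\cap\{w:\pi(w)<\pi(v)\}|$ with $|L^t\cap\{w:\pi(w)<\pi(v)\}|$ shows there must exist a ``displaced'' vertex $w \in L^{t-1}\setminus L^t \subseteq \cC^t_\pi$ with $\pi(w)<\pi(v)$, and a greedy matching (assigning type-B $v$'s in increasing rank order to not-yet-used displaced $w$'s) produces distinct witnesses.

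For the aggregate bound, my plan is an amortized charging argument exploiting that $\ell(v)$ is a \emph{lagging} surrogate for $\elim_{G,\pi}(v)$: whenever $\ell(v)\neq\bot$, $\ell(v)$ is a neighbor of $v$ in the current $\LFMIS$, so $\pi(\elim_{G,\pi}(v))\leq\pi(\ell(v))$, and each call to $\ins(v,\pi(v))$ re-synchronizes $\ell(v)$ with the true eliminator. Each $\cA^t_\pi$-event is a reset of $\ell(v)$ from some $u$ to $\bot$, triggered by $u=\ell_{t-1}(v)$ being either deleted (sub-case A) or made a follower of another vertex (sub-cases B, C). In all sub-cases $u$ transitions out of $\LFMIS$, hence $u \in \cC^t_\pi$. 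I then split on whether $\elim_{G^{t-1},\pi}(v)=u$: in the synchronized case $v$ itself is in $\cC^t_\pi$ and I charge the reset to $(v,t)$; in the lagging case, $\elim(v)$ moved away from $u$ to some smaller-rank vertex at an earlier step $s<t$ without the algorithm updating $\ell(v)$, and I charge the reset to $(v,s)\in\cC^s_\pi$. The crucial observation that keeps the ratio constant is that between any two consecutive resets of $\ell(v)$ there must be an intervening $\ins(v,\pi(v))$ call that re-synchronizes $\ell(v)$ with the current eliminator, so any prior pair $(v,s)$ can be charged at most once. Combining the direct $(v,t)$ charges, the lagging $(v,s)$ charges, and a constant-factor allowance for the analogous accounting on the $u$-side (change in $\elim(u)$), the total charging ratio is at most~$5$.

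The main obstacle I anticipate is rigorously establishing the ``re-synchronization'' claim in the presence of the inactive-leader queue $\cQ$. During the processing of a single update the intermediate state of $\ALG$ need not equal $\LFMIS_{k+1}$ of either $G^{t-1}$ or $G^t$, so to verify that an $\ins(v,\pi(v))$ call truly aligns $\ell(v)$ with $\elim_{G,\pi}(v)$ requires a careful analysis of the sequence of internal $\ins$-calls (ordered by rank via Proposition~\ref{prop:orderedPi}) and of how vertices migrate through $\ALG$ and $\cQ$. Once this synchronization invariant is in hand, the rest of the argument is a routine amortized accounting.
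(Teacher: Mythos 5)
Your proposal is correct and follows essentially the same route as the paper's proof: for $\cB^t_\pi\le\cC^t_\pi$ you charge each overflow to the newly promoted vertex, which becomes its own eliminator at step $t$ (the paper shows your case~2, where the promoted vertex was already in $\LFMIS(G^{t-1},\pi)$ as an inactive leader, cannot occur, since the displaced vertex lies in $\LFMIS_{k+1}(G^{t-1},\pi)$ and has larger rank — so your displacement/matching machinery is handling a vacuous case); and for the aggregate bound your synchronization invariant and lagging-charge scheme are exactly the paper's Claims~\ref{claim:2} and~\ref{claim:3} followed by its interval-counting argument on the sequences $\ell_t(v)$ and $\elim_{G^t,\pi}(v)$. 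The one piece you flag as an obstacle — that every assignment $\ell(v)\leftarrow u\ne\bot$ sets $\ell(v)$ to the current eliminator despite the intermediate states of $\ALG$ and $\cQ$ — is precisely what the paper's Claim~\ref{claim:2} establishes via Proposition~\ref{prop:orderedPi}, so the plan is sound.
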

\begin{proof}
We first prove the second statement. Fix any time step $t$, and let $v_1,\dots,v_r$ be the $r = |\cB_\pi^t|$ vertices which were removed from $\ALG$, ordered by the order in which they were removed from $\ALG$. For this to occur, we must have inserted at least $r$ vertices $u_1,\dots,u_r$ into $\ALG$ which were not previously in $\ALG$ on the prior step; in fact, Line \ref{line:delOverflow} of Algorithm \ref{alg:ins} induces a unique mapping from each $v_i$ to the vertex $u_i$ which forced it out of $\ALG$ during a call to $\ins(u_i,\pi(u_i))$. Note that, under this association, we have $\pi(u_i) < \pi(v_i)$ for each $i$. We claim that the eliminator of each such $u_i$ changed on time step $t$.

Now note that $\{u_1,\dots,u_r\}$ and $\{v_1,\dots,v_r\}$ are disjoint, since $u_i$ was inserted before $u_{i+1}$ during time step $t$ by the definition of the ordering, and so $\pi(u_1) < \pi(u_2) < \dots < \pi(u_r)$ by Proposition \ref{prop:orderedPi}, so no $u_i$ could be later kicked out of $\ALG$ by some $u_j$ with $j > i$. It follows that none of $u_1,\dots,u_r$ were contained in $\LFMIS_{k+1}(G^{t-1},\pi)$, but they are all in $\LFMIS_{k+1}(G^{t},\pi)$. Now note that it could not have been the case that $u_i \in \LFMIS(G^{t-1},\pi)$, since we had $v_i \in \LFMIS_{k+1}(G^{t-1},\pi)$ but $\pi(u_i) < \pi(v_i)$. Thus $u_i \notin \LFMIS(G^{t-1},\pi)$, and therefore the eliminator of $u_i$ changed on step $t$ from $\elim_{G^{t-1},\pi}(u_i) \neq u_i$ to $\elim_{G^t,\pi}(u_i) = u_i$, which completes the proof of the second statement.

We now prove the first claim that $\sum_t \cA_\pi^t \leq \sum_t \cC_\pi^t$.  Because a vertex can be inserted into $\cQ$ at most once per time step (due to Proposition \ref{prop:afterQ}), each insertion $\cQ$ which contributes to $\sum_t \cA_\pi^t$ can be described as a  vertex-time step pair $(v,t)$, where we have $\ell_{t}(v) \neq \ell_{t-1}(v) \in V$ because either $\ell_{t-1}(v)$ became a follower of a vertex in $\LFMIS_{k+1}(G^t,\pi)$, or because $\ell_{t-1}(v)$ was deleted on time step $t$. 
We will now need two technical claims.

\begin{claim}\label{claim:2}
Consider any vertex-time step pair $(v,t)$ where $\ell_t(v) \in V$ and $\ell_{t-1}(v) \neq \ell_t(v)$. In other words, $v$ was made a follower of some vertex $\ell_t(v)$ during time step $t$. Then $\ell_t(v) = \elim_{G^t,\pi}(v)$.
\end{claim}
\begin{proof}
First note that the two statements of the claim are equivalent, since if $\ell(v)$ is set to $u \in V$ during time step $t$, then by Proposition \ref{prop:orderedPi} we have that $\ell(v)$ is not modified again during the processing of update $t$, so $u = \ell_{t}(v)$. Now the algorithm would only set $\ell_t(v) = u$ in one of two cases. In this first case, it occurs during a call to $\ins(v,\pi(v))$, in which case $\ell_t(v)$ is set to the vertex with smallest rank in $\LFMIS_{k+1}(G^t,\pi)\cap N(v)$, which by definition is $\elim_{G^t,\pi}(v)$. In the second case, $v$ was previously in $\LFMIS_{k+1}(G^{t-1},\pi)$, and $\ell(v)$ was changed to a vertex $w$ during a call to  $\ins(w,\pi(w))$, where $w \in N(v)$ and $\pi(w) < \pi(v)$. Since prior to this insertion $v$ was not a neighbor of any point in $\LFMIS_{k+1}(G^{t-1},\pi)$, and since by Proposition \ref{prop:orderedPi} the $\ins$ function will not be called again on time $t$ for a vertex with rank smaller than $w$, it follows that $w$ has the minimum rank of all neighbors of $v$ in $\LFMIS(G^{t},\pi)$, which completes the claim. 
\end{proof}

\begin{claim}\label{claim:3}
Consider any vertex-time step pair $(v,t)$ where $\ell_t(v) = \bot$ and $\ell_{t-1}(v) = \elim_{G^{t-1},\pi}(v) $. Then $\elim_{G^{t-1},\pi}(v) \neq \elim_{G^{t},\pi}(v) $.
\end{claim}
\begin{proof}
If $\ell_{t-1}(v) = \elim_{G^{t-1},\pi}(v)$, then $\elim_{G^{t-1},\pi}(v) \in v$  and $\ell_t(v)$ is changed to $\bot$ during time step $t$, then as in the prior claim, this can only occur if $\ell_{t-1}(v)$ is made a follower of another point in $\LFMIS_{k+1}(G^t,\pi)$, or if $\ell_{t-1}(v)$ is deleted on that time step. In both cases we have $\ell_{t-1}(v) \notin \LFMIS(G^t,\pi)$. Since $\elim_{G^{t},\pi}(v)$ is always in $\LFMIS(G^t,\pi)$, the claim follows.  
\end{proof}

Now fix any vertex $v$, and let $\sigma_1,\dots,\sigma_M$ be the sequence of eliminators of $v$, namely $\sigma_t = \elim_{G^t,\pi}(v)$ (note that $\sigma_t$ is either a vertex in $V$ or $\sigma_t = \emptyset$). Similarly define $\lambda_1,\dots,\lambda_M$ by $\lambda_t = \ell_t(v)$, and note that $\lambda_t \in V \cup \{\bot\}$. To summarize the prior two claims: each time $\lambda_{t-1} \neq \lambda_t$ and $\lambda_t \in V$, we have $\sigma_t = \lambda_t$; namely, the sequences become aligned at time step $t$. Moreover, whenever the two sequences are aligned at some time step $t$, namely $\sigma_t = \lambda_t$, and subsequently $\lambda_{t+1} = \bot$, we have that $\sigma_{t+1} \neq \sigma_t$. We now prove that every five subsequent changes in the value of $\lambda$ cause at least one unique change in $\sigma$.

To see this, let $t_1 < t_2 < t_3$ be three subsequent changes, so that $\lambda_{t_1} \neq \lambda_{t_1 - 1}$, $\lambda_{t_2} \neq \lambda_{t_2 - 1}$, $\lambda_{t_3} \neq \lambda_{t_3 - 1}$, and $\lambda_i$ does not change for all $i=t_1,\dots,t_2-1$ and $i= t_2,\dots,t_3-1$. First, if $\lambda_{t_1},\lambda_{t_2} \in V$, by Claim \ref{claim:2} we have $\sigma_{t_1} = \lambda_{t_1}$ and  $\sigma_{t_2} = \lambda_{t_2}$, and thus $\sigma_{t_1} \neq \sigma_{t_2}$, so $\sigma$ changes in the interval $[t_1,t_2]$. If $\lambda_{t_1} \in V$ and $\lambda_{t_2} = \bot$, we have $\sigma_{t_1} = \lambda_{t_1}$, and so if $\sigma$ does not change by time $t_2-1$ we have $\sigma_{t_2-1} = \lambda_{t_2-1}$, and thus $\sigma_{t_2} \neq \sigma_{t_2-1}$ by Claim \ref{claim:3}, so $\sigma$ changes in the interval $[t_1,t_2]$. Finally, if $\lambda_{t_1} = \bot$, then we must have $\lambda_{t_2} \in V$, and so $\lambda_{t_3} = \bot$. Then by the prior argument, $\sigma$ must change in the interval $[t_2,t_3]$. Thus, in each case, $\sigma$ must change in the interval $[t_1,t_3]$. To avoid double counting changes which occur on the boundary, letting $t_1,\dots,t_r$ be the sequence of all changes in $\lambda$, it follows that there is at least one change in $\sigma$ in each of the disjoint intervals $(t_{5i+1}, t_{5(i+1)})$ for $i=0,1,2,\dots,\lfloor r/5 \rfloor$. It follows that $\sum_t \cA_\pi^t \leq 5 \sum_t \cC_\pi^t$, which completes the proof. 
\end{proof}

The following theorem, due to \cite{behnezhad2019fully}, bounds the expected number of changes of eliminators which occur when a vertex is entirely removed or added to a graph.


\begin{theorem}[Theorem 3 of \cite{behnezhad2019fully}]\label{thm:elimBound}
Let $G = (V,E)$ be any graph on $n$ vertices, and let $G' = (V',E')$ be obtained by removing a single vertex from $V$ along with all incident edges. Let $\pi: V \to [0,1]$ be a random mapping. Let $\cC_\pi(G,G') = \{v \in V | \elim_{G,\pi}(v) \neq \elim_{G',\pi}(v)\}$. Then we have $\mathbb{E}_\pi \left[|\cC_\pi(G,G')|\right] = O(\log n)$.
\end{theorem}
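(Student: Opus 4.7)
The plan is to reduce to the case $u \in \LFMIS(G,\pi)$, since if $u \notin \LFMIS(G,\pi)$ then running the greedy LFMIS construction on $G$ and $G'$ under the same $\pi$ yields identical output---$u$ would have been eliminated by some smaller-rank neighbor in $G$ and is simply absent in $G'$, contributing nothing---so $\LFMIS(G,\pi) = \LFMIS(G',\pi)$ and no vertex's eliminator changes. Accordingly, I would write $\mathbb{E}_\pi[|\cC_\pi(G,G')|] = \mathbb{P}_\pi[u \in \LFMIS(G,\pi)] \cdot \mathbb{E}_\pi\bigl[|\cC_\pi(G,G')| \,\bigm|\, u \in \LFMIS(G,\pi)\bigr]$ and aim to bound this product by $O(\log n)$.

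Next, conditioning on $u \in \LFMIS(G,\pi)$, I would organize the changes as a \emph{cascade} rooted at $u$. Removing $u$ may allow some neighbors of $u$ to enter $\LFMIS(G',\pi)$; each such new entry may kick out previously-present LFMIS members whose smallest-rank LFMIS neighbor in $G'$ has now become that new entry; these displacements may in turn enable additional entries; and so on. Crucially, the greedy LFMIS construction respects rank order, so the cascade always progresses to strictly larger ranks. Every vertex in $\cC_\pi(G,G')$ either participates in this cascade (entering or leaving the LFMIS) or is a non-LFMIS vertex whose eliminator has been reassigned from one cascade participant to another, and in both cases it can be associated with a specific level of the cascade.

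The heart of the proof is a probabilistic bound on the expected cascade size, which I would derive by the principle of deferred decisions on $\pi$. Revealing ranks in increasing order starting from $u$'s, each newly-revealed rank corresponds to a uniformly random remaining vertex conditioned on the history so far; this sets up a harmonic-type recurrence, where at each level the chance of extending the cascade by one more step is controlled by the ratio of currently-eligible neighbors to the total number of unrevealed vertices. Combined with the factor $\mathbb{P}[u \in \LFMIS(G,\pi)]$, which naturally shrinks as $u$ accumulates neighbors (any smaller-rank LFMIS neighbor would eliminate $u$), the two effects should balance into a total $O(\log n)$ bound, analogous to the $O(\log n)$ expected depth of random binary search trees.

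The main obstacle I anticipate is formalizing the cascade precisely: it must be defined tightly enough that its size upper-bounds $|\cC_\pi(G,G')|$, yet loosely enough that a clean random-permutation analysis applies uniformly at every level. In particular, one must handle the two distinct sources of eliminator change---vertices newly entering or leaving the LFMIS, versus non-LFMIS vertices whose eliminator is merely re-assigned---within a single accounting scheme, and verify that the harmonic bound is not spoiled by multiplicative degree factors. This requires an honest coupling between the probability that $u$ lies in the LFMIS and the conditional expected size of the resulting cascade, so that degree factors appearing in each part cancel cleanly in the product.
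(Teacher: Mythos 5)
First, note that the paper does not actually prove this statement: Theorem \ref{thm:elimBound} is imported verbatim as Theorem 3 of \cite{behnezhad2019fully}, so there is no in-paper proof to compare against. Judged on its own terms, your proposal correctly identifies the high-level architecture of the cited proof: the reduction to the case where the removed vertex lies in $\LFMIS(G,\pi)$, the observation that all changes propagate only to larger ranks, the split of $\cC_\pi(G,G')$ into vertices whose LFMIS membership flips versus non-LFMIS vertices whose eliminator is merely reassigned, and the intuition that the $\log n$ should ultimately come from a harmonic sum over rank-ordered neighbors.

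As a proof, however, it has a genuine gap, and you have in effect named it yourself: the entire quantitative content of the theorem sits inside the step you describe as ``the two effects should balance into a total $O(\log n)$ bound.'' Nothing in the proposal establishes (i) that the expected number of membership flips is small --- in \cite{behnezhad2019fully} this is a separate lemma showing it is $O(1)$ in expectation, proved by a telescoping propagation argument, not by a BST analogy; (ii) that the eliminator reassignments of non-flipping vertices can be charged to flipping vertices at an expected cost of $O(\log n)$ each, via the harmonic sum $\sum_i 1/i$ over a vertex's neighbors sorted by rank; or (iii) that the degree factors genuinely cancel against $\Pr[u \in \LFMIS(G,\pi)]$. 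Point (iii) is also technically shaky as stated: $\Pr[u \in \LFMIS(G,\pi)]$ is not $1/(\deg(u)+1)$, since having minimum rank in one's closed neighborhood is sufficient but not necessary for LFMIS membership, and conditioning on $u \in \LFMIS(G,\pi)$ distorts the ``uniformly random remaining vertex'' claim underlying your deferred-decisions recurrence by correlating the ranks of $u$'s neighbors with the rest of $\pi$. The plan points in the right direction, but the argument that would turn it into a proof is exactly the part that is missing.
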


\begin{theorem} \label{thm:LFMISMain} There is a algorithm which, on a fully dynamic stream of insertions and deletions of vertices to a graph $G$, maintains at all time steps a top-$k$ LFMIS of $G$ with leaders (Definition \ref{def:LFMISLead}) under a random ranking $\pi: V \to [0,1]$. The expected amortized per-update time of the algorithm is $O(k \log n + \log^2 n)$, where $n$ is the maximum number active of vertices at any time. Moreover, the algorithm does not need to know $n$ in advance. 
\end{theorem}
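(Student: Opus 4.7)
The plan is to assemble the four pieces already developed in Section~\ref{sec:amortized}: the correctness statement (Lemma~\ref{lem:correctness}), the amortized accounting (Proposition~\ref{prop:T}), the combinatorial bound relating leader changes to eliminator changes (Lemma~\ref{lem:main}), and the eliminator-stability theorem imported from \cite{behnezhad2019fully} (Theorem~\ref{thm:elimBound}). Correctness is essentially free: Lemma~\ref{lem:correctness} establishes $\ALG = \LFMIS_{k+1}(G^t,\pi)$ at the end of every update, and the construction of the algorithm maintains the modified leader mapping $\ell$ throughout; converting to a top-$k$ LFMIS with leaders in the sense of Definition~\ref{def:LFMISLead} is immediate by reading off the first $\min\{k,|\ALG|\}$ entries of $\ALG$ together with $\ell$ on the remaining vertices whenever $|\ALG| \leq k$.

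For the runtime, the key chain of inequalities is $T \leq M + \sum_t \cA_\pi^t + \sum_t \cB_\pi^t \leq M + 6\sum_t \cC_\pi^t$, where the first inequality follows from the case analysis of the three scenarios in which a vertex is added to $\cQ$ (stream arrival, losing a leader, being kicked out of the top-$k$), and the second is exactly Lemma~\ref{lem:main}. Next I would invoke Proposition~\ref{prop:T} to convert an upper bound on $T$ into an upper bound on total runtime, giving $O(T(k+\log n) + Mk)$.

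To bound $\mathbb{E}[T]$ I would take expectation over the random ranking $\pi$. Because the algorithm runs against an oblivious adversary, the sequence of graphs $G^0,G^1,\dots,G^M$ is a deterministic function of the stream and is independent of $\pi$. For each $t$, the graphs $G^{t-1}$ and $G^t$ differ in a single vertex, so Theorem~\ref{thm:elimBound} (applied either to insertion or deletion; the two cases are symmetric) gives $\mathbb{E}_\pi[\cC_\pi^t] = O(\log n)$. Linearity of expectation then yields $\mathbb{E}[T] = O(M\log n)$, and substituting into Proposition~\ref{prop:T} produces an expected total runtime of $O(M\log n(k+\log n) + Mk) = O(M(k\log n + \log^2 n))$, which amortizes to the claimed $O(k\log n + \log^2 n)$ per update.

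It remains to argue that the algorithm does not need to know $n$ in advance. The only use of $n$ is in choosing the number of bits needed to represent $\pi(v)$. I would handle this by lazy random-bit generation: for each vertex I store $\pi(v)$ as the prefix of an infinite i.i.d.\ fair-coin sequence, revealing new bits only when a comparison is not yet decided. Two values $\pi(u),\pi(v)$ disagree within $O(\log n)$ bits in expectation, and since each comparison already costs at least $\Omega(\log n)$ time in the priority-queue operations, this overhead is subsumed. The main obstacle in the proof is not in this final assembly — it is the delicate accounting behind Lemma~\ref{lem:main}, which already appears in the excerpt, relating the per-step leader changes $\cA_\pi^t, \cB_\pi^t$ (which the algorithm pays for directly) to the much more structured and analyzable quantity $\cC_\pi^t$ (to which the clean bound of \cite{behnezhad2019fully} applies). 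Everything else is a routine combination.
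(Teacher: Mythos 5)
Your proposal is correct and follows essentially the same route as the paper's proof: combining Lemma~\ref{lem:correctness} for correctness with the chain $T \leq M + \sum_t \cA_\pi^t + \cB_\pi^t \leq M + 6\sum_t \cC_\pi^t$, Proposition~\ref{prop:T}, and Theorem~\ref{thm:elimBound} under the oblivious-adversary independence of $\pi$ from the graph sequence. Your additional remark on lazy generation of the bits of $\pi(v)$ is a reasonable way to discharge the ``does not need to know $n$'' clause, which the paper's proof leaves implicit.
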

\begin{proof}
By the above discussion, letting $T$ be as in Proposition \ref{prop:T}, we have $T \leq  M+  \sum_t \cA_\pi^t + \cB_\pi^t $. By the same proposition, the total update time of the algorithm over a sequence of $n$ updates is at most $O(T(k + \log n) + kM)$.  By Lemma \ref{lem:main}, we have $\sum_t \cA_\pi^t + \cB_\pi^t  \leq 6 \sum_t \cC^t_\pi$, and by Theorem \ref{thm:elimBound} we have $\mathbb{E}_\pi \left[\sum_t \cC^t_\pi\right] = O(M \log n)$. It follows that $\ex{T} = O(M \log n)$, therefore the total update time is $O(kM \log n + M \log^2 n)$, which completes the proof. 
\end{proof}

\section{Fully Dynamic $k$-Centers via Locally Sensitive Hashing}\label{sec:LSH}

In this section, we demonstrate how the algorithm for general metric spaces of Section \ref{sec:generalMetric} can be improved to run in \textit{sublinear} in $n$ amortized update time, even when $k = \Theta(n)$, if the metric in question admits good locally sensitive hash functions (introduced below in Section \ref{sec:LSHAlg}). Roughly speaking, a locally sensitive hash function is a mapping $h: \cX \to U$, for an universe $U$, which has the property that points which are close in the metric should collide, and points which are far should not. Thus, by when searching for points which are close to a given $x \in \cX$, one can first apply a LSH to quickly prune far points, and search only through the points in the hash bucket $h(x)$. We will use this approach to speed up the algorithm from Section \ref{sec:generalMetric}. 

\paragraph{Summary of the LSH-Based Algorithm. }
We now describe the approach of our algorithm for LSH spaces.
Specifically, first note that the factor of $k$ in the amortized update time in Proposition \ref{prop:T} comes from the time required to compute $S = \ALG \cap N(v)$ in Algorithm \ref{alg:ins}. However, to determine which of the three cases we are in for the execution of Algorithm \ref{alg:ins}, we need only be given the value $u^* = \arg \min_{u' \in S} \pi(u')$ of the vertex in $S$ with smallest rank, or $S = \emptyset$ if none exists. If $S = \emptyset$, then the remainder of Algorithm \ref{alg:ins} runs in constant time. If $\pi(u^*) < \pi(v)$, where $v$ is the query point, then the remaining run-time is constant unless $v$ was a leader, in which case it is proportional to the number of followers of $v$, each of which are inserted into $\cQ$ at that time. Lastly, if $\pi(u^*) > \pi(v)$, then we search through each $u \in S$, make it a follower of $v$, and add the followers of $u$ to $\cQ$. Since if $u$ was previously in $\ALG$, it must have also been in $\LFMIS_{k+1}$ on the prior time step, it follows that each such $u \in S$ changes its eliminator on this step.

In summary, after the computation of $S = \ALG \cap N(v)$, the remaining runtime is bounded by the sum of the number of points added to $\cQ$, and the number of points that change their eliminator on that step. Since, ultimately, the approach in Section \ref{sec:amortized} was to bound $T$ by the total number of times a point's eliminator changes, our goal will be to obtain a more efficient data structure for returning $S = \ALG \cap N(v)$. Specifically, if after a small upfront runtime $R$, such a data structure can read off the entries $S$ in the order of the rank, each in constant time, one could therefore replace the factors of $k$ in Proposition \ref{prop:T} by $R$. We begin by formalizing the guarantee that such a data structure should have. 

We will demonstrate that approximate nearest neighbor search algorithms based on locally sensitive hashing can be modified to have the above properties. However, since such an algorithm will only be approximate, it will sometimes returns points in $S$ which are farther than distance $r$ from $v$, where $r$ is the threshold. Thus, the resulting graph defined by the locally sensitive hashing procedure will now be an \textit{approximate threshold graph:}

\begin{definition}\label{def:approxThreshGraph}
Fix a point set $P$ from a metric space $(\cX,d)$, and real values $r>0$ and $c\geq 1$. A $(r,c,L)$-approximate threshold graph $G_{r,c} = (V(G_{r,c}),E(G_{r,c}))$ for $P$ is any graph with $V(G_{r,c}) = P$, and whose whose edges satisfy $E(G_r) \subseteq E(G_{r,c})$ and $|E(G_{r,c}) \setminus E(G_{cr})| \leq L$, where $E(G_r), E(G_{cr})$ are the edge set of the threshold graphs $G_r,G_{cr}$ respectively. 

\end{definition}

If $L = 0$, it is straightforward to see that an algorithm for solving the top-$k$ LFMIS problem on a $(r,c,0)$-approximate threshold graph $G_{r,c}$ can be used to obtain a $c(2+\eps)$ approximation to $k$-centers. When $L>0$, an algorithm can first check, for each edge $e \in E(G_{r,c})$ it considers, whether $e \in G_{rc}$, and discard it if it is not the case. We will see that the runtime of handling a $(r,c,L)$-approximate threshold graph will depend linearly on $L$. Moreover, we will set parameters so that $L$ is a constant in expectation. 


\subsection{Locally Sensitive Hashing and the LSH Algorithm}\label{sec:LSHAlg}

We begin by introducing the standard definition of a locally sensitive hash family for a metric space \cite{indyk1998approximate}.

\begin{definition}[Locally sensitive hashing \cite{indyk1998approximate,har2012approximate}]\label{def:LSH}
Let $\cX$ be a metric space, let $U$ be a range space, and let $r \geq 0$ $c \geq 1$ and $0 \leq p_2 \leq p_1  \leq 1$ be reals.
A family $\cH = \{h : \cX \to U\}$ is called $(r,cr,p_1,p_2)$-sensitive if for any $q,p \in \cX$:

\begin{itemize}
    \item If $d(p,q) \leq r$, then $\prb{\cH}{h(q) = h(p)} \geq p_1$. 
    \item If $d(p,q) > cr$, then $\prb{\cH}{h(q) = h(p)} \leq p_2$. 
\end{itemize}
\end{definition}

Given a $(r,cr,p_1,p_2)$-sensitive family $\cH$, we can define $\cH^t$ to be the set of all functions $h^t: \cX \to U^t$ defined by $h^t(x) = (h_1(x),h_2(x),\dots,h_t(x))$, where  $h_1,\dots,h_k \in \cH$. In other words, a random function from $\cH^t$ is obtained by drawing $t$ independent hash functions from $\cH$ and concatenating the results. It is easy to see that the resulting hash family $\cH^t$ is $(r,cr,p_1^t,p_2^t)$-sensitive. 
We now demonstrate how a $(r,c)$-approximate threshold graph can be defined via a locally sensitive hash function. 

\begin{definition}\label{def:inducedGraph}
Fix a metric space $(\cX,d)$ and a finite point set $P \subset \cX$, as well as integers $t,s \geq 1$. Let $\cH: \cX \to U$ be a $(r,cr,p_1,p_2)$-sensitive family. Then a graph $G_{r,cr}(P,\cH,t,s) = ( V,E)$ induced by $\cH$ is a random graph which is generated via the following procedure. First, one randomly selects hash functions $h_1,h_2,\dots,h_s \sim \cH^t$. Then the vertex set is given by $V=P$, and then edges are defined via $(x,y) \in E$ if and only if $h_i(x) = h_i(y)$ for some $i \in [s]$. 
\end{definition}

We now demonstrate that, if $\cH$ is a sufficiently sensitive hash family, the random graph $G_{r,cr}(P,\cH,t,s)$ constitutes a $(r,c,L)$-approximate threshold graph with good probability, where $L$ is a constant in expectation. 

\begin{proposition}\label{prop:isAThreshold}
Fix a metric space $(\cX,d)$ and a point set $P \subset \cX$ of size $|P|= n$, and let $\cH: \cX \to U$ be a $(r,cr,p_1,p_2)$-sensitive family. Fix any $\delta \in (0,\frac{1}{2})$. Set $s = \ln(n^2/\delta) n^{2\rho}/p_1 $, where $\rho = \ln \frac{1}{p_1}/\ln \frac{1}{p_2}$, and  $t = \lceil 2 \log_{1/p_2} n  \rceil$. Then, with probability at least $1-\delta$, the the random graph $G_{r,cr}(P,\cH,t,s)$ is a $(r,c,L)$-approximate threshold graph, where $L$ is a random variable satisfying $\ex{L} < 2$ (Definition \ref{def:approxThreshGraph}). 
\end{proposition}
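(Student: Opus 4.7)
The plan is to verify separately the two conditions of Definition~\ref{def:approxThreshGraph}: (a) with probability at least $1-\delta$ every pair $(x,y) \in P$ with $d(x,y) \le r$ appears as an edge of the random graph $G_{r,cr}(P,\cH,t,s)$; and (b) the random variable $L := |E(G_{r,cr}(P,\cH,t,s)) \setminus E(G_{cr})|$, counting edges between pairs at distance $> cr$, satisfies $\E[L] < 2$.

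For condition (a), I would fix an arbitrary pair $(x,y)$ with $d(x,y)\le r$. By Definition~\ref{def:LSH} and independence of the $t$ coordinates in a single draw from $\cH^t$, the collision probability under one hash function is at least $p_1^t$. Using the algebraic identity $p_1 = p_2^\rho$, which follows directly from $\rho = \ln(1/p_1)/\ln(1/p_2)$, together with the bound $t \le 2\log_{1/p_2}(n)+1$, a short computation yields $p_1^t \ge p_1/n^{2\rho}$. Thus the probability that $(x,y)$ collides in \emph{none} of the $s$ independent draws $h_1^t,\dots,h_s^t \sim \cH^t$ is at most $(1 - p_1/n^{2\rho})^s \le \exp(-s p_1/n^{2\rho}) \le \delta/n^2$, by the choice $s = \ln(n^2/\delta)\,n^{2\rho}/p_1$. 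A union bound over at most $\binom{n}{2} \le n^2/2$ close pairs bounds the failure probability of (a) by $\delta$.

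For condition (b), I would fix an arbitrary pair $(x,y)$ with $d(x,y) > cr$. By Definition~\ref{def:LSH}, the single-function collision probability is at most $p_2^t$, and since $t \ge 2\log_{1/p_2}(n)$, we have $p_2^t \le 1/n^2$. A union bound over the $s$ independent hash functions shows that $(x,y)$ is an edge with probability at most $s p_2^t$, and then linearity of expectation over all at most $\binom{n}{2}$ far pairs yields $\E[L] \le \binom{n}{2}\, s\, p_2^t$. Plugging in the chosen values and simplifying delivers the desired $\E[L] < 2$.

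The main obstacle is the interplay between the parameters: $t$ must be large enough to drive $p_2^t$ below $1/n^2$ for soundness, yet small enough that $p_1^t$ does not collapse faster than the chosen $s$ can compensate in the inclusion bound. The proposition's choice $t = \lceil 2\log_{1/p_2}(n) \rceil$ and $s \asymp n^{2\rho}\log(n/\delta)/p_1$ is calibrated exactly so that the factor $p_1^t \gtrsim p_1/n^{2\rho}$ cancels the $n^{2\rho}/p_1$ in $s$ to a single $\log$-factor, while the bound $p_2^t \le 1/n^2$ absorbs the $\binom{n}{2}$ far-pair union bound. The identity $p_1 = p_2^\rho$ is the algebraic bridge that makes both estimates fit together, and verifying the final numeric constant in (b) is the only delicate arithmetic in the argument; combining (a) and (b) then yields the proposition.
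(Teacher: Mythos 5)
Your treatment of condition (a) is exactly the paper's argument: the identity $p_1 = p_2^{\rho}$ gives $p_1^t \geq p_1 n^{-2\rho}$, the choice of $s$ makes the per-pair failure probability $(1-p_1n^{-2\rho})^s \leq \delta/n^2$, and a union bound over close pairs finishes. No issues there.

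For condition (b), however, the step you defer as ``the only delicate arithmetic'' does not actually close. Your (correct) union bound gives $\E[L] \leq \binom{n}{2}\, s\, p_2^t$, and with $t = \lceil 2\log_{1/p_2} n\rceil$ we only have $p_2^t \leq n^{-2}$, so the bound evaluates to roughly $s/2 = \Theta\bigl(n^{2\rho}\ln(n^2/\delta)/p_1\bigr)$, which is far from the constant $2$ claimed in the proposition. To get $\E[L] = O(1)$ from this route you would need $t$ on the order of $\log_{1/p_2}(s\,n^2)$ rather than $2\log_{1/p_2} n$. It is worth noting that the paper's own proof of this step bounds $\E[L]$ by $\sum_{(x,y)} p_2^t < 1$, i.e., it counts collisions under a \emph{single} draw $h \sim \cH^t$ and silently drops the factor of $s$ coming from the ``$h_i(x)=h_i(y)$ for some $i \in [s]$'' definition of the edge set --- so your more careful accounting has in fact surfaced a real discrepancy in the stated parameters rather than a flaw unique to your write-up. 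One further minor point: the proposition asserts $\E[L] < 2$ (not $<1$) because the expectation is taken after conditioning on the success event of part (a), which can inflate it by at most a factor of $1/(1-\delta) \leq 2$; your proposal aims for $<2$ directly and never addresses this conditioning, which you should make explicit in a full proof.
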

\begin{proof}

First, fix any $x,y \in P$ such that $d(x,y) > c r$. We have that $\prb{h \sim \cH^k}{h(x) = h(y)} \leq p_2^t < \frac{1}{ n^2}$. It follows that 
\[\exx{h_1,\dots,h_s \sim \cH^t}{|E(G_{r,cr}(P,\cH,t,s) \setminus E(G_{r })|} \leq \sum_{(x,y) \in P^2} \frac{1}{n^2} < 1\]
Namely, we have $\ex{L} < 1$ where $L = |E(G_{r,cr}(P,\cH,t,s) \setminus E(G_{r })|$.
Next, fix any $(x,y) \in E(G_r)$. We have 
\[\prb{h \sim \cH^k}{h(x) = h(y)} \geq p_1^t > p_1^{2 \log_{1/p_2} n  + 1}= p_1 (n^2)^{-\rho}\] Thus, the probability that at least one $h_i$ satisfies $h_i(x) = h_i(y)$ is at least
\[1-(1- p_1 n^{-2\rho})^s >1-(1/e)^{\ln(n^2/\delta)} =  1- \delta/n^2\] 
After a union bound over all such  possible pairs, it follows that $(x,y) \in E(G_{r,cr}(P,\cH,t,s)) $ for all $(x,y) \in E(G_r)$ with probability at least $1-\delta$. Note that since $\delta < 1/2$ and $L$ is a non-negative random variable, it follows that conditioning on the prior event can increase the expectation of $L$ by at most a factor of $2$, which completes the proof. 
\end{proof}

We will now describe a data structure which allows us to maintain a subset $\cL$ of vertices of the point set $P$, and quickly answer queries for neighboring edges of a vertex $v$ in the graph $G$ defined by the intersection of $G_{r,cr}(P,\cH,t,s)$ and $G_{cr}$. Note that if $G_{r,cr}(P,\cH,t,s)$ is a $(r,c,L)$-approximate threshold graph, then this intersection graph $G$ satisfies $G_r \subseteq G \subseteq G_{cr}$, and is therefore a  $(r,c,0)$-approximate threshold graph. It is precisely this graph $G$ which we will run our algorithm for top-$k$ LFMIS on. However, in addition to finding all neighbors of $v$ in $\cL$, we will also need to quickly return the neighbor with smallest rank $\pi$, where $\pi:V \to [0,1]$ is a random ranking as in Section \ref{sec:generalMetric}. Roughly, the data structure will hash all points in $\cL$ into the hash buckets given by $h_1,\dots,h_s$, and maintain each hash bucket via a binary search tree of depth $O(\log n)$, where the ordering is based on the ranking $\pi$. 

For the following Lemma and Theorem, we fix a metric space $(\cX,d)$ and a point set $P \subset \cX$ of size $|P|= n$, as well as a scale $r > 0$, and approximation factor $c$. Moreover, let $\cH: \cX \to U$ be a $(r,cr,p_1,p_2)$-sensitive hash family for the metric space $\cX$, and let $\Run(\cH)$ be the time required to evaluate a hash function $h \in \cH$. Furthermore, let $\pi: P \to [0,1]$ be any ranking over the points $P$, such that $\pi(x)$ is truncated to $O(\log n)$ bits, and such that $\pi(x) \neq \pi(y)$ for any distinct $x,y \in P$ (after truncation). Note that the latter holds with probability $1-1/\poly(n)$ if $\pi$ is chosen uniformly at random. Lastly, for a graph $G = (V(G),E(G))$, let $N_G(v)$ be the neighborhood of $v$ in $G$ (in order to avoid confusion when multiple graphs are present). 
\begin{lemma}\label{lem:binaryDataStructure}
Let $G_{r,cr}(P,\cH,t,s)$ be a draw of the random graph as in Definition \ref{def:inducedGraph}, where $r,c,P,\cH$ are as above, such that $G_{r,cr}(P,\cH,t,s)$ is a $(r,c,L)$-approximate threshold graph. Let $G = G_{r,cr}(P,\cH,t,s)\cap G_{cr}$. Then there is a fully dynamic data structure, which maintains a subset $\cL \subset V(G)$ and can perform the following operations:
\begin{itemize}
    \item $\ttx{Insert}(v)$: inserts a vertex $v$ into $\cL$ in time $O(s  \log n + ts \Run(\cH))$
    \item $\ttx{Delete}(v)$: deletes a vertex $v$ from $\cL$ in time $O(s \log n + ts \Run(\cH))$
    \item $\ttx{Query-Top}(v)$: returns $u^* = \arg \min_{u \in N_G \cap \cL} \pi(u)$, or \textsc{Empty} if $N_G(v) \cap \cL = \emptyset$, in time $O(sL \log n + ts \Run(\cH)) $
    \item $\ttx{Query-All}(v)$: returns the set $N_G(v) \cap \cL$, running in time $O(s(|N_G(v) \cap \cL| + L) \log n +  ts \Run(\cH))$. 
\end{itemize}
Moreover, given the hash functions $h_1,\dots,h_s \in \cH^t$ which define the graph $G_{r,cr}(P,\cH,t,s)$, the algorithm is deterministic, and therefore correct even against an adaptive adversary. 
\end{lemma}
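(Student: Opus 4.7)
The plan is to maintain, for each of the $s$ hash functions $h_1,\dots,h_s \in \cH^t$ defining $G_{r,cr}(P,\cH,t,s)$, a hash table $H_i$ whose keys are the hash values appearing in $\cL$ and whose values are balanced binary search trees (e.g.\ red-black trees) ordered by the ranking $\pi$. Concretely, for every $i \in [s]$ and every $u \in U^t$, the tree $H_i[u]$ contains exactly those $w \in \cL$ with $h_i(w) = u$, keyed by $\pi(w)$. Since $\pi(w)$ is truncated to $O(\log n)$ bits and is distinct across $P$, each BST has depth $O(\log n)$, so insertion, deletion, predecessor and successor queries all take $O(\log n)$ time. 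Note that once the hash functions $h_1,\dots,h_s$ are fixed, the entire data structure (including which elements live in which bucket) is a deterministic function of $\cL$ and $\pi$, giving the claimed robustness against adaptive adversaries.

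For $\ttx{Insert}(v)$ I would first compute $h_1(v),\dots,h_s(v)$ in time $O(ts\,\Run(\cH))$ and then, for each $i$, insert $v$ into the tree $H_i[h_i(v)]$ keyed by $\pi(v)$ in time $O(\log n)$. $\ttx{Delete}(v)$ is symmetric: compute the $s$ hash values and remove $v$ from each corresponding tree. This yields the advertised $O(s \log n + ts\,\Run(\cH))$ bounds.

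The core of the argument lies in $\ttx{Query-Top}(v)$ and $\ttx{Query-All}(v)$, where we must reconcile the fact that the BSTs expose $N_{G_{r,cr}}(v) \cap \cL$, while the caller needs $N_G(v) \cap \cL$ with $G = G_{r,cr} \cap G_{cr}$. For $\ttx{Query-Top}(v)$, after hashing $v$ in time $O(ts\,\Run(\cH))$, I would initialize a min-heap of size $s$ containing the minimum-rank element from each $H_i[h_i(v)]$, repeatedly extract its minimum $u$ (in time $O(\log s)$), evaluate $d(v,u)$, and return $u$ if $d(v,u) \le cr$; otherwise $u$ witnesses an edge of $G_{r,cr} \setminus G_{cr}$ incident to $v$, and I replace it in the heap by the $\pi$-successor of $u$ in $H_i[h_i(v)]$, found via the BST in $O(\log n)$. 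Each such rejection charges an edge of $G_{r,cr} \setminus G_{cr}$, so at most $L$ rejections occur in total; accounting also for the $s$ initial bucket minima, the cost is $O((s+L)\log n)$, which is comfortably $O(sL \log n)$. For $\ttx{Query-All}(v)$, I would traverse each tree $H_i[h_i(v)]$ from its minimum rank upward using successor pointers, compute $d(v,u)$ for each element encountered, and insert $u$ into an output hash set whenever $d(v,u) \le cr$. The elements visited across all $s$ trees are exactly the multiset of pairs $\{(i,u) : u \in \cL,\ h_i(u) = h_i(v)\}$, whose total size is at most $s\cdot |N_{G_{r,cr}}(v)\cap \cL| \le s(|N_G(v)\cap\cL| + L)$, yielding the stated bound.

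The only subtle point is ensuring each BST operation really costs $O(\log n)$: since we maintain $s$ separate trees and only the buckets containing at least one element are stored, the hash tables $H_i$ can be implemented as standard dictionaries keyed by the $O(\log n)$-bit hash values, with empty buckets absent. The main obstacle I expect is exactly this bookkeeping of false-positive collisions in the query operations; once the charging of each rejection to a distinct edge of $G_{r,cr} \setminus G_{cr}$ is made precise (and hence to the budget $L$), both complexity bounds follow immediately, and correctness is direct from the definitions of $G_{r,cr}$ and $G$.
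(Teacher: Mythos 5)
Your construction is essentially the paper's: bucket $\cL$ under each of the $s$ hash functions, keep each bucket as a rank-ordered $O(\log n)$-depth search tree, and answer queries by scanning buckets in increasing rank while discarding collisions at distance $>cr$, charging the discards to $L$; the only cosmetic difference is your min-heap merge for $\ttx{Query-Top}$ versus the paper's independent per-tree traversals. One small accounting slip: a single far pair $(v,u)$ can collide under up to $s$ of the hash functions, so the total number of rejections is bounded by $sL$ rather than $L$ as you claim, but this still gives $O(s(L+1)\log n)$ and hence the stated $O(sL\log n + ts\Run(\cH))$ bound.
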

\begin{proof}
For each $i \in [s]$ and hash bucket $b \in U$, we store a binary tree $T_{i,b}$ with depth at most $O(\log n)$, such that each node $z$ corresponds to an interval $[a,b] \subset [0,1]$, and the left and right children of $z$ correspond to the intervals $[a,(a+b)/2]$ and $[(a+b/2),b]$, respectively. Moreover, each node $z$ maintains a counter for the number of points in $\cL$ which are stored in its subtree. Given a vertex $v$ with rank $\pi(v)$, one can then insert $v$ into the unique leaf of $T_{i,b}$ corresponding to the $O(\log n)$ bit value $\pi(v)$ in time $O(\log n)$. Note that, since the keys $\pi(v)$ are unique, each leaf contains at most one vertex. Similarly, one can remove and search for a vertex from $T_{i,b}$ in time $O(\log n)$.

When processing any query for an input vertex $v \in P$, one first evaluates all $ts$ hash functions required to compute $h_1(v),\dots,h_s(v)$, which requires  $ts \Run(\cH)$ time. For insertions and deletions, one can insert $v$ from each of the $s$ resulting trees in time $O(\log n)$ per tree, which yields the bounds for $\ttx{Insert}(v)$ and $\ttx{Delete}(v)$. For $\ttx{Query-Top}(v)$, for each $i \in [s]$, one performs an in-order traversal of $T_{i,h_i(v)}$, ignoring nodes without any points stored in their subtree, and returns the first leaf corresponding to a vertex $u \in N_G(v)$, or \textsc{Empty} if all points in the tree are examined before finding such a neighbor. Each subsequent non-empty leaf in the traversal can be obtained in $O(\log n)$ time, and since by definition of a $(r,c,L)$-approximate threshold graph, $h_i(v) = h_i(u')$ for at most $u'$ vertices with $(v,u') \notin G_{cr}$, it follows that one must examine at most $L$ vertices $u'$ with $(v,u') \notin G_{cr}$ before one finds  $u^* = \arg \min_{u \in N_G \cap \cL} \pi(u)$ (or exhausts all points in the tree). Thus, the runtime is $O(L \log n)$ to search through each of the $s$ hash functions, which results in the desired bounds. 

Finally, for $\ttx{Query-All}(v)$, one performs the same search as above, but instead completes the full in-order traversal of each tree $T_{i,h_i(v)}$. By the  $(r,c,L)$-approximate threshold graph property, each tree $T_{i,h_i(v)}$ contains at most $|N_G(v) \cap \cL| + L$ vertices from $\cL$, after which the runtime follows by the argument in the prior paragraph. 
\end{proof}

Given the data structure from Lemma \ref{lem:binaryDataStructure}, we will now demonstrate how the algorithm from Section \ref{sec:generalMetric} can be implemented in sublinear in $k$ time, given a sufficently good LSH function for the metric. The following theorem summarizes the main consequences of this implementation, assuming the graph $G_{r,cr}(P,\cH,t,s)$ is a $(r,c,L)$-approximate threshold graph. 

\begin{theorem}\label{thm:LSHPreMain}
 Let $G_{r,cr}(P,\cH,t,s)$ be a draw of the random graph as in Definition \ref{def:inducedGraph}, where $r,c,P,\cH$ are as above, such that $G_{r,cr}(P,\cH,t,s)$ is a $(r,c,L)$-approximate threshold graph. Let $G = (V,E)$ be the graph with $V = P$ and $E  = E(G_{r,cr}(P,\cH,t,s))\cap E(G_{cr})$. Then there is a fully dynamic data structure which, under a sequence of vertex insertions and deletions from $G$, maintains a top-$n$ LFMIS with leaders (Definition \ref{def:LFMISLead}) of $G$ at all time steps. The expected amortized per-update runtime of the algorithm is $O( ( sL \log n + ts \Run(\cH)) \log n + \log^2 n )$, where the expectation is taken over the choice of $\pi$.
\end{theorem}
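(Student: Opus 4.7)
The plan is to instantiate the fully dynamic top-$k$ LFMIS algorithm of Theorem \ref{thm:LFMISMain} with $k = n$, and replace its $\Theta(n)$-time neighbor scan inside $\ins(v, \pi(v))$ with calls to the LSH-indexed data structure of Lemma \ref{lem:binaryDataStructure}, using $\cL = \ALG$. Because we choose $k = n$, the overflow branch on Line \ref{line:delOverflow} of Algorithm \ref{alg:ins} can never fire (since $|\ALG| \leq n$ always), so at the end of every update the queue $\cQ$ is empty and $\ALG = \LFMIS(G, \pi)$; all correctness arguments of Section \ref{sec:correctness} therefore transfer verbatim. In parallel with $\ALG$ I would maintain the LSH data structure by issuing an $\ttx{Insert}$ whenever a vertex joins $\ALG$ and a $\ttx{Delete}$ whenever it leaves, so each modification to $\ALG$ costs $O(s \log n + ts \Run(\cH))$.

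Next I would rewrite the body of $\ins(v,\pi(v))$ to use the data structure rather than a linear scan. First I call $\ttx{Query-Top}(v)$, at cost $O(sL \log n + ts \Run(\cH))$. If it returns \textsc{Empty}, take the $S = \emptyset$ branch directly. If it returns $u^*$ with $\pi(u^*) < \pi(v)$, set $\ell(v) = u^*$ (and, if $v$ has a nonempty follower list $\cF_v$, flush it into $\cQ$), in constant time beyond the query. If $\pi(u^*) > \pi(v)$, I call $\ttx{Query-All}(v)$ to obtain $S = \ALG \cap N_G(v)$ at additional cost $O(s(|S|+L)\log n + ts \Run(\cH))$, then remove each $u \in S$ from both $\ALG$ and the data structure. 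The per-call cost is thus $O((|S|+1)(s \log n + ts \Run(\cH)) + sL \log n)$.

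For the amortized analysis, I would reuse the counting apparatus of Section \ref{sec:amortized} essentially unchanged. The total number of calls to $\ins$ is $M + T$, where $T$ is the total number of insertions into $\cQ$, and by Lemma \ref{lem:main} together with $\cB_\pi^t = 0$ (since Line \ref{line:delOverflow} is never triggered when $k=n$) we have $T \leq M + 5 \sum_t \cC_\pi^t$. Every $u \in S$ deleted in the $\pi(u^*) > \pi(v)$ branch lay in $\LFMIS(G^{t-1},\pi)$ but not in $\LFMIS(G^{t},\pi)$, so its eliminator changes on step $t$; consequently $\sum_{\text{calls}} |S|$ is bounded by $\sum_t \cC_\pi^t$ as well. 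Theorem \ref{thm:elimBound} gives $\ex{\sum_t \cC_\pi^t} = O(M \log n)$, and adding the $O(T \log n)$ cost of priority-queue operations yields total expected runtime
\[
O\bigl(M \log n \cdot (sL \log n + ts \Run(\cH)) + M \log^2 n\bigr),
\]
which divides by $M$ to give the stated amortized bound.

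The main obstacle I anticipate is making the charging argument in the preceding paragraph fully rigorous: I must argue that the cumulative cost of \emph{deleting} $S$-vertices from the LSH data structure telescopes into $O\bigl((\sum_t \cC_\pi^t)(s\log n + ts\Run(\cH))\bigr)$, rather than paying $|\ALG|$ per call. This uses the leader-eliminator alignment established in Claims \ref{claim:2} and \ref{claim:3} and the eliminator-change bound of Lemma \ref{lem:main}, exactly as in the general metric case; the new ingredient is simply that the per-unit cost attached to each eliminator change is now $O(s \log n + ts \Run(\cH))$ instead of $O(1)$, and that each $\ins$ call additionally pays $O(sL \log n + ts \Run(\cH))$ for its single $\ttx{Query-Top}$ probe.
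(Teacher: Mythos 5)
Your proposal is correct and follows essentially the same route as the paper's proof: run the top-$k$ LFMIS algorithm with $k$ set to the trivial bound, replace the linear scan in $\ins$ by a $\ttx{Query-Top}$ probe followed (only in the third branch) by $\ttx{Query-All}$, and charge the $\ttx{Query-All}$ work to eliminator changes via Lemma \ref{lem:main} and Theorem \ref{thm:elimBound}. The only cosmetic difference is that you additionally observe that Line \ref{line:delOverflow} never fires when $k=n$, which the paper does not bother to note.
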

\begin{proof}
The algorithm is straightforward: we run the fully dynamic algorithm for top-$k$ LFMIS with leaders from Section \ref{sec:generalMetric}, however we utilize the data structure from Lemma \ref{lem:binaryDataStructure} to compute $S = \cL \cap N_G(v)$ in Algorithm \ref{alg:ins} (where $\cL = \cL_{n+1}$), as well as handle deletions from $\cL$ in Algorithm \ref{alg:del}. Note that to handle a call to $\ins(v)$ of Algorithm \ref{alg:ins}, one first calls  $\ttx{Query-Top}(v)$ in data structure from Lemma \ref{lem:binaryDataStructure}. If the result is $\emptyset$, or $u$ with $\pi(u) < \pi(v)$, then one can proceed as in Algorithm \ref{alg:ins} but by updating $\cL$ via Lemma \ref{lem:binaryDataStructure}. If the result is $u$ with $\pi(u) > \pi(v)$, one then calls  $\ttx{Query-All}(v)$ to obtain the entire set $S = N_G(v) \cap \cL$, each of which will subsequently be made a follower of $v$. 

Let $T$ be the total number of times that a vertex is inserted into the queue $\cQ$ over the entire execution of the algorithm (as in Proposition \ref{prop:T}), and as in 
Section \ref{sec:amortized}, we let $\cC^t_\pi$ denote the number of vertices whose eliminator changed after the $t$-th time step. We first prove the following claim, which is analogous to  Proposition \ref{prop:T}. 

\begin{claim}
The total runtime of the algorithm, over a sequence of $M$ insertions and deletions of vertices from $G$, is at most $O(T(\lambda + \log n) + \lambda (M+ \sum_{t \in [M}) \cC^t_\pi))$, where $\lambda = sL \log n + ts \Run(\cH)$. 
\end{claim}
\begin{proof}
First note that $\lambda$ upper bounds the cost of inserting and deleting from $\cL$, as well as calling $\ttx{Query-Top}(v)$. For every vertex, when it is first inserted into the stream, we pay it a cost of $\lambda$ to cover the call to $\ttx{Query-Top}(v)$. Moreover, whenever a vertex is added to the queue $\cQ$, we pay a cost of $\lambda$ to cover a subsequent call to $\ttx{Query-Top}(v)$ when it is removed from the queue and inserted again, plus an additional $O(\log n)$ required to insert and remove the top of a priority queue. The only cost of the algorithm which the above does not cover is the cost of calling $\ttx{Query-All}(v)$, which can be bounded by $O(\lambda \cdot |N_G(v) \cap \cL| )$. Note that, by correctness of the top-$k$ LFMIS algorithm (Lemma \ref{lem:correctness}), each vertex in $\cL$ is its own eliminator at the beginning of each time step. It follows that each vertex $u \in N_G(v) \cap \cL$ had its eliminator changed on step $t$, since $\ttx{Query-All}(v)$ is only called on time step $t$ in the third case of Algorithm \ref{alg:ins}, where all points in $|N_G(v) \cap \cL|$ will be made followers of $v$. Thus the total cost of all calls to $\ttx{Query-All}(v)$ can be bounded by $\lambda \sum_{t \in [M}) \cC^t_\pi$, which completes the proof of the claim. 
\end{proof}

Given the above, by Lemma \ref{lem:main} we have that $T \leq M \leq 6 \cC^t_\pi$, and by Theorem \ref{thm:elimBound} we have $\exx{\pi}{\sum_t \cC^t_\pi} = O(M \log n)$. It follows that, the expected total runtime of the algorithm, taken over the randomness used to generate $\pi$ (with $h_1,\dots,h_s$ previously fixed and conditioned on) is at most $O(M \log^2 n + M \lambda \log n)$ as needed.

\end{proof}

\begin{theorem}\label{thm:lshMain}
Let $(\cX,d)$ be a metric space, and fix $\delta \in (0,1/2)$. Suppose that for any $r \in (r_{\min}, r_{\max})$ there exists an $(r,cr,p_1,p_2)$-sensitive hash family $\cH_r: \cX \to U$, such that each $h \in \cH_r$ can be evaluated in time at most $\Run(\cH)$, and such that $p_2$ is bounded away from $1$. Then there is a fully dynamic algorithm that, on a sequence of $M$ insertions and deletions of points from $\cX$, given an upper bound $M \leq \hat{M} \leq \poly(M)$, with probability $1-\delta$, correctly maintains a $c(2+\eps)$-approximate $k$-centers clustering to the active point set $P^t$ at all time steps $t$, simultaneously for all $k \geq 1$. The total runtime of the algorithm is at most 
\[\tilde{O}\left(M \cdot \frac{\log \Delta \log \delta^{-1}}{\eps p_1} n^{2 \rho} \cdot \Run(\cH) \right)\]
where $\rho = \frac{\ln p_1}{\ln p_2}$, and $n$ is an upper bound on the maximum number of points at any time step.
\end{theorem}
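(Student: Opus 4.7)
\textbf{Proof Plan for Theorem \ref{thm:lshMain}.} The plan is to combine three ingredients that have been developed in the preceding sections: (i) the Hochbaum--Shmoys reduction of $k$-centers to top-$k$ LFMIS on threshold graphs from Section \ref{sec:kCenters}; (ii) the LSH-based randomized construction of approximate threshold graphs of Proposition \ref{prop:isAThreshold}; and (iii) the fast fully dynamic top-$n$ LFMIS algorithm of Theorem \ref{thm:LSHPreMain}. The reduction runs in parallel over the geometric scales $r_i = (1+\eps/2)^i r_{\min}$ for $i = 0, 1, \dots, I := \lceil \log_{1+\eps/2}\Delta \rceil = O(\eps^{-1}\log\Delta)$. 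For each scale, instead of maintaining a top-$(k+1)$ LFMIS on the exact threshold graph $G_{r_i}$ (which we cannot afford), we draw hash functions once at the outset from $\cH_{r_i}$, build the induced random graph of Definition \ref{def:inducedGraph}, and maintain the top-$n$ LFMIS with leaders on the graph $G_i := G_{r_i,cr_i}(P^t,\cH_{r_i},t,s) \cap G_{cr_i}$ using the data structure of Theorem \ref{thm:LSHPreMain}, where the intersection with $G_{cr_i}$ is enforced by a per-edge distance check (discarding any edge from the LSH structure whose true distance exceeds $cr_i$).

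The first main step is to instantiate Proposition \ref{prop:isAThreshold} so that the random graphs are valid $(r_i,c,L)$-approximate threshold graphs \emph{simultaneously for all $M$ time steps and all $I$ scales}. Since the hash functions are chosen once and the edge set is a deterministic function of those hashes restricted to the currently active point set, it suffices to union bound the failure probability in the proof of Proposition \ref{prop:isAThreshold} over all $O(\hat{M}^2)$ pairs of points that ever appear and over the $I$ scales. Replacing $n$ by $\hat{M}$ in the setting of $s$ and $t$ (i.e., $s = O(p_1^{-1}\hat{M}^{2\rho}\log(\hat{M}/\delta'))$ and $t = O(\log_{1/p_2} \hat{M})$, with $\delta' = \delta/(3I)$) yields, by a single union bound, that with probability at least $1-\delta/3$ every induced subgraph $G_{r_i,cr_i}$ on every active set $P^t$ is a $(r_i,c,L_{i,t})$-approximate threshold graph for some random $L_{i,t}$ with $\ex{L_{i,t}} = O(1)$. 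Since $\hat{M} \le \poly(M)$, replacing $\hat{M}$ by $n$ changes the bound only by polylogarithmic factors absorbed in the $\tilde{O}(\cdot)$.

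The second main step is the runtime and approximation accounting. Conditioned on the event above, Theorem \ref{thm:LSHPreMain} applied to each scale has expected amortized update time $O\bigl((sL_{i,t}\log n + ts\,\Run(\cH))\log n + \log^2 n\bigr) = \tilde{O}(p_1^{-1} n^{2\rho}\,\Run(\cH))$ where the expectation is over both $L_{i,t}$ and the random ranking $\pi_i$ used by that scale. Summing over all $I = O(\eps^{-1}\log\Delta)$ scales and $M$ updates yields an expected total runtime of
\[
\tilde{O}\!\left(M \cdot \frac{\log\Delta}{\eps\,p_1}\, n^{2\rho} \cdot \Run(\cH)\right).
\]
Finally, to convert this expected bound into a high-probability bound, I would invoke Proposition \ref{prop:highProb}, which incurs only an extra $O(\log\delta^{-1})$ factor, producing the claimed bound. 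A final union bound folds the $\delta/3$ failure probabilities from the LSH property, from Proposition \ref{prop:highProb}, and from the unique-$\pi$ event into total failure probability at most $\delta$.

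For correctness, given any $k \ge 1$, one outputs the LFMIS at the smallest scale $r_i$ for which the maintained top-$n$ LFMIS has size at most $k$ (which can be located by keeping the sizes in a sorted structure across scales, or even by a linear scan whose cost is absorbed into the per-update time). Since $G_{r_i} \subseteq G_i \subseteq G_{cr_i}$, every vertex not in the output LFMIS has a neighbor in the LFMIS in $G_{cr_i}$, giving cost at most $cr_i$, while at scale $r_{i-1}$ the LFMIS has size $\ge k+1$ and hence exhibits $k+1$ pairwise-$G_{r_{i-1}}$-independent points, so $\OPT \ge r_{i-1}/2 = r_i/(2+\eps)$ by the triangle inequality, yielding the claimed $c(2+\eps)$ approximation simultaneously for every $k$. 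The main obstacle throughout is the first step: arguing that the LSH-induced graph retains the approximate-threshold property at every time step simultaneously in the face of a fully dynamic point set, while keeping $s$ only polynomial in $n^{2\rho}$; everything after that is a direct combination of previously established machinery.
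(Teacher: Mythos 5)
Your overall architecture matches the paper's: run the Hochbaum--Shmoys reduction over $O(\eps^{-1}\log\Delta)$ scales, use Proposition \ref{prop:isAThreshold} to get $(r,c,L)$-approximate threshold graphs with $\ex{L}=O(1)$, maintain the LFMIS with Theorem \ref{thm:LSHPreMain}, and boost the expected runtime to a high-probability bound via Proposition \ref{prop:highProb}. However, there is one genuine gap in your second step, and it is precisely the point where the paper has to do extra work.

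You set $s = O(p_1^{-1}\hat{M}^{2\rho}\log(\hat{M}/\delta'))$ and then assert that ``since $\hat{M}\le \poly(M)$, replacing $\hat{M}$ by $n$ changes the bound only by polylogarithmic factors.'' This is false: $s$ enters the per-update runtime multiplicatively, and $\hat{M}^{2\rho}$ versus $n^{2\rho}$ is a \emph{polynomial} gap whenever $M \gg n$ (e.g., a stream of length $M = n^{10}$ that repeatedly inserts and deletes points gives $\hat{M}^{2\rho} = n^{\Theta(\rho)}\cdot n^{2\rho}\cdot(\text{poly factor})$, i.e., $M^{2\rho}$ can be $n^{20\rho}$ rather than $n^{2\rho}$). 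Since $\rho$ is a fixed constant for the LSH families of interest, this cannot be absorbed into the $\tilde{O}(\cdot)$. The paper closes this gap with an explicit rebuilding scheme: it first proves the theorem with $M^{2\rho}$ in place of $n^{2\rho}$, and then restarts the data structure on only the \emph{currently active} point set --- both whenever the active set size crosses a power of two and whenever the number of updates since the last rebuild exceeds half the active set size at that rebuild --- so that every instance is ever run with a length parameter $O(n)$, and the $O(|P^{t_j}|)$ reinsertion cost of each rebuild is amortized over the $\ge |P^{t_{j-1}}|/2$ preceding updates. You need this (or an equivalent) argument; without it your proof only establishes the weaker bound with $M^{2\rho}$. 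The rest of your outline (the union bound over all pairs ever appearing, the per-scale runtime accounting, and the correctness of reading off the answer at the smallest scale whose LFMIS has size at most $k$) is consistent with the paper's argument.
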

\begin{proof}

We first demonstrate that there exists an algorithm $\cA(\delta,M)$ which takes as input $\delta \in (0,1/2)$ and $M \geq 1$, and on a sequence of at most $M$ insertions and deletions of points in $\cX$, with probability $1-\delta$ correctly solves the top-$M$ LFMIS with leaders (Definition \ref{def:LFMISLead}) on a graph $G$ that is $(r,c,0)$-approximate threshold graph for $P$ (Definition \ref{def:approxThreshGraph}), where $P \subset \cX$ is the set of all points which were inserted during the sequence, and runs in total expected time $\alpha_{\delta,M}$, where 

\[\alpha_{\delta,M} = \tilde{O}\left(M^{1+2\rho} \log(M/\delta)  \Run(\cH)  \right)\]
The reduction from having such an algorithm to obtaining a $k$-centers solution for every $k \geq 1$, incurring a blow-up of $\eps^{-1} \log \Delta$, and requiring one to scale down $\delta$ by a factor of $O(\eps^{-1} \log \Delta)$ so that all $\eps^{-1} \log \Delta$ instances are correct, is the same as in Section \ref{sec:kCenters}, with the modification that the clustering obtained by a MIS $\cL$ at scale $r \geq 0$ has cost at most $cr$, rather than $r$. Thus, in what follows, we focus on a fixed $r$. 

First, setting $s_{\delta,M} = O(\log (M/\delta ) M^{2\rho} / p_1)$ and $t_M = O(\log_{1/p_2} M)$, by Proposition \ref{prop:isAThreshold} it holds that with probability $1-\delta$ the graph $G_{r,cr}(P,\cH,t_M,s_{\delta,M})$ is a $(r,c,L)$-approximate threshold graph, with $\ex{L} < 2$. Then by Theorem \ref{thm:LSHPreMain}, there is an algorithm which maintains a top-$M$ LFMIS with leaders to the graph $G = G_{r,cr}(P,\cH,t_M,s_{\delta,M}) \cap G_{rc}$, which in particular is a $(r,c,0)$-approximate threshold graph for $P$, and runs in expected time at most  

\[O( M ( s_{\delta,M} L \log M+ t_M s_{\delta,M} \Run(\cH)) \log M+ M \log^2 M )\]
where the expectation is taken over the choice of the random ranking $\pi$. Taking expectation over $L$, which depends only on the hash functions $h_1,\dots,h_{s_{\delta,M}}$, and is therefore independent of $\pi$, the expected total runtime is at most $\alpha_{\delta,M}$ as needed. 

To go from the updated time holding in expectation to holding with probability $1-\delta$, we follow the same proof of Proposition \ref{prop:highProb}, except that we set the failure probability of each instance to be $O(\delta/\log^2(M/\delta))$. By the proof of Proposition \ref{prop:highProb}, the total number of copies ever run by the algorithm is at most $O(\log(1/\delta) \sum_{i=1}^{\log M} i) = O(\log(1/\delta) \log^2 M)$ with probability at most $1-\delta/2$, and thus with probability at least $1-\delta$ it holds that both at most $O(\log(1/\delta) \log^2 M)$  copies of the algorithm are run, and each of them is correct at all times. Note that whenever the runtime of the algorithm exceeds this bound, the algorithm can safely terminate, as the probability that this occurs is at most $\delta$ by Proposition \ref{prop:highProb}.

Put together, the above demonstrates the existence of an algorithm $\bar{\cA}(\delta,M)$ which takes as input $\delta \in (0,1/2)$ and $M \geq 1$, and on a sequence of at most $M$ insertions and deletions of points in $\cX$, with probability $1-\delta$ correctly maintains a $c(2+\eps)$-approximation to the optimal $k$-centers clustering simultaneously for all $k \geq 1$, with total runtime at most $\tilde{O}(\eps^{-1} \log \Delta \alpha_{\delta,M})$. However, we would like to only run instances of $\bar{\cA}(\delta,M)$ with $M = O(n)$ at any given point in time, so that the amortized update time has a factor of $n^{2 \rho}$ instead of $M^{2\rho}$. To accomplish this, we greedily pick time steps $1 \leq t_1 < t_2 < \dots < M$ with the property that $t_i - t_{i-1} = \lceil |P^{t_{i-1}}|/2 \rceil$. Observe that for such time steps, we have $|P^{t_{i}}| <  2|P^{t_{i-1}}|$. We then define time steps $1 \leq t_1' < t_2' < \dots < M$ with the property that $t_i'$ is the first time step where the active point set size exceeds $2^i$. We then run an instance of $\bar{\cA}(\delta_0,2^i)$, starting with $i=1$, where $\delta_0 = \delta/M$. Whenever we reach the next time step $t_{i+1}'$ we restart the algorithm $\bar{\cA}$ with parameters $(\delta_0, 2^{i+1})$, except that instead of running $\bar{\cA}$ on the entire prefix of the stream up to time step $t_{i+1}'$, we only insert the points in $P^{t_{i+1}'}$ which are active at that time step. Similarly, when we reach a time step $t_{j}$, we restart $\bar{\cA}$ with the same parameters, and begin by isnerting the active point set $P^{t_{j}}$, before continuing with the stream. 

Note that $\bar{\cA}$ is only restarted $\log n$ times due to the active point set size doubling. Moreover, each time it is restarted due to the time step being equal to $t_j$ for some $j$, the at most $2|P^{t_{j-1}}|$ point insertions required to restart the algorithm can be amortized over the $t_j - t_{j-1} \geq |P^{t_{j-1}}| /2$ prior time steps. Since each instance is correct with probability $1-\delta_0$, by a union bound all instances that are ever restarted are correct with probability $1-\delta$. Note that, since the amortized runtime dependency of the overall algorithm on $M$ is polylogarithmic, substituting $M$ with an upper bound $\hat{M}$ satisfying  $M \leq \hat{M} \leq \poly(M)$ increases the runtime by at most a constant. Moreover, we never run $\bar{\cA}(\delta,t)$ with a value of $t$ larger than $2 n$, which yields the desired runtime.


\end{proof}

The following corollary follows immediately by application of the locally sensitive hash functions of \cite{datar2004locality,har2012approximate}, along with the bounds from Theorem \ref{thm:lshMain}. 

\begin{corollary}\label{cor:Euclidean} Fix any $c \geq 1$. Then there is a fully dynamic algorithm which, on a sequence of $M$ insertions and deletions of points from $d$-dimensional Euclidean space $(\R^d , \ell_p)$, for $p \geq 1$ at most a constant,  with probability $1-\delta$, correctly maintains a $c(4+\eps)$-approximate $k$-centers clustering to the active point set $P^t$ at all time steps $t \in [M]$, and simultaneously for all $k \geq 1$. The total runtime is at most 

\[\tilde{O}\left( M \frac{ \log \delta^{-1} \log \Delta  }{\eps} d n^{1/c}\right)\]
\end{corollary}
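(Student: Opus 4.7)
The plan is to instantiate Theorem~\ref{thm:lshMain} with the $p$-stable LSH family of Datar, Immorlica, Indyk, and Mirrokni~\cite{datar2004locality}, whose parameter analysis is also summarized in~\cite{har2012approximate}. For any $p \in [1,2]$, any scale $r > 0$, and any target LSH approximation factor $c' > 1$, this construction produces an $(r, c'r, p_1, p_2)$-sensitive family on $\R^d$ in which each hash is an inner product with a random $p$-stable vector followed by bucketing into intervals of a width $w$ chosen proportionally to $r$. Each hash can be evaluated in $\Run(\cH) = O(d)$ time, the probabilities $p_1, p_2$ depend only on the ratio $c'$ and on $w/r$ (so are uniform across scales), $p_1$ is a constant bounded away from $0$, $p_2$ is bounded away from $1$, and the resulting exponent satisfies $\rho = \ln p_1/\ln p_2 \leq 1/c'$.

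The key design choice is to set the LSH approximation factor to $c' := 2c$ rather than $c$. Under this choice, Theorem~\ref{thm:lshMain} directly yields a fully dynamic algorithm that, with probability $1-\delta$, maintains a $c'(2+\eps) = 2c(2+\eps)$-approximate $k$-centers clustering simultaneously for all $k \geq 1$. After rescaling $\eps$ by a constant factor (absorbing the cross-term $2c\cdot\eps/2$), this becomes a $c(4+\eps)$-approximation, which matches the statement.

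For the runtime, plugging $\rho \leq 1/(2c)$ into the bound of Theorem~\ref{thm:lshMain} gives $n^{2\rho} \leq n^{1/c}$. Combined with $\Run(\cH) = O(d)$ and with $p_1 = \Theta(1)$ and $\ln(1/p_2) = \Theta(1)$ for the Datar et al.\ family, the total runtime bound of Theorem~\ref{thm:lshMain} specializes to
\[
\tilde{O}\!\left( M \cdot \frac{\log \delta^{-1} \log \Delta}{\eps}\, d\, n^{1/c}\right),
\]
which is exactly the claimed bound.

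I do not anticipate any real obstacle: the corollary is essentially a black-box plug-in of a classical LSH construction into Theorem~\ref{thm:lshMain}. The only bookkeeping point that deserves mild care is that the outer reduction of Section~\ref{sec:kCenters} invokes Theorem~\ref{thm:lshMain} at $O(\eps^{-1}\log\Delta)$ geometrically spaced scales $r = (1+\eps/2)^i r_{\min}$; however, since the $p$-stable template uses the same $p_1, p_2$ at every scale once $w$ is scaled linearly with $r$, one simply scales $\delta$ down by a $O(\eps^{-1}\log\Delta)$ factor and union-bounds the failure probabilities across scales, with the resulting multiplicative factor absorbed into the $\log\Delta$ and $\log\delta^{-1}$ terms already present.
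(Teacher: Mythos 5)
Your proposal is correct and follows exactly the route the paper intends: the paper gives no separate proof for this corollary beyond citing Theorem \ref{thm:lshMain} with the $p$-stable LSH of \cite{datar2004locality}, and the one nontrivial step you supply --- instantiating the LSH at approximation factor $2c$ so that $n^{2\rho}\leq n^{1/c}$ while the approximation becomes $2c(2+\eps)=c(4+\eps)$ after rescaling $\eps$ --- is precisely the device the paper makes explicit in its proof of Corollary \ref{cor:Jaccard}. No gaps.
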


For the case of standard Euclidean space ($p=2$), one can used the improved ball carving technique of Andoni and Indyk \cite{andoni2006near} to obtain better locally sensitive hash functions, which result in the following:

\begin{corollary}\label{cor:Euclidean2} Fix any $c \geq 1$. Then there is a fully dynamic algorithm which, on a sequence of $M$ insertions and deletions of points from $d$-dimensional Euclidean space $(\R^d , \ell_2)$,  with probability $1-\delta$, correctly maintains a $c(\sqrt{8}+\eps)$-approximate $k$-centers clustering to the active point set $P^t$ at all time steps $t \in [M]$, and simultaneously for all $k \geq 1$. The total runtime is at most 

\[\tilde{O}\left( M\frac{ \log \delta^{-1} \log \Delta  }{\eps} d n^{1/c^2 + o(1)}\right)\]
\end{corollary}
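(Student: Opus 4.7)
\textbf{Proof proposal for Corollary \ref{cor:Euclidean2}.} The plan is to apply Theorem \ref{thm:lshMain} with the ball-carving LSH of Andoni and Indyk \cite{andoni2006near} for $(\R^d,\ell_2)$, after a careful reparametrization of the approximation ratio so that the exponent of $n$ in the runtime comes out to $1/c^2+o(1)$ rather than $2/c^2+o(1)$.

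First I would recall that, for any target approximation factor $c' > 1$ and any scale $r>0$, the Andoni--Indyk construction yields an $(r,c'r,p_1,p_2)$-sensitive family $\cH_{r,c'}$ on $(\R^d,\ell_2)$ satisfying
\[
\rho \;=\; \frac{\ln(1/p_1)}{\ln(1/p_2)} \;=\; \frac{1}{{c'}^2} + o(1),
\]
with $p_2$ bounded away from $1$ (it may be taken to be any constant strictly less than $1$) and $1/p_1 \le n^{o(1)}$ under the standard tuning of the ball radius. Moreover, each hash function can be evaluated in time $\Run(\cH) = d \cdot n^{o(1)}$, since one first applies a Gaussian projection into dimension $n^{o(1)}$ and then performs ball-carving in that low-dimensional space.

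Next I would set $c' := c\sqrt{2}$ and instantiate Theorem \ref{thm:lshMain} with $\cH_{r,c'}$ at each of the $O(\eps^{-1}\log\Delta)$ scales $r$. The resulting fully dynamic algorithm maintains, with probability $1-\delta$, a $c'(2+\eps')$-approximate $k$-centers clustering simultaneously for all $k \geq 1$, where $\eps' > 0$ is any target error. By our choice of $c'$, the approximation factor is
\[
c'(2+\eps') \;=\; c\sqrt{2}\,(2+\eps') \;=\; c\bigl(\sqrt{8} + \sqrt{2}\,\eps'\bigr),
\]
so rescaling $\eps' := \eps/\sqrt{2}$ yields exactly the stated $c(\sqrt{8}+\eps)$ approximation.

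Finally I would bound the runtime. Theorem \ref{thm:lshMain} gives total time
\[
\tilde O\!\left(M \cdot \frac{\log\Delta\,\log\delta^{-1}}{\eps\,p_1}\; n^{2\rho}\; \Run(\cH)\right).
\]
With $c'=c\sqrt{2}$ we have $2\rho = 1/c^2 + o(1)$, hence $n^{2\rho} = n^{1/c^2+o(1)}$. The factors $1/p_1 \le n^{o(1)}$ and $\Run(\cH) = d\cdot n^{o(1)}$ get absorbed into the $n^{o(1)}$ slack, collapsing the whole runtime into the claimed
\[
\tilde O\!\left(M \cdot \frac{\log\delta^{-1}\log\Delta}{\eps}\; d\; n^{1/c^2+o(1)}\right).
\]
The main obstacle, and essentially the only nontrivial accounting step, is verifying that the sub-polynomial factors of Andoni--Indyk (the $n^{o(1)}$ projection dimension, the $1/p_1$ blow-up, and the $\log\Delta$ range of scales) do in fact collapse into the $n^{o(1)}$ slack in the exponent, and that $p_2$ can simultaneously be kept bounded away from $1$ so Theorem \ref{thm:lshMain} is applicable; this follows from the standard tuning of the ball-carving radius, but is the one place where care is needed.
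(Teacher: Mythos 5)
Your proposal is correct and is essentially the argument the paper intends: the corollary is obtained by instantiating Theorem \ref{thm:lshMain} with the Andoni--Indyk ball-carving family, and your reparametrization $c' = c\sqrt{2}$ (so that $c'(2+\eps')$ becomes $c(\sqrt{8}+\eps)$ while $n^{2\rho}=n^{2/c'^2+o(1)}=n^{1/c^2+o(1)}$) is exactly the right accounting. You also correctly flag the only delicate point — absorbing the $1/p_1$ and hash-evaluation overheads of the ball-carving scheme into the $n^{o(1)}$ slack — which the paper glosses over.
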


Additionally, one can use the well-known MinHash \cite{broder1997resemblance} to obtain a fully dynamic $k$-centers algorithm for the \textit{Jaccard Distance}. Here, the metric space is the set of all subsets of a finite universe $X$, equipped with the distance $d(A,B) = 1- \frac{|A \cap B|}{|A \cup B|}$ for $A,B \subseteq X$. We begin stating a standard bound on the value of $\rho$ for the MinHash LSH family.

\begin{proposition}[\cite{indyk1998approximate}]
Let $\cH$ be the hash family given by 
\[ \cH = \{h_\pi :2^X \to X \; | \;  h_\pi(A) = \arg \min_{a \in A} \pi(a), \; \pi \text{ is a permutation of } X \} \] 
Then for any $c \geq 1$ and $r \in [0,1/(2c)]$, we have that $\cH$ is $(r,cr,1-r,1-cr)$-sensitive for the Jaccard Metric over $X$, where $\rho = 1/c$. 
\end{proposition}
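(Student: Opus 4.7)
The plan is to prove the two claims (sensitivity parameters and the value of $\rho$) separately, both resting on the classical MinHash collision identity. I would first establish that for any nonempty finite $A, B \subseteq X$ and a uniformly random permutation $\pi$ of $X$,
\[\Pr_{\pi}[h_\pi(A) = h_\pi(B)] = \frac{|A \cap B|}{|A \cup B|} = 1 - d(A, B).\]
The argument is a standard symmetry one: by uniformity of $\pi$, the element $x^\star = \arg\min_{x \in A \cup B} \pi(x)$ is distributed uniformly over $A \cup B$. Moreover, $h_\pi(A) = h_\pi(B)$ holds iff $x^\star \in A \cap B$, since in that case the common minimum over $A \cup B$ lies in both $A$ and $B$, so $h_\pi(A) = h_\pi(B) = x^\star$; and conversely, if $x^\star \in (A \cup B) \setminus (A \cap B)$, WLOG $x^\star \in A \setminus B$, then $h_\pi(A) = x^\star \neq h_\pi(B)$ since $x^\star \notin B$. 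The probability of the event $x^\star \in A \cap B$ is exactly $|A \cap B|/|A \cup B|$.

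Next, I would plug the collision identity into Definition \ref{def:LSH}. If $d(A, B) \leq r$, then the collision probability equals $1 - d(A,B) \geq 1 - r = p_1$; if $d(A, B) > cr$, then the collision probability equals $1 - d(A,B) < 1 - cr = p_2$. Thus $\cH$ is $(r, cr, 1-r, 1-cr)$-sensitive.

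The remaining step is to verify $\rho = \frac{\ln(1-r)}{\ln(1-cr)} \leq 1/c$ for $r \in (0, 1/(2c)]$. The hypothesis $r \leq 1/(2c)$ guarantees $1 - cr \geq 1/2 > 0$, so both logarithms are well-defined and strictly negative, making $\rho > 0$. Rearranging, $\rho \leq 1/c$ is equivalent to $c \ln(1-r) \geq \ln(1-cr)$ (multiplying both sides of $\rho \leq 1/c$ by the negative quantity $\ln(1-cr)$ flips the direction), which is in turn equivalent to $(1-r)^c \geq 1 - cr$. This last inequality is precisely Bernoulli's inequality, valid for all $c \geq 1$ and $r \in [0,1]$, and so the bound on $\rho$ follows.

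I expect no serious obstacle in the proof; the main item requiring care is the sign-tracking when converting the ratio of logs $\rho \leq 1/c$ into a polynomial inequality on $(1-r)^c$ versus $1-cr$, since both logarithms in the ratio are negative and cross-multiplication flips the inequality direction. Once this bookkeeping is handled, the argument reduces to the symmetry identity for MinHash and Bernoulli's inequality.
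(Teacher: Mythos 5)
Your proof is correct and follows essentially the same route as the paper: establish the MinHash collision identity $\Pr[h(A)=h(B)] = 1-d(A,B)$, read off $p_1 = 1-r$ and $p_2 = 1-cr$, and then bound $\rho = \ln(1-r)/\ln(1-cr) \le 1/c$. The only difference is that the paper cites this last inequality as Claim 3.11 of \cite{har2012approximate}, whereas you derive it directly from Bernoulli's inequality $(1-r)^c \ge 1-cr$ (with the sign-flip from the negative logarithms handled correctly), which makes your argument self-contained.
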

\begin{proof}
If $d(A,B) = r$, for any $r \in [0,1]$, we have 
\[ \prb{h\sim \cH}{h(A) = h(B)} = \frac{|A \cap B|}{|A \cup B|} = 1-r \]
Thus, we have $p_1 = 1-r$ and $p_2 = 1-cr$. Now by Claim $3.11$ of \cite{har2012approximate}, we have that for all $x \in [0,1)$ and $c \geq 1$ such that $1-cx > 0$, the following inequality holds:
\[ \frac{\ln(1-x)}{\ln(1-cx)} \leq \frac{1}{c} \]
Thus we have the desired bound:

\[       \rho \leq \frac{\ln(1-r)}{\ln(1- cr)} \leq 1/c\]

\end{proof}

\begin{corollary}\label{cor:Jaccard}
Let $X$ be a finite set, and fix any $c \geq 1$. Then there is a fully dynamic algorithm which, on a sequence of $M$ insertions and deletions of subsets of $X$ equipped with the Jaccard Metric,  with probability $1-\delta$, correctly maintains a $c(4+\eps)$-approximate $k$-centers clustering to the active point set $P^t$ at all time steps $t \in [M]$, and simultaneously for all $k \geq 1$. The total runtime is at most 
\[\tilde{O}\left(M \frac{ \log \delta^{-1} \log \Delta  }{\eps} |X| n^{1/c}\right)\]
\end{corollary}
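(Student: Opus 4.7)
The plan is to apply Theorem \ref{thm:lshMain} directly, using the MinHash family as the LSH for the Jaccard metric. To achieve the claimed runtime of $n^{1/c}$ rather than $n^{2/c}$, I would instantiate the theorem with LSH approximation parameter $c' = 2c$ in place of $c$. The preceding proposition, applied with $c'$, gives a $(r, 2cr, 1-r, 1-2cr)$-sensitive MinHash family on the Jaccard metric with $\rho = \ln(1-r)/\ln(1-2cr) \leq 1/(2c)$, so that $n^{2\rho} \leq n^{1/c}$ matches the advertised exponent.

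Next I would verify the remaining inputs of Theorem \ref{thm:lshMain}. First, $\Run(\cH) = O(|X|)$, since evaluating $h_\pi(A) = \arg\min_{a \in A} \pi(a)$ on $A \subseteq X$ requires at most $|X|$ evaluations of the random permutation $\pi$. Second, $p_2 = 1 - 2cr$ is bounded away from $1$ on the useful scale range, and $p_1 = 1 - r \geq 1/2$ so that the $1/p_1$ factor in the runtime bound of Theorem \ref{thm:lshMain} is absorbed into the constant. Substituting into $\tilde{O}(M \cdot \frac{\log \Delta \log \delta^{-1}}{\eps p_1} n^{2\rho} \Run(\cH))$ then yields the advertised runtime $\tilde{O}\left(M \cdot \frac{\log \delta^{-1} \log \Delta}{\eps} |X| n^{1/c}\right)$. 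On the approximation side, Theorem \ref{thm:lshMain} with LSH parameter $c' = 2c$ returns a $c'(2+\eps) = 2c(2+\eps)$-approximate clustering, which is exactly the claimed $c(4+\eps)$-approximation after rescaling $\eps$ by a factor of $2$.

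The main subtlety I anticipate is ensuring that the MinHash family is suitable at every scale $r$ in the geometric sequence $r_{\min}, (1+\eps/2) r_{\min}, \ldots, r_{\max}$ required by the reduction of Section \ref{sec:kCenters}. The preceding proposition guarantees suitability only for $r \in [0, 1/(4c)]$; for scales larger than this, one can fall back to a trivial single-center clustering, which has cost at most $1$ (the Jaccard diameter). Whenever the critical scale $r^*$ in the reduction lies beyond the MinHash regime, the lower bound $\OPT \geq r^*/(2+\eps)$ established in Section \ref{sec:kCenters} shows that $\OPT$ is bounded below by a constant depending only on $c$, so the trivial fallback still realizes a $c(4+\eps)$-approximation up to constants. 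This fallback can be integrated into the algorithm without affecting the asymptotic runtime or the stated approximation guarantee.
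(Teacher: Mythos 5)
Your proposal matches the paper's proof essentially step for step: instantiate the LSH machinery with approximation parameter $2c$ so that MinHash gives $\rho \le 1/(2c)$ and hence $n^{2\rho} \le n^{1/c}$, note $\Run(\cH) = \tilde{O}(|X|)$, obtain a $2c(2+\eps) = c(4+2\eps)$ approximation and rescale $\eps$, and handle the restricted range of valid scales $r$ by falling back to a trivial $1$-clustering of cost at most $1$ (the Jaccard diameter) when the independent set at the largest scale certifies $\OPT = \Omega(1/c)$. The one place you are looser than the claim is the fallback constant: cutting off the scales at $r = 1/(4c)$ only certifies $\OPT \ge 1/(8c)$, so the trivial solution is an $8c$-approximation rather than the claimed $c(4+\eps)$; to get the stated constant you should run scales up to $r = 1/(2c)$ (the MinHash collision probabilities $p_1 = 1-r$, $p_2 = 1-2cr$ and the bound $\rho \le 1/(2c)$ remain valid whenever $2cr < 1$), so that a size-$(k+1)$ independent set at the top scale certifies $\OPT \ge 1/(4c)$ and the trivial clustering is a genuine $4c$-approximation, which is what the paper does.
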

\begin{proof}
Letting $r_{\min}$ be the minimum distance between points in the stream, we run a copy of the top-$M$ LFMIS algorithm of Theorem \ref{thm:LSHPreMain} for $r = r_{\min}, (1+\eps)r_{\min},\dots,1/(2c)$, where we set the value of the approximation factor $c$ in Theorem \ref{thm:LSHPreMain} to be scaled by a factor of $2$, so that each instance computes a LFMIS on a $(r,2c,0)$-approximate threshold graph. Note that for any time step $t$, if the copy of the algorithm for $r=1/c$ contains a LFMIS with at least $k+1$ vertices, it follows that the cost of the optimal clustering is at most $1/(4c)$. In this case, we can return an arbitrary vertex as a $1$-clustering of the entire dataset, which will have cost at most $1$, as the Jaccard metric is bounded by $1$, and thereby yielding a $4c$ approximation. Otherwise, the solution is at most a 
$c(4+2\eps)$-approximation, which is the desired result after a re-scaling of $\eps$. Note that $\Run(\cH) = \tilde{O}(|X|)$ to evaluate the MinHash, which completes the proof.  

\end{proof}

\section{Deterministic Fully Dynamic $k$-Centers}\label{sec:deterministic}

In this section, we demonstrate the existence of a deterministic algorithm for fully dynamic $k$-centers clustering in arbitrary metric spaces which achieves a $(2+\eps)\lceil \log (n(1+\eps)) \rceil$ approximation in $O(k \log \Delta \log n \log k)$-amortized update time, where $n$ is the maximum number of points which are active at any point in the stream, and $\eps > 0$ is any constant.\footnote{We remark that $\eps$ need not be a constant. In particular, it is easy to see in the proof of Theorem \ref{thm:Deterministic} that, by replacing each of the two occurrences of $\eps$ in the runtime with separate parameters $\eps_0,\eps_1$, the runtime of our algorithm would then depend linearly on $1/(\eps_0 \eps_1)$. We choose to omit this dependency since the approximation is already logarithmic in $n$.} Additionally, our algorithm will have the property that its worst-case update time is at most $O(k^2 \log \Delta \log k \log n)$. Note that the algorithm from Section \ref{sec:generalMetric}, as well as the algorithm of the prior work \cite{chan2018fully}, both run in $\Omega(n)$ time in the worst-case. In what follows, we fix an arbitrary metric space $(\cX, d)$, and any scale $r \in (r_{\min},r_{\max})$. We first describe an algorithm which attempts to generate a $k$-centers clustering with radius at most $r \cdot \lceil \log m\rceil$, where $m$ is the number of updates in the stream (we will later show how $m$ can be replaced with $(1+\eps)n$).

\paragraph{The Algorithm.}  We now describe the data structured maintained by the algorithm for a given scale $r$.
The data structure maintains a binary tree $T$ of depth at most $\lceil \log m \rceil$. Each node $v$ in the tree will contain a subset $S_v$ of the active input points, where each leaf contains a unique individual point. Moreover, we will maintain a subset $U_c \subseteq S_v$ of ``uncovered'' points. For any vertex $v$ with right and left children $v_r,v_l$, we will have $S_v \subset U_{v_r} \cup U_{v_l}$. In other words, only the uncovered points have an opportunity to move up to the next level of the tree. We will maintain the property that $|U_v| \leq |S_v| \leq 2k+2$ for all nodes $v \in T$. 

When $|U_v|$ becomes larger than $k+1$, we will not insert additional points into $S_v$ until $|U_v|$ drops below $k+1$. To accomplish this, each vertex $v$ will maintain a queue $Q_v$ of points. Whenever $|U_v| \geq k+1$, we will add new points propagating up to $v$ into $Q_v$, instead of adding them directly to $S_v$. Additionally, if $w$ is either the left $v_\ell$ or right $v_r$ child of $v$, we will not propagate more than $k+1$ points up from $w$ to $v$. Thus, if $|U_w| \geq k+1$, additional points beyond the first $k+1$ will be added to the queue $Q_v$ instead of directly to $S_v$. We only will begin inserting points from $Q_v$ into $S_v$ when all three of $U_v,U_{v_\ell},U_{v_r}$ have size less than $k+1$. For any vertex $v$, we call $v$ \textit{halted} if one of $U_v,U_{v_\ell},U_{v_r}$ has size at least than $k+1$. 

The final clustering of the algorithm will be given by the centers in $U_v$, where $v$ is the root of $T$. Note that points in a queue $Q_u$ for any $u \in T$ are not contained in any formal cluster, since they are not necessarily covered by any point in $u$. On the other hand, if any queue $Q_u$ is non-empty, by the above this implies that there exists at least $k+1$ points in the active set which are pairwise distance at least $r$ apart, which serves as a certificate that the cost of the optimal solution is at least $r/2$. Thus, for such a scale $r$, we will not be using the clustering at that scale as our output at that time step anyway. 

Formally, our data structure will satisfy the following properties. 

\begin{figure}[H]
\begin{Frame}[Properties of data structures associated with every $v \in T$]
\begin{enumerate}
\item Once a point $p$ is added to the set $S_v$ (resp. $U_v$), it remains in $S_v$ (resp. $U_v$) until it is deleted from the stream. 
\item We have $|U_v| \leq |S_v| \leq 2k+2$.
    \item For all distinct uncovered points $x,y \in U_v$, we have $d(x,y) > r$
    \item For all  $x \in S_v \setminus U_v$, there is at least one point $y \in U_v$ with $d(x,y) \leq r$. 
    \item If $|Q_v| > 0$, then we have that at least one of $|U_v| \geq k+1$, $|U_{v_\ell}| \geq k+1$, or $|U_{v_r}| \geq k+1$  holds, where $v_\ell,v_r$ are the left and right children, respectively, of $v$ in $T$. 
\end{enumerate}
\end{Frame}
\caption{Invariants of the data structure}\label{fig:invar}
\end{figure}
Each point $x \in S_v$ is given an unique identifier in the set $[2k+2]$ when it is first inserted into $S_v$. If $x$ is later deleted from $S_v$, its identifier is freed up to be potentially used by another point which is inserted into $S_v$ later on; the set of available identifiers can be stored in a queue. 
For each $v \in T$ and point $x \in S_v$, we will maintain the set of points in $U_v$ and $S_v\setminus U_v$ which are neighbors of $x$ in the $r$-threshold graph on the point set $S_v \subset \cX$. Specifically, if we define $N_r(x) = \{z \in \cX \; | \; d(x,z) \leq r\}$, then within the vertex $v$ we dynamically maintain the two sets $N_r(x) \cap U_v$ and $N_r(x) \cap (S_v\setminus U_v)$. These two sets are maintained in a binary search tree, using the $O(\log k)$-bit identifiers for the points in $S_v$, so that insertions, deletions, and searching for an arbitrary member of the set can each be deterministically accomplished in $O(\log k)$ time. We call $N_r(x) \cap U_v$ the set of \textit{uncovered neighbors of $x$} at $v$, and we call $N_r(x) \cap (S_v\setminus U_v)$  the set of \textit{covered neighbors of $x$} at $v$.

\paragraph{The Binary-Tree Data Structure}
We now formally describe how we maintain the binary tree data structure $T$. We will think of $T$ as being a subtree of an infinite binary tree $\overline{T}$, with leaf nodes labelled $\{1,2,\dots\}$, where the root of $T$ is the rightmost node at the level $i$ such that $2^i$ is larger than the current number of points which have been inserted into the stream (where we think of leaf nodes as being at level $1$). On the $j$-th insertion of a point point $p$ in the stream, we add the $j$-th leaf node of $\overline{T}$ to the tree $T$, possibly resulting in the root of $T$ being shifted one level up $\overline{T}$, and we additionally add any nodes to $T$ which lie on the root-to-leaf path between the $j$-th leaf and new root of $T$; note that updating the root and constructing such nodes requires at most $O(\log m)$ time.  We then add the information for the point $p$ in the $j$-th leaf node, and begin propagating  the point $p$ up the path to the root (the method for propagating a point up a path is described below). When a point $q \in \cX$ is deleted, we simply delete $q$ from the leaf $\ell$ corresponding to $q$, and then propagate the removal of $q$ up the tree from $\ell$ by deleting $q$ from each set $S_v$ such that $q \in S_v$. 

In what follows, we now describe how insertions and deletions of points are propagated in the tree.

\paragraph{Insertions.}
Once a new point $p \in \cX$ is added to the stream, and we create a new rightmost leaf-node $\ell$ and add $p$ to that node, we will then attempt to insert $p$ into the sets $S_v$ for each $v$ on the path from $\ell$ to the root of $T$, until either we reach the root of $T$, we arrive at a vertex $v$ such that $N_r(p) \cap U_v$ is non-empty (in which case, $p$ is covered by a point in $U_v$), or we arrive at a vertex $v$ with $|U_v| \geq k+1$ (in which case we will add $p$ to the queue $Q_v$). 

Formally, starting with $v = \ell$, we perform the following steps. First, if $|U_v| \geq k+1$, we add $p$ to $Q_v$ and terminate; otherwise, we add $p$ to $S_v$, assign it a unique identifier in $[2k+2]$ for $S_v$, and compute the two neighborhood sets $N_r(p) \cap U_v$ and $N_r(p) \cap (S_v\setminus U_v)$, storing the results in two binary search trees as described above. If $N_r(p) \cap U_v = \emptyset$, we add $p$ to $U_v$. Additionally, for each $q \in N_r(p) \cap S_v$, we add $p$ to the relevant neighborhood list of $q$, namely either the uncovered neighbors of $q$ if we added $p$ to $U_v$, or the covered neighbors of $q$ otherwise.  If $N_r(p) \cap U_v \neq \emptyset$, we terminate the procedure here. Then if $v$ is the root, we terminate the procedure, otherwise we update $v \leftarrow \ttx{parent}_T(v)$ and continue up the tree. The procedure to insert a point $p$ is described below in Algorithm \ref{alg:insDet}, where the algorithm Propoagate is called beginning with the leaf $v = \ell$ corresponding to the point $p$. 

\begin{algorithm}[H]
\DontPrintSemicolon
	\caption{Propagate$(p,v)$} \label{alg:insDet}
\While{processing of $p$ not yet terminated}{

\If{$|U_v| \geq k+1$}{
    Add $p$ to $Q_v$. \;
    Terminate processing of $p$. 
}\Else{
Add $p$ to $S_v$, and assign $p$ an identifier within $v$ from the set $[2k+1]$. \;
Compute the sets $N_r(p) \cap U_v$ and $N_r(p) \cap (S_v\setminus U_v)$ if they were not already computed. \;
\If{$N_r(p) \cap U_v = \emptyset$}{
    Add $p$ to $U_v$. \;
    For each $q \in N_r(p) \cap S_v$, add $p$ to $N_r(q) \cap U_v$ (the uncovered neighborhood list of $q$). \;
}
\Else{
For each $q \in N_r(p) \cap S_v$, add $p$ to $N_r(q) \cap (S_v \setminus U_v)$ (the covered neighborhood list of $q$). \;
 Terminate processing of $p$. 
}

}
\If{$v$ is the root of $T$}{
    Terminate processing of $p$.  \; 
    }
    \Else{
$v \leftarrow  \ttx{parent}_T(v)$. \;
}

}
\end{algorithm}

\paragraph{Deletions.} We now describe how to delete a point $q \in \cX$. Let $\ell$ be the leaf node in $T$ corresponding to $q$. Firstly, if $q \in Q_v$ for some vertex $v$, we delete $q$ from $Q_v$. Then, for each vertex $v$ with $q \in S_v$, beginning with the $v$ highest in the tree, we perform the following. First, if $q \notin U_v$, we simply remove $q$ from $S_v$, and delete $q$ from the relevant neighborhood lists of all points in $N_r(q) \cap S_v$. If $q \in U_v$, we also remove $q$ from $S_v$, and delete $q$ from the  neighborhood lists of all points in $N_r(q) \cap S_v$; however, in this case, after removing $q$, we then attempt to reinsert each point $p \in N_r(q) \cap S_v$ by propagating it up the tree as follows. Firstly, for any such  $p \in N_r(q) \cap S_v$, if $N_r(p) \cap U_v \neq \emptyset$ even after removing $q$, we do nothing for $p$ as it is still covered by some point in $U_v$. Otherwise, we add $p$ to $U_v$, move $p$ from the set of covered neighbors to the set of uncovered neighbors of each $q \in N_r(p) \cap S_v$. We then attempt to insert $p$  attempt to insert $p$ into the parent $u$ of $v$ as described in the insertion paragraph above, however, if $|U_v| \geq k+1$ (as the result of several vertices becoming uncovered after the removal of $q$), or if $|U_u| \geq k+1$, then we instead add $p$ to the queue $Q_u$ at the parent. 

Once all such newly uncovered points in the vertex $v$ have been processed, we move to the queue $Q_v$. Specifically, $|Q_v| > 0$, and if Property $4$ in Figure \ref{fig:invar} no longer holds (namely, if $v$ and both of the children of $v$ have fewer than $k+1$ uncovered points), we then add the points from $Q_v$ into $S_v$, following the steps above for inserting a point into $S_v$, until either $Q_v$ is empty or the number of uncovered points at $v$ exceeds $k$. After this step is complete, we move on to the next vertex $v'$ (either the left or right child of $v$) on the path down to the leaf $\ell$, and delete $q$ from $S_{v'}$ in the same fashion as just described. The algorithm to delete a point $q$ is given concretely in Algorithm \ref{alg:delDet}. 

\begin{algorithm}[H]
\DontPrintSemicolon
	\caption{Delete$(q)$} \label{alg:delDet}
	$\ell :=$ leaf corresponding to $q$. \;
	Let $\ell = v_1 ,v_2,\dots,v_t = \ttx{root}(T)$ be the ordered vertices on the leaf-to-root path between $\ell$ and the root of $T$. \;
	 Set $i$ to be the largest $i \in [t]$ such that $q \in S_{v_i}$. \; 
	 \If{$q \in Q_{v_{i+1}}$}{
	 Delete $q$ from $Q_{v_{i+1}}$. \;
	 }
	\While{$i \geq 1$}{
	Delete $q$ from $S_{v_i}$.\;
	 \If{$q \in U_{v_{i}}$}{
	 Delete $q$ from $U_{v_i}$. \;
	 \For{each $p \in N_r(q) \cap S_{v_i}$}{
	 Update $N_r(p) \cap U_{v_i}$ (the uncovered neighborhood list of $p$) by removing $q$. \;
	 \If{$N_r(p) \cap U_{v_i} = \emptyset$}{
	 Add $p$ to $U_{v_i}$. \; 
	 For each $z \in N_r(p) \cap S_{v_i}$, move $p$ from the set of covered neighbors to the set of uncovered neighbors of $z$ at $v_i$. \;
	  \If{$|U_{v_i}| \leq k+1$}{
	 Call Propagate$(p,v_{i+1})$. \; 
	 }\Else{
	 Add $p$ to $Q_{v_{i+1}}$. \;
	 }
	 }
	 }
	}\Else{
	 For each $p \in N_r(q) \cap S_{v_i}$, update the set $N_r(p) \cap (S_{v_i} \setminus U_{v_i})$ (the covered neighborhood list of $p$) by removing $q$. \;
	}
	$i \leftarrow i-1$. \; 
	}
	$i \leftarrow 1$
		\While{$i \leq t$}{
		\While{$v_i$ is not halted and $|Q_{v_i}| > 0$}{
		Remove the first point $z$ from the queue $Q_{v_i}$, and call Propagate$(z,v_i)$. 
		}
		$i \leftarrow i+1$. \;
		}
	
	\end{algorithm}

\begin{proposition}\label{prop:det1}
After processing each update in the stream, all properties in Figure \ref{fig:invar} hold. 
\end{proposition}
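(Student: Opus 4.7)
The plan is to proceed by induction on the number of stream updates processed, assuming the five invariants of Figure~\ref{fig:invar} hold after update $t-1$ and verifying them after update $t$. The base case with an empty tree is vacuous. For the inductive step, I would split into the two cases (insertion and deletion) and, within each, inspect each invariant by identifying precisely which lines of Algorithm~\ref{alg:insDet} or Algorithm~\ref{alg:delDet} could affect it.

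For an insertion of $p$, which triggers Propagate$(p,\ell)$ walking up from the new leaf, Properties~1, 3, 4, and 5 are straightforward. Property~1 is immediate because Propagate only inserts into $S_v$ and $U_v$ and never removes. Property~3 follows from the explicit test $N_r(p) \cap U_v = \emptyset$ before adding $p$ to $U_v$. Property~4 is maintained because $p$ enters $S_v \setminus U_v$ only in the branch where $N_r(p) \cap U_v \neq \emptyset$ is established, and other members of $S_v\setminus U_v$ are unaffected. Property~5 is immediate since the only insertion branch that queues $p$ at $Q_v$ is guarded by the test $|U_v| \geq k+1$.

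The most delicate invariant is Property~2's bound $|S_v| \leq 2k+2$. I would partition $S_v = A_{v_\ell} \sqcup A_{v_r}$ according to which child last propagated each point, and show $|A_w| \leq k+1$ for each child $w$. The key observation is that, by Property~1, any $p$ that was ever added to $A_w$ was in $U_w$ at the moment of propagation and remains in $U_w$ until its stream deletion; hence $A_w \subseteq U_w$ at all times. By inspection of both Propagate (the insertion flow) and the delete-cascade call to Propagate, a propagation from $w$ to $v$ only takes place immediately after $p$ is added to $U_w$ under the condition $|U_w| \leq k+1$ at that moment. Thus each growth moment of $A_w$ satisfies $|A_w| \leq |U_w| \leq k+1$, and between growth moments $|A_w|$ can only decrease via stream deletions; summing over the two children yields $|S_v|\leq 2k+2$, and $|U_v|\leq |S_v|$ holds since $U_v\subseteq S_v$ by construction.

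For a deletion of $q$, Algorithm~\ref{alg:delDet} performs a leaf-to-root cascade sweep followed by a bottom-up queue-draining pass. Properties~1, 3, 4 go through analogously: the cascade removes exactly $q$; when promoting a newly uncovered $p$ to $U_{v_i}$ the algorithm re-tests $N_r(p) \cap U_{v_i}$ against the current (possibly already modified) state, so Property~3 survives even though multiple covered neighbors of $q$ may simultaneously become candidates; and Property~4 is preserved because the cascade iterates over all of $N_r(q) \cap S_{v_i}$. Property~2 is preserved by the argument above, since deletions only remove from $S_v$ or trigger propagations that respect the $|U_w| \leq k+1$ gate. The main obstacle is Property~5 in the queue-draining pass: I must show that, for each $v_i$, once its inner drain loop exits we have $|Q_{v_i}|=0$ or $v_i$ is halted, and that this condition survives all later work in the update. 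The critical structural fact is that Propagate walks strictly \emph{upward}, so draining $Q_{v_j}$ for $j>i$ never changes $|U_{v_i}|$ nor the $|U|$ values of $v_i$'s two children (in particular the off-path sibling is untouched throughout the entire deletion). Combined with the bottom-up order of the drain, this shows that the halted status established when the while loop exits at $v_i$ cannot be invalidated later, completing Property~5 at the end of the update.
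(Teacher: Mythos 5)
Your proof is correct and follows essentially the same route as the paper's: an invariant-by-invariant check, with Property~2 reduced to a per-child bound of $k+1$ via the gating condition $|U_w|\leq k+1$ at each propagation, and Property~5 handled through the queueing guard and the drain loop. Your treatment is in fact somewhat more careful than the paper's — the explicit observation that $A_w\subseteq U_w$ persists by Property~1, so that each growth moment forces $|A_w|\leq|U_w|\leq k+1$, and the upward-only/bottom-up argument for why the drain loop's exit condition survives the rest of the deletion, are both left implicit in the paper.
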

\begin{proof}
For the first property, this follows directly from the fact that the only time a point $q$ is every removed from either $S_v$ or $U_v$ is when Delete$(q)$ is called. For the second property, since points are only added to $S_v$ for a non-leaf node $v$ when they are either contained in either $U_{v_\ell}$ or $U_{v_r}$, it suffices to show that $|S_v \cap U_w| \leq k+1$ where $w$ is either the left or right child of $v$. Note that the size of $U_w$ can only exceed $k+1$ when we delete a point $q$ from $U_w$, causing more than one point in $S_w$ which was previously covered by $q$ to be added to $U_w$. In this case, note that we only call Propogate$(p,v)$ for such points $p$ (inside of Algorithm \ref{alg:delDet}) when $|U_w| \leq k+1$, otherwise we add $p$ to the queue $Q_v$, along with all excess points added to $U_w$ beyond the first $k+1$ on that step. Thereafter, no further points can be added to $S_w$ until $|U_w| < k+1$. Therefore, at each step at most $k+1$ points in $U_w$ are contained in $S_v$, which demonstrates the second property. 

The third property holds because we only add a point $p$ to $U_v$ if is distance at least $r$ from all points which were at $U_v$ at that time. The fourth property holds because, for each point $p \in S_v$, we only fail to add it to $U_v$ if $N_r(p) \cap U_v \neq \emptyset$. The final point holds because a point is only added to $Q_v$ when $v$ is halted, and $v$ can only cease to be halted when a point $q$ is deleted from $v$, in which case all points from $Q_v$ are removed from $Q_v$ and inserted into $S_v$ (at the end of Algorithm \ref{alg:delDet}) until either $Q_v$ is empty or $v$ is again halted. 

for any vertex $v$ note that a point $p$ can only be inserted into $S_v$ during a call to the Propagate$(p, \cdot)$ procedure. WLOG the leaf corresponding to $\ell$ is contained in the left subtree of $v$. Then 

\end{proof}

\begin{proposition}\label{prop:det2}
The total update time of the algorithm, on a sequence of $m$ insertions and deletions, is at most $O(km \log k \log m)$. Moreover, the worst-case runtime of any insertion is at most $O(k \log k \log m)$, and the worst case update time of any deletion is $O(k^2 \log k \log m)$. . 
\end{proposition}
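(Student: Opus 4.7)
The plan is to analyze insertions, deletions, and the total running time separately, exploiting three structural facts established for the data structure: the tree $T$ has depth $O(\log m)$; every set $S_v$ has size at most $2k+2$ (Property 2 of Figure \ref{fig:invar}); and once a point enters $S_v$ or $U_v$ it remains there until that point is deleted from the stream (Property 1). Every operation on a neighborhood list, which is stored as a binary search tree keyed on the $O(\log k)$-bit identifiers, costs $O(\log k)$.

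For an insertion of $p$, the call $\ttx{Propagate}(p,\ell)$ climbs at most $O(\log m)$ ancestors along the leaf-to-root path of $T$. At each visited node $v$, computing $N_r(p) \cap S_v$ takes at most $2k+2$ distance queries, and registering $p$ in the two neighborhood lists of each of the $O(k)$ nearby points costs $O(k \log k)$. Summing gives $O(k \log k)$ per level and $O(k \log k \log m)$ per insertion in the worst case.

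For a deletion of $q$, the outer loop of Algorithm \ref{alg:delDet} visits the $O(\log m)$ ancestors $v_1, \ldots, v_t$ containing $q$. At each $v_i$, removing $q$ from the neighborhood lists of the $O(k)$ points in $N_r(q) \cap S_{v_i}$ takes $O(k \log k)$; when $q \in U_{v_i}$, each newly uncovered $p$ additionally costs $O(k \log k)$ to update the covered/uncovered lists of its neighbors at $v_i$ and then triggers a call $\ttx{Propagate}(p, v_{i+1})$. The delicate step is bounding the cumulative $\ttx{Propagate}$ work. Using the ``if not already computed'' clause of Algorithm \ref{alg:insDet}, work at a level $u$ is $O(k \log k)$ only when $p$ is freshly inserted into $S_u$, and $O(\log k)$ otherwise; moreover, Property 1 guarantees that each pair $(p, u)$ undergoes at most one fresh insertion during $p$'s lifetime. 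Within a single deletion there are at most $O(k \log m)$ newly uncovered instances across all $v_i$, and careful charging of the upward-propagation cost against this fresh-insertion budget yields a worst-case deletion bound of $O(k^2 \log k \log m)$. The final drain of the queues $Q_{v_i}$ at the end of Algorithm \ref{alg:delDet} is absorbed by the same accounting, since each dequeued $\ttx{Propagate}$ reuses the per-level bound above.

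For the total $O(mk \log k \log m)$ bound I would use a global amortized argument rather than summing the worst-case per-operation bounds. Each of the at most $m$ points ever inserted lives in at most $O(\log m)$ distinct sets $S_v$, and by Property 1 each such $(p, v)$-membership is entered at most once and exited at most once, at cost $O(k \log k)$ for the associated neighborhood computation and updates. Summing over all such memberships gives $O(mk \log k \log m)$ fresh-insertion work, which dominates the lightweight per-event bookkeeping. The principal obstacle is the amortized collapse of what would naively be a $\log^2 m$ factor in the worst-case deletion; the argument hinges on Property 1 together with the idempotence clause of $\ttx{Propagate}$ to guarantee that every fresh $S_v$ insertion is paid for exactly once globally, so long propagation chains do not compound across the stream.
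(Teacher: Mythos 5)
Your proposal is essentially the same credit-charging argument as the paper's: you charge each point $O(k\log k)$ per level of $T$, using Property~1 of Figure~\ref{fig:invar} to bound each $(p,v)$-membership establishment to one occurrence, the bound $|S_v|\leq 2k+2$ to cap per-level work, and the observation that a deletion causes at most $O(k)$ newly uncovered points per level (hence $O(k\log m)$ fresh-insertion events per deletion) to get the $O(k^2\log k\log m)$ worst-case deletion bound. Two small inaccuracies worth noting: the factor you must prevent from compounding in the amortized bound is an extra factor of $k$ (naively $m\cdot O(k^2\log k\log m)$), not a $\log^2 m$ factor; and the ``if not already computed'' clause of Algorithm~\ref{alg:insDet} is not really load-bearing, since every call $\texttt{Propagate}(p,u)$ made by the deletion routine or the queue drain inserts $p$ into $S_u$ for the first time (a point that becomes uncovered at $v$ was previously covered there and hence never reached $u$, and queued points are never in $S_u$), so the paper's argument does not need to appeal to an idempotence case at all. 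These quibbles aside, the structure and conclusions match the paper's proof.
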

\begin{proof}
For each point $p$ which arrives in the stream, we give $p$ $O(k \log k)$ credits for each of the $O(\log m)$ levels in the tree. For each level $i$, let $v$ be the node at level $i$ which contains $p$ in its subtree. Note by the prior proposition that $p$ is added
to each of $S_v$ and $U_v$ exactly once. When it is first added to $S_v$, we require $O(k)$ time to search through all points in $S_v$ and check if they are contained in $N_r(p)$. Since $|S_v| = O(k)$ at all times, this requires $O(k)$ distance computations. Then for each such point in $S_v \cap N_r(p)$, we add them to a binary search tree storing the neighborhoods for $p$, which requires $O(k \log k)$ total time. If $p$ is also added to $U_v$ at the same time it is added to $S_v$, no further work is charged to $p$ at this time. Otherwise, $p$ may later be added to $U_v$ during a Delete$(q)$ call because $p$ becomes uncovered at this point. We then charge $p$ another $O(k \log n)$, which is the runtime cost of moving $p$ from the covered neighborhood set to the uncovered neighborhood set of each of its neighbors in $S_v$. Moreover, note that each point is added to a queue at level $i$ at most once.   This covers the cost of propagating each point to every level in the tree. 

When we delete a point $q$, we consider at most $O(k)$ neighbors in each set $N_r(q) \cap S_v$ (for each vertex $v$ with $q \in S_v$). For each such neighbor $p$, if they do not become uncovered at that point, we do constant work to verify that $N_r(p) \cap U_v$ is still non-empty. Otherwise, if $p$ does become uncovered, it enters $U_v$ for the first time, and the work required to update the neighborhood sets of neighbors of $p$ is charged to the point $p$ as described above. Thus, a total of $O(k \log m)$ is charged to $q$ for a deletion of $q$. This completes the proof of the total runtime.

To see the worst-case runtime, note that the worst case runtime of an insertion is at most $O(k \log m \log k)$, coming from propagating a point $p$ up all levels of the tree. For deletions, note that deleting $q$ causes at most $O(k)$ points to become uncovered at each level, and for each level at most $k+1$ newly uncovered points can be propagated up to the next level on that step. Thus, at most $O(k \log m)$ total calls to Propagate are made for freshly uncovered points, which requires a total runtime of $O(k^2 \log m \log k)$. Finally, for the runtime at the end of Algorithm \ref{alg:delDet} required to process queues $Q_v$ of vertices $v$ which became no longer halted after the deletion, we note that  $|Q_v| = O(k)$ follows naturally because $Q_v \subset U_{v_\ell} \cup U_{v_r}$, therefore at most $k$ vertices are propagated out of each queue within a single call to Delete $(q)$, each of which runs in $O(k \log m \log k)$ time, which completes the proof. 
\end{proof}

\begin{theorem}\label{thm:Deterministic} 

There is a deterministic algorithm that, on a sequence of insertions and deletions of points from an arbitrary metric space $(\cX,d)$, maintains a $(2+\eps)\lceil \log (n(1+\eps)) \rceil$-approximation to the optimal $k$-centers clustering, where $\eps>0$ is any constant. The amortized update time of the algorithm is $O(k \log \Delta \log n \log k   )$. The worst-case update time of any insertion is $O(k \log \Delta \log n  \log k)$, and the worst-case update time of any deletion is $O(k^2 \log \Delta \log n  \log k  )$. 
\end{theorem}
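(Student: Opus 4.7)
The plan is to assemble the single-scale data structure analyzed in Propositions~\ref{prop:det1} and \ref{prop:det2} into a full $k$-centers algorithm via three ingredients: $(i)$ running $O(\eps^{-1}\log\Delta)$ copies of the single-scale data structure at geometric scales $r\in\{r_{\min},(1+\eps)r_{\min},\ldots,r_{\max}\}$, $(ii)$ an approximation analysis relating the cost of the smallest ``valid'' scale's output to $\OPT$, and $(iii)$ a periodic rebuilding mechanism that replaces the dependence on the stream length $m$ with one on the active-set size $n$.

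First, I would establish the approximation guarantee of a single-scale instance. Call the scale-$r$ instance \emph{valid} at time $t$ if $|U_{\text{root}}|\le k$ and every queue $Q_v$ in $T$ is empty; the output is then the set $U_{\text{root}}$, which contains at most $k$ centers. Using the invariants in Figure~\ref{fig:invar} together with the propagation rules in Algorithms~\ref{alg:insDet} and \ref{alg:delDet}, I would argue that whenever validity holds, every active point $p$ admits a chain $p=x_0,x_1,\ldots,x_L$ with $L\le\lceil\log m\rceil$, where $v_1,\ldots,v_L$ is the leaf-to-root path from $p$'s leaf, and at each level $x_j$ is either equal to $x_{j-1}$ (when $x_{j-1}\in U_{v_j}$) or is some covering point in $U_{v_j}\cap N_r(x_{j-1})$ (by Property~4), with $x_L\in U_{\text{root}}$. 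By the triangle inequality, $d(p,U_{\text{root}})\le r\lceil\log m\rceil$. Conversely, if the scale-$r/(1+\eps)$ instance is \emph{not} valid, then either $|U_{\text{root}}|>k$, giving via Property~3 a set of $k{+}1$ active points pairwise at distance greater than $r/(1+\eps)$, or some $Q_v\ne\emptyset$, which by Property~5 again exhibits $k{+}1$ such separated points (witnessed by one of $U_v,U_{v_\ell},U_{v_r}$). In either case, pigeonhole forces two of these points into the same cluster of any $k$-clustering, so $\OPT\ge r/(2(1+\eps))$. Selecting the smallest valid scale thus yields approximation ratio $2(1+\eps)\lceil\log m\rceil$.

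To convert $\lceil\log m\rceil$ into $\lceil\log(n(1+\eps))\rceil$, I would rebuild all $O(\eps^{-1}\log\Delta)$ copies whenever the number of insertions since the last rebuild causes the effective $m$ to exceed $(1+\eps)|P^t|$: at a rebuild, each data structure is discarded and the current active set is reinserted into a fresh instance. Between rebuilds we always have $m\le(1+\eps)|P^t|\le(1+\eps)n$, as required. A rebuild at an instant with $n_0=|P^t|$ costs $O(\eps^{-1}\log\Delta\cdot n_0\cdot k\log k\log n_0)$ by Proposition~\ref{prop:det2}, and is triggered only once per $\Omega(\eps n_0)$ operations, so its amortized contribution is $O(\eps^{-2}\log\Delta\cdot k\log n\log k)$ per update. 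Combined with the per-update amortized and worst-case bounds of Proposition~\ref{prop:det2} summed across the $O(\eps^{-1}\log\Delta)$ scales, and absorbing the constant $1/\eps$ factors, this yields the stated $O(k\log\Delta\log n\log k)$ amortized update time, the $O(k\log\Delta\log n\log k)$ worst-case insertion time, and the $O(k^2\log\Delta\log n\log k)$ worst-case deletion time. After re-parameterizing $\eps$, the approximation factor becomes $(2+\eps)\lceil\log(n(1+\eps))\rceil$.

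The main obstacle is rigorously establishing the chain-existence property used in the approximation bound. While intuitive, it requires careful tracking of the interaction between deletions that uncover points, subsequent propagation cascades, and the queue-draining loop at the end of Algorithm~\ref{alg:delDet}: one must show that validity at the end of an update is equivalent to every active point having been fully processed---either ending in $U_{\text{root}}$ or being covered inside some $S_v$---with no dangling queue contents ``blocking'' intermediate levels. I expect this can be proved by induction on time, showing that any leftover queue entry at the end of an update forces some ancestor $U$-set to have size at least $k+1$, so that non-validity is exactly characterized by a pairwise-separated witness, and validity therefore guarantees the required chain for every active point.
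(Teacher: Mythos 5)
Your proposal follows essentially the same route as the paper: run $O(\eps^{-1}\log\Delta)$ single-scale instances, certify $\OPT \ge r/(2(1+\eps))$ from any halted vertex (equivalently, a non-empty queue or $|U|>k$ somewhere, via Properties 3 and 5), trace the pointer chain up the tree to bound the cost of the smallest valid scale by $r\lceil\log m\rceil$, and rebuild periodically to replace $m$ by $(1+\eps)n$. The chain-existence step you flag as the main obstacle is handled in the paper exactly as you anticipate: when no vertex is halted, Property 5 forces every queue to be empty, hence $S_v = U_{v_\ell}\cup U_{v_r}$ at every internal node, and the chain exists by Property 4.

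There is one genuine gap, though: your rebuilding mechanism as stated discards all data structures and reinserts the entire active set \emph{at a single update}, and you only amortize that cost over the preceding $\Omega(\eps n_0)$ operations. That preserves the amortized bound but makes the update at the rebuild instant cost $\Theta(\eps^{-1}\log\Delta\cdot n_0 k\log k\log n_0)$, which destroys the worst-case per-insertion and per-deletion bounds that the theorem explicitly claims and that you assert you obtain. The paper avoids this by deamortizing the rebuild: the new tree is constructed in the background, inserting $O(1/\eps)$ buffered active points per update over the $\eps n_i$ steps between rebuilds, so that no single update ever pays the full reconstruction cost. You need this (or an equivalent incremental-rebuild argument) to legitimately claim the stated worst-case update times; without it only the amortized bound survives.
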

\begin{proof}
First, we argue that the Propositions \ref{prop:det1} and \ref{prop:det2} imply the existence of an algorithm that maintains a $(2+\eps)\lceil \log m \rceil$-approximation to the optimal $k$-centers clustering, in amortized update time  $O(\frac{\log \Delta \log m \log k}{\eps}k  )$ and worst-case update time  $O(\frac{\log \Delta \log m}{\eps}k^2  )$, where $m$ is the length of the stream. To see this, we run the above algorithm with $r = (1+\eps)^i r_{\min}$ for each $i=0,1,2,\dots\frac{1}{\eps}\log \Delta$.Any $r$ that contains a halted vertex $v$ gives a certificate that the cost of the optimal solution is at most $r/2$, since a halted vertex only occurs when some set $U_w$ contains at least $k+1$ points which are pairwise distance at least $r$ apart. It suffices to show that the smallest scale $r$ without a halted vertex gives a clustering with cost at most $r \lceil \log m \rceil$. The cluster centers in question are the points in $U_{\ttx{root}(T)}$ where $\ttx{root}(T)$ is the root of $T$ (note there must be at most $k$ such points since $v$ is not halted. Now note that every queue in this data structure must be empty by Property $5$ of Figure \ref{fig:invar}, thus $S_v = U_{v_\ell} \cup U_{v_r}$ for every non-leaf vertex $v$. Thus, we can trace a series of pointers $p= p_1,p_2,\dots,p_{\lceil \log m \rceil}$, where $p_i = p_{i+1}$ if $p_{i}$ is contained in set $U_{v_i}$, where $v_i$ is the vertex at level $i$ up the tree from the leaf containing $p$, otherwise we set $p_i$ to be any point in $U_{v_{i}}$ which is distance at most $r$ away from $p$ (which exists due to Property $r$ of Figure \ref{fig:invar}). In each case, $d(p_i,p_{i+1}) \leq r$, thus $d(p,p_{\lceil \log m \rceil}) \leq \lceil \log m \rceil r$, from which the claim follows because $p_{\lceil \log m \rceil}$ is contained in the clustering $U_{\ttx{root}(T)}$. 

It remains to limit the height of the tree from $\lceil \log m \rceil$ to $\lceil \log n(1+\eps) \rceil $. To do this, we rebuild the tree (and the entire data structure) whenever the number of new updates exceeds a $\eps$ fraction of the number of active points on the last time the tree was rebuilt. In other words, if the last rebuild occurred on on time $t_i$ when there were $n_i$ active points, after the next $\eps n_i$ updates we delete the current tree, and take all active points remaining and insert them into a new tree (which now contains only active points). The runtime can then be amortized over the $\eps n_i$ intermediate updates. So as to not effect the worst-case update time, we use the standard procedure of running the rebuilding procedure in the background (i.e. we buffer the insertions into the new tree). Namely, on each of the $\eps n_i$ intermediate steps, we take $2/\eps$ arbitrary active points in the current set and add them to the new tree. Thus, by the time we have arrived at the $\eps n_i$ step, at most an $\eps$ fraction of the leafs in the newly rebuilt tree correspond to deleted points. Thus, the height of the tree is always bounded by $\lceil \log n(1+\eps) \rceil$, thus the total approximation is at most $(2+\eps)\lceil \log n(1+\eps) \rceil$, which is as desired.
\end{proof}

\section{Lower Bound for Arbitrary Metrics}\label{sec:LB}
We now demonstrate that any algorithm which approximates the optimal $k$-centers cost, in an arbitrary metric space of $n$ points, must run in $\Omega(nk)$ time. Specifically, the input to an algorithm for $k$-centers in arbitrary metric spaces is both the point set $P$  \textit{and} the metric $d$ over the points. In particualr, the input can be represented via the distance matrix  distance matrix $\bD \in \R^{n \times n}$ over the point set $P$, and the behavior of such an algorithm can be described by a sequences of adaptive queries to $\bD$. 

The above setting casts the problem of approximating the cost of the optimal $k$-centers clustering as a \textit{property testing} problem  \cite{goldreich1998property,goldreich2017introduction}, where the goal is to solve the approximation problem while making a small number of queries to $\cD$. Naturally, the query complexity of such a clustering algorithm lower bounds its runtime, so to prove optimality of our dynamic $k$-centers algorithms it suffices to focus only on the query complexity. In particular, in what follows we will demonstrate that any algorithm that approximates the optimal $k$-centers cost to any non-trivial factor with probability $2/3$ must query at least $\Omega(nk)$ entries of the matrix.  In particular, this rules out any fully dynamic algorithm giving a non-trivial approximation in $o(k)$ amortized update time for general metric spaces. 

Moreover, we demonstrate that this lower bound holds for the $(k,z)$-clustering objective, which includes the well studied $k$-medians and $k$-means. Recall that this problem is defined as outputting a set $\cC \subset \cX$ of size at most $k$ which minimizes the objective function
\[ \cost_{k,z}(\cC) =   \sum_{p \in P} d^z(p,\ell(p))    \]
where $\ell(p)$ is the cluster center associated with the point $p$. 
Note that  $(\cost_{k,z}(\cC))^{1/z}$ is always within a factor of $n$ of the optimal $k$-centers cost. Thus, it follows that if $k$-centers cannot be approximated to any non-trivial factor (including factors which are polynomial in $n$) in $o(nk)$ queries to $\bD$, the same holds true for $(k,z)$-clustering for any constant $z$. Thus, in the proofs of the following results we focus solely on proving a lower bound for approximation $k$-centers to any factor $R$, which will therefore imply the corresponding lower bounds for $(k,z)$-clustering. 

We do so by first proving Theorem \ref{thm:LBBig}, which gives a $\Omega(nk)$ lower bound when $n = \Omega(k \log k)$. Next, in Proposition \ref{prop:lb}, we prove a general $\Omega(k^2)$ lower bound for any $n > k$, which will complete the proof of Theorem \ref{thm:LBMain}. We note that the proof of Proposition \ref{prop:lb} is fairly straightforward, and the main challenge will be to prove Theorem \ref{thm:LBBig}.

\begin{theorem} \label{thm:LBBig}
Fix and $k \geq 1$ and $n > C k \log k$ for a sufficiently large constant $C$. Then any algorithm which, given oracle access the distance matrix $\cD \in \R^n$ of a set $X$ of $n$ points in a metric space, determines correctly with probability $2/3$ whether the optimal $k$-centers cost on $X$ is at most $1$ or at least $R$, for any value $R >1$, must make at least $\Omega(k n)$ queries in expectation to $\cD$. The same bound holds true replacing the $k$-centers objective with $(k,z)$-clustering, for any constant $z>0$. 
\end{theorem}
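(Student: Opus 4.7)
The plan is to apply Yao's minimax principle to a carefully constructed hard distribution, reducing the query-complexity lower bound to a coupling argument together with a per-point counting lemma. Sample $c : [n] \to [k]$ by drawing $c(p) \sim \mathrm{Unif}([k])$ independently for each $p$. Under distribution $\mathcal{D}_1$, define the ultrametric $d_c(x,y) = \epsilon$ if $c(x) = c(y)$ and $d_c(x,y) = R$ otherwise (take $\epsilon < 1$ for the $k$-centers case and $\epsilon = (2n)^{-1/z}$ for the $(k,z)$-clustering case, with $R$ rescaled accordingly). Then $\OPT_{k,\infty}(\mathcal{D}_1) \leq \epsilon \leq 1$ and $\OPT_{k,z}(\mathcal{D}_1) \leq n\epsilon^z \leq 1$ by placing one center in each non-empty cluster. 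Under $\mathcal{D}_2$, additionally sample $i^* \in [n]$ uniformly and set $d(i^*, y) = R$ for every $y \neq i^*$; the triangle inequality is preserved because $R$ dominates every other distance. On the \emph{good event} that all $k$ cluster labels appear in $c$ and that $|c^{-1}(c(i^*))| \geq 2$---which by standard coupon-collector and Chernoff estimates holds with probability $1-o(1)$ precisely when $n = \Omega(k \log k)$---any feasible $k$-center solution for $\mathcal{D}_2$ must either include $i^*$ (leaving at most $k-1$ centers for the $k$ non-empty original clusters, forcing one of them to be entirely uncovered at distance $\geq R$) or exclude $i^*$ (leaving it at distance $\geq R$ from every chosen center), so $\OPT_{k,\infty}(\mathcal{D}_2) \geq R$ and $\OPT_{k,z}(\mathcal{D}_2) \geq R^z$. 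Take the hard distribution to be $\tfrac12\mathcal{D}_1 + \tfrac12\mathcal{D}_2$ conditioned on the good event; by Yao's principle it suffices to show that no deterministic algorithm $\cA$ making $o(nk)$ queries distinguishes $\mathcal{D}_1$ from $\mathcal{D}_2$ with probability $\geq 2/3$.

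The first step is a coupling. For any $\cA$ and any realization of $c$, the transcripts of $\cA$ on $(\mathcal{D}_1, c)$ and on $(\mathcal{D}_2, c, i^*)$ agree verbatim until the first query $(i^*, y)$ or $(y, i^*)$ with $c(y) = c(i^*)$, since only such queries receive different responses in the two distributions. Define
\[
R_c \;=\; \{\, p \in [n] : \cA \text{ makes some query involving } p \text{ that returns } \epsilon \text{ while running on } (\mathcal{D}_1, c)\,\}.
\]
Whenever $i^* \notin R_c$, the transcripts and outputs of $\cA$ on $\mathcal{D}_2$ and $\mathcal{D}_1$ coincide, and $\cA$ fails to distinguish the two distributions. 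A standard total-variation computation then shows that $\cA$ can achieve distinguishing advantage $\geq 1/3$ only if $\mathbb{E}_c[|R_c|] \geq \Omega(n)$.

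The main technical step, and the one I expect to be the chief obstacle, is the matching counting lemma: for any $\cA$ making $q$ queries, $\mathbb{E}_c[|R_c|] \leq O(q/k)$ (or, up to constants, the weaker bound $\mathbb{E}_c[|R_c|] \leq O(\sqrt{nq/k})$ already suffices to conclude $q = \Omega(nk)$). The non-adaptive case is immediate because each pair is same-cluster with probability exactly $1/k$. Adaptivity is subtle: an algorithm can exploit prior $R$-responses involving a point $p$ to rule out cluster labels for $c(p)$ and inflate the conditional probability of a future $\epsilon$-response on $p$. My plan is a per-point amortized argument. For each $p$, let $\tilde{q}_p$ denote the number of queries involving $p$, and consider the ordered sub-sequence $(p, y_1), (p, y_2), \ldots$ of such queries. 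By the independence of $c(p)$ from $\{c(y) : y \neq p\}$, conditional on the entire history of responses to queries not involving $p$ together with the first $j-1$ responses on $p$ all being $R$, the posterior of $c(p)$ is uniform on $[k] \setminus \{c(y_1), \ldots, c(y_{j-1})\}$, a set of at least $k-j+1$ labels, so the $j$-th query on $p$ returns $\epsilon$ with conditional probability at most $1/(k-j+1)$. Telescoping $\prod_{j=1}^{m}(1-1/(k-j+1)) = (k-m)/k$ then yields the tail bound $\Pr[\text{any of the first } m \text{ queries on } p \text{ returns } \epsilon] \leq m/k$. A Markov-style truncation on $\tilde{q}_p$ combined with a Cauchy--Schwarz aggregation using $\sum_p \mathbb{E}[\tilde{q}_p] = 2q$ then yields $\mathbb{E}_c[|R_c|] = O(\sqrt{nq/k})$, which combined with the coupling bound $\mathbb{E}_c[|R_c|] \geq \Omega(n)$ forces $q = \Omega(nk)$.

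Finally, the $(k,z)$-clustering statement follows from exactly the same construction, with the distinguishing gap of $R$ for the $(k,z)$-cost realized by setting the metric-scale parameter to $R^{1/z}$; the coupling and counting arguments apply verbatim. The hypothesis $n = \Omega(k \log k)$ enters only via the concentration required for the good event to hold with probability $1-o(1)$; the complementary regime $n = O(k \log k)$ is covered by the separate $\Omega(k^2)$ lower bound of Theorem~\ref{thm:LBMain}, which is established via a simpler non-adaptive argument that hides a single $\epsilon$-close pair in a distance matrix otherwise consisting of $R$'s.
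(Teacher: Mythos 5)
Your proposal is correct and follows essentially the same route as the paper's proof: the same hard distribution (a random $k$-partition with, in one case, a uniformly random point relocated to distance $R$ from everything), the same key observation that a query on a point with $j-1$ prior queries reveals its cluster with conditional probability at most $1/(k-j+1)$ (the paper's Claim~\ref{claim:lb2}), the same heavy/light truncation to aggregate this into an $O(q/k)$ bound on the expected number of "discovered" points, and the same coupling conclusion that an undiscovered planted point forces identical transcripts (the paper's Claims~\ref{claim:lb3} and~\ref{claim:lb4}, phrased there via a decision tree). Your one genuine streamlining is planting the far point at distance $R$ rather than $2R$, which keeps the supports of the two distributions' distance values identical and so removes the need for the paper's Claim~\ref{claim:lb1} characterizing the optimal algorithm's output rule.
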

\begin{proof}
We suppose there exists such an algorithm that makes at most an expected $k n/8000$ queries to $\cD$. By forcing the the algorithm to output an arbitrary guess for $c$ whenever it queries a factor of $20$ more entries than its expectation, by Markov's inequality it follows that there is an algorithm which correctly solves the problem with probability $2/3 - 1/20 > 6/10$, and always makes at most $kn/400$ queries to $\cD$.

\paragraph{The Hard Distribution.}
We define a distribution $\cD$ over $n \times n$ distance matrices $\cD$ as follows. First, we select a random hash function $h:[n] \to [k]$, a uniformly random coordinate $i \sim [n]$. We then set $\bD(h)$ to be the matrix defined by 
$\bD_{p,q}(h) = 1$ for $p \neq q$ if $h(p) = h(q)$, and $\bD_{p,q}(h) = R$ otherwise, where $R$ is an arbitrarily large value which we will later fix. We then flip a coin $c \in \{0,1\}$. If $c=0$, we return the matrix $\bD(h)$, but if $c = 1$, we define the matrix $\bD(h,i)$ to be the matrix resulting from changing every $1$ in the $i$-th row and column of $\bD(h)$ to the value $2R$. It is straightforward to check that the resulting distribution satisfies the triangle inequality, and therefore always results in a valid metric space. We write $\cD_0 ,\cD_1$ to denote the distribution $\cD$ conditioned on $c=0,1$ respectively. In the testing problem, a matrix $\bD \sim \cD$ is drawn, and the algorithm is allowed to make an adaptive sequence of queries to the entries of $\bD$, and thereafter correctly determine with probability $2/3$ the value of $c$ corresponding to the draw of $\bD$. 

Note that a draw from $\cD$ can then be described by the values $(h,i,c)$, where $h \in \cH = \{h': [n] \to [k] \}$, $i\in[n]$, and $c \in \{0,1\}$. Note that, under this view, a single matrix $\bD \sim \cD_0$ can correspond to multiple draws of $(h,i,0)$. Supposing there is a randomized algorithm which is correct with probability $6/10$ over the distribution $\cD$ and its own randomness, it follows that there is a deterministic algorithm $\cA$ which is correct with probability $6/10$ over just $\cD$, and we fix this algorithm now. 

Let $(d_1,p_1),(d_2,p_2),\dots,(d_t,p_t)$ be an adaptive sequence of queries and observations made by an algorithm $\cA$, where $d_i \in \{1,R,2R\}$ is a distance and $p_i \in \binom{n}{2}$ is a position in $\cD$, such that the algorithm queries position $p_i$ and observed the value $d_i$ in that position.

\begin{claim}\label{claim:lb1}
There is an algorithm with optimal query vs. success probability trade-off which reports $c=1$ whenever it sees an entry with value $d_i = 2R$, otherwise it reports $c=0$ d if it never sees a distance of value $2R$. 
\end{claim}
\begin{proof}
 To see this, first note that if $c=0$, one never sees a value of $2R$, so any algorithm which returns $c=0$ after observing a distance of size $2R$ is always incorrect on that instance. 

For the second claim, suppose an algorithm $\cA$ returned that $c=1$ after never having seen a value of $2R$. Fix any such sequence  $S= \{(d_1,p_1),(d_2,p_2),\dots,(d_t,p_t)\}$ of adaptive queries and observations such that $d_i \neq 2R$ for all $i=1,2,\dots,t$. We claim that $\pr{c=0 | S} \geq \pr{c=1 |S}$. To see this, let $(h,i,1)$ be any realization of a draw from $\cD_1$, and note that $\pr{(h,i,1)} = \pr{(h,i,0)} = \frac{1}{2 n } k^{-n}$. Let $F_0(S)$ be the set of tuples $(h,i)$ such that the draw $(h,i,0)$ could have resulted in $S$, and $F_1(S)$ the set of tuples $(h, i)$ such that $(h,i,1)$  could have resulted in $s$. Let $(h,i,1)$ be a draw that resulted in $S$. Then $(h,i,0)$ also results in $S$, because the difference between the resulting matrices $\cD$ is supported only on positions which were initially $2R$ in the matrix generated by $(h,i,1)$. Thus $F_1(S) \subseteq F_2(S)$, which demonstrates that $\pr{c=0 | S} \geq \pr{c=1 |S}$. Thus the algorithm can only improve its chances at success by reporting $c=0$, which completes the proof of the claim

\end{proof}

\paragraph{Decision Tree of the Algorithm.}
The adaptive algorithm $\cA$ can be defined by a $3$-ary decision tree $T$ of depth at most $ k n/400$, where each non-leaf node $v \in T$ is labelled with a position $p(v) = (x_v,y_v) \in [n] \times [n]$, and has three children corresponding to the three possible observations $\bD_{p(x)} \in \{1,R,2R\}$. Each leaf node contains only a decision of whether to output $c=0$ or $c=1$. For any $v \in T$, let $v_1,v_{R}, v_{2R}$ denote the three children of $v$ corresponding to the edges labelled $1,R$ and $2R$,
Every child coming from a ``$2R$'' edge is a leaf, since by the above claim the algorithm can be assumed to terminate and report that $c=1$ whenever it sees the value of $2R$. For any vertex $v \in T$ at depth $\ell$, let $S(v) = \{ (d_1,p_1),\dots,(\cdot, p(v))\}$ be the unique sequence of queries and observations which correspond to the path from the root to $v$. Note that the last entry $(\cdot,  p(v)) \in S(v)$ has a blank observation field, meaning that at $v$ the observation $p(v)$ has not yet been made.

For any $v \in T$ and $i \in [n]$, we say that a point $i$ is \textit{light} at $v$ if the number of queries $(d_j,p_j) \in S $ with $i \in p_j$ is less than $k/2$. If $i$ is not light at $v$ we say that it is \textit{heavy} at $v$. For any $i \in [n]$, if in the sequence of observations leading to $v$ the algorithm observed a $1$ in the $i$-th row or column, we say that $i$ is \textit{dead} at $v$, otherwise we say that $i$ is \textit{alive}. We write $\pr{v}$ to denote the probability, over the draw of $\bD \sim \cD$, that the algorithm traverses the decision tree to $v$, and $\pr{v | \; c= b}$ for $b \in \{0,1\}$ to denote this probability conditioned on $\bD \sim \cD_b$.  Next, define $F_b(v) = F_b(S(v))$ for $b \in \{0,1\}$, where $F_b(S)$ is as above. 
Note that if $(h,i) \in F_0(v)$ for some $i \in [n]$, then $(h,j) \in F_0(v)$ for all $j \in [n]$, since the matrices generated by $(h,i,0)$ are the same for all $i \in [n]$. Thus, we can write $h \in F_0(v)$ to denote that $(h,i) \in F_0(v)$ for all $i \in [n]$.


\begin{claim}\label{claim:lb2}
Let $v \in T$ be a non-leaf node where at least one index $i$ in $p(v) = (i,j)$ is alive and light at $v$. Then we have \[\prb{\bD \sim \cD}{\bD_{p(v)} = 1 | S(v), c=0} \leq \frac{2}{k}\]
\end{claim}
\begin{proof}
Fix any function $h \in \cH$ such that $h \in F_0(v)$: namely, $h$ is consistent with the observations seen thus far. 
Let $h_1,\dots,h_k \in \cH$ be defined via $h_t(j) = h(j)$ for $j \neq i$, and $h_t(i) = t$, for each $t \in [k]$. We claim that $h_t \in F_0(v)$ for at least $k/2$ values of $t$. 
To show this, first note that the values of $\{h(j)\}_{j \neq i}$ define a graph with at most $k$ connected components, each of which is a clique on the set of values $j \in [n] \setminus \{i\}$ which map to the same hash bucket under $h$. 
The only way for $h_t \notin F_0(v)$ to occur is if an observation $(i,\ell)$ was made in $S(v)$ such that $h(\ell) = t$. Note that such an observation must have resulted in the value $R$, since $i$ is still alive (so it could not have been $1$). In this case, one knows that $i$ was not in the connected component containing $\ell$. However, since $i$ is light, there have been at most $k/2$ observations involving $i$ in $S(v)$. Each of these observations eliminate at most one of the $h_t$'s, from which the claim follows. 

Given the above, it follows that if at the vertex $v$ we observe $\bD_{p(v)} = \bD_{i,j} = 1$, then we eliminate every $h_t$ with $t \neq h(j)$ and $h_t \in F_0(v)$. Since for every set of values $\{h(j)\}_{j \neq i}$ which are consistent with $S(v)$ there were $k/2$ such functions $h_t \in F_0(v)$, it follows that only a $2/k$ fraction of all $h \in \cH$ result in the observation $\bD_{p(v)} = 1$. Thus, $|\cF_0(v_1)| \leq \frac{2}{k}|\cF_0(v)|$, which completes the proof of the proposition.
\end{proof}

Let $\cE_1$ be the set of leafs $v$ which are children of a $2R$ labelled edge, and let $\cE_0$ be the set of all other leaves. Note that we have $\pr{v \; | \; c=0} = 0$ for all $v \in \cE_1$, and moreover $\sum_{v \in \cE_0} \pr{v | c=0} = 1$. For $v \in T$, let $\theta(v)$ denote the number of times, on the path from the root to $v$, an edge $(u,u_1)$ was crossed where at least one index $i \in p(u)$ was alive and light at $u$. Note that such an edge kills $i$, thus we have $\theta(v) \leq n$ for all nodes $v$. Further, define $\hat{\cE}_0 \subset \cE_0$ to be the subset of vertices $v \in \cE_0$ with $\theta(v) < n/20$. We now prove two claims regarding the probabilities of arriving at a leaf $v \in \hat{\cE}_0$. 

\begin{claim}\label{claim:lb3}
Define $\hat{\cE}_0$ as above. Then we have 
\[  \sum_{v \in \hat{\cE}_0} \pr{v | c=0} > 9/10 \]
\end{claim}
\begin{proof}
We define indicator random variables $\bX_1,\bX_2,\dots,\bX_t \in \{0,1\}$, where $t \leq  kn/400$ is the depth of $T$, such that $\bX_i = 1$ if the $i$-th observation made by the algorithm causes a coordinate $i \in [n]$, which was prior to observation $i$ both alive and light, to die, where the randomness is taken over a uniform draw of $\bD \sim \cD_0$. Note that the algorithm may terminate on the $t'$-th step for some $t'$ before the $t$-th observation, in which case we all trailing variables $\bX_{t'},\dots,\bX_t$ to $0$. By Claim \ref{claim:lb2}, we have $\ex{\bX_i} \leq 2k$ for all $i \in [t]$, so $\ex{\sum_{i \in [t]} \bX_i} <  n/ 200$. By Markov's inequality, we have $\sum_{i \in [t]} \bX_i < n/20$ with probability at least $9/10$. Thus with probability at least $9/10$ over the draw of $\bD \sim \cD_0$ we land in a leaf vertex $v$ with $\theta(v) < n/20$, implying that $v \in \hat{\cE}_0$ as needed.

\end{proof}

\begin{claim}\label{claim:lb4}
For any $v \in \hat{\cE_0}$, we have $  \pr{v \; | \; c=1} > (9/10 )\pr{v \; | \; c=0}$.
\end{claim}
\begin{proof}
Fix any $v \in \hat{\cE}_0$. By definition, when the algorithm concludes at $v$, at most $n/5$ indices were killed while having originally been light. Furthermore, since each heavy index requires by definition at least $k/2$ queries to become heavy, and since each query contributes to the heaviness of at most $2$ indices, it follows that at most $kn/400 (4/k) = n/100$ indices could ever have become heavy during any execution. Thus there are at least $n - n/20 - n/100 > (9/10)n$ indices $i$ which are both alive and light at $v$. 

Now fix any $h \in \cF_0(v)$. We show that $(h,i) \in \cF_1(v)$ for at least $(9/10)n$ indices $i \in [n]$, which will demonstrate that $|\cF_1(v)| > (9/10)  |\cF_0(v)|$, and thereby complete the proof. In particular, it will suffice to show that is true for any $i \in [n]$ which is alive at $v$. To see why this is the case, note that by definition if $i$ is alive at $v \in \cE_0$, then $S(v)$ includes only observations in the $i$-th row and column of $\cD$ which are equal to $R$. It follows that none of these observations would change if the input was instead specified by $(h,i,1)$ instead of $(h,j,0)$ for any $j \in [n]$, as the difference between the two resulting matrices are supported on values where $\cD$ is equal to $1$ in the $i$-th row and column of $\cD$. Thus if $i$ is alive at $v$, we have that $(h,i) \in \cF_1(v)$, which completes the proof of the claim.
\end{proof}
\noindent Putting together the bounds from Claims \ref{claim:lb3} and \ref{claim:lb4}, it follows that 
\[  \sum_{v \in \hat{\cE}_0} \pr{v\; |\;  c=1} > (9/10)^2 = .81 \]
Moreover, because by Claim \ref{claim:lb1} the algorithm always outputs $c=0$ when it ends in any $v \in \cE$, it follows that the algorithm incorrectly determined the value of $c$ with probability at least $.81$ conditioned on $c=1$, and therefore is incorrect with probability at least $.405 > 4/10$ which is a contradiction since $\cA$ was assumed to have success probability at least $6/10$. 

\paragraph{From the Hard Distribution to $k$-Centers.} 

To complete the proof, it suffices to demonstrate that the optimal $k$-centers cost is at most $1$ when $\bD \sim \bD_0$, and at least $R$ when $\bD \sim \bD_1$. The first case is clear, since we can choose at least one index in the pre-image of $h^{-1}(t)\subseteq [n]$ for each $t \in [k]$ to be a center. For the second case, note that conditioned on $|h^{-1}(t)| \geq 2$ for all $t \in [k]$, the resulting metric contains $k+1$ points which are pairwise-distance at least $R$ from each other. In particular, for the resulting metric, at least one point must map to a center which it is distance at least $R$ away from, and therefore the cost is at least $R$. Now since $n = \Omega( k \log k)$ with a sufficently large constant, it follows by the coupon collector's argument that with probability at least $1/1000$, we have that $|h^{-1}(t)| \geq 2$ for all $t \in [k]$. Moreover, that the $1/1000$ probability under which does not occur can be subtracted into the failure probability of $.405$ in the earlier argument, which still results in a $.404 > 4/10$ failure probability, and therefore leads and leading to the same contradiction, which completes the proof. 

\end{proof}

\begin{proposition}\label{prop:lb}
Fix any $1 \leq k < n$. Then any algorithm which, given oracle access the distance matrix $\cD \in \R^n$ of a set $X$ of $n$ points in a metric space, determines correctly with probability $2/3$ whether the optimal $k$-centers cost on $X$ is at most $1$ or at least $R$, for any value $R >1$, must make at least $\Omega(k^2)$ queries in expectation to $\cD$. The same bound holds true replacing the $k$-centers objective with $(k,z)$-clustering, for any constant $z>0$. 
\end{proposition}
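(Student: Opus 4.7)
The plan is to apply Yao's minimax principle via an explicit ``needle-in-a-haystack'' construction, so it suffices to exhibit an input distribution on which every deterministic algorithm making $o(k^2)$ queries succeeds with probability less than $2/3$. Fix any $n \geq k+1$, designate $k+1$ ``core'' positions $\{1,\ldots,k+1\}$ and the remaining $n-k-1$ as ``dummy'' positions, and set $R_0 := \max(R, R^{1/z})$ (so that $R_0^z \geq R$ for every $z \in (0,\infty]$) and $\eps_1 := 1$. All core--core distances are $R_0$ except for a single pair $(i^*,j^*)$ sampled uniformly from $\binom{[k+1]}{2}$, which is placed at distance $\eps_1$. Point~$1$ is at distance $0$ from every dummy, all dummy--dummy distances are $0$, and every other core is at distance $R_0$ from every dummy. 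Under $H_0$ the pair $(i^*,j^*)$ has distance $\eps_1 = 1$, whereas under $H_1$ its distance is raised to $R_0$, removing the close core pair. All distances lie in $\{0,1,R_0\}$ with $1 \leq R_0$, and the triangle inequality is routine to verify.

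A short case check establishes the cost bounds. For $H_0$, take the center set $\{1,\ldots,k+1\} \setminus \{w\}$, where $w$ denotes whichever of $\{i^*,j^*\}$ is not equal to $1$ (chosen arbitrarily if neither equals~$1$). Point~$1$ is then a center, so every dummy has cost $0$, and the lone excluded core $w$ lies at distance $\eps_1 = 1$ from its partner (a center); hence $\OPT_{k,z} \leq 1$. For $H_1$, I argue that every $k$-center choice incurs cost at least $R_0^z$: if all $k$ centers happen to be cores, then exactly one core $x$ is a non-center, and either $x = 1$ (in which case every dummy is at distance $R_0$ from every center, and $1$ itself is at distance $R_0$ from every center) or $x \neq 1$ (in which case $x$ is at distance $R_0$ from every other point). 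Otherwise at least one dummy is a center, forcing at least two core non-centers, at most one of which can equal point~$1$; the guaranteed non-$1$ core non-center then contributes $R_0^z$ to the objective. In every case $\OPT_{k,z} \geq R_0^z \geq R$, and the analogous bound holds for $k$-centers ($z = \infty$) by reading off the maximum rather than the sum.

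For the lower bound itself, the key observation is that the $n \times n$ distance matrices produced by $H_0$ and $H_1$ differ in exactly one entry, namely $(i^*,j^*)$. Consequently, for any deterministic algorithm making at most $q$ queries, its sequence of observations is identically distributed under $H_0$ and $H_1$ unless and until the algorithm queries the entry $(i^*,j^*)$; any strategy that terminates without querying this entry can do no better than a fair coin flip on the two hypotheses. Because $(i^*,j^*)$ is uniform over the $\binom{k+1}{2}$ core--core pairs while the algorithm issues at most $q$ total queries, the probability this entry is ever queried is bounded by $q / \binom{k+1}{2}$. The overall success probability is therefore at most $\frac{1}{2} + \frac{q}{2\binom{k+1}{2}}$; forcing this to reach $2/3$ gives $q = \Omega(k^2)$, and a standard application of Yao's minimax principle together with a Markov truncation of the $q$-query budget transfers the bound to the expected query complexity of any randomized algorithm.

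The main technical obstacle is the $H_1$ cost analysis: one must carefully rule out clever center choices that attempt to exploit the zero distance between point~$1$ and the dummies in order to save a core non-center. The case distinction above addresses this by observing that point~$1$ can be made a center ``for free'' with respect to dummy coverage, but then some non-$1$ core is still forced to be a non-center at distance $R_0$; conversely, excluding point~$1$ from the center set makes every dummy an expensive coordinate. Everything else in the argument is a direct consequence of the single-entry perturbation structure of $H_0$ versus $H_1$, and the extension to general $(k,z)$-clustering is absorbed by the choice $R_0 = R^{1/z}$.
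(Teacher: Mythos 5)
Your proposal follows essentially the same route as the paper's proof: a single hidden entry of the distance matrix, planted uniformly among $\Theta(k^2)$ candidate positions, distinguishes the two hypotheses; a deterministic algorithm's query sequence is pinned down by the all-$R$ responses, so with $o(k^2)$ queries it hits the planted entry with probability $o(1)$; and a Markov truncation plus Yao's principle converts this to an expected-query lower bound. The paper also works on $n=k+1$ points and pads with duplicates of point $1$, and it handles $(k,z)$-clustering by a generic reduction (the $z$-th root of the $(k,z)$ cost is within a factor $n$ of the $k$-centers cost) rather than your direct computation with $R_0=\max(R,R^{1/z})$; both are fine.

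There is, however, one concrete flaw in your padding construction: as stated it can violate the triangle inequality. If the planted pair is $(1,j^*)$ for some core $j^*$, then under $H_0$ you have $d(1,j^*)=1$ and $d(1,\mathrm{dummy})=0$, yet you set $d(\mathrm{dummy},j^*)=R_0$, whereas the triangle inequality forces $d(\mathrm{dummy},j^*)\le d(\mathrm{dummy},1)+d(1,j^*)=1$. So with probability $2/(k+1)$ your $H_0$ instance is not a metric, and the claim that verification is ``routine'' is false. This is easily repaired: either sample $(i^*,j^*)$ only from $\binom{\{2,\dots,k+1\}}{2}$ (still $\Omega(k^2)$ candidates, and then dummy queries remain uninformative), or make the dummies exact copies of point $1$ so that $d(\mathrm{dummy},j)=d(1,j)$ for all $j$, as the paper does --- in which case you must also argue, as the paper does, that queries to dummy rows can be simulated by queries to row $1$ without increasing the query count. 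With either fix the argument is correct.
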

\begin{proof}
By the same arguements given in Theorem \ref{thm:LBBig}, one can assume that the existence of such an algorithm that would violate the statement of the proposition implies the existence of a deterministic algorithm which always makes at most $c k^2$ queries to $\cD$ and is correct with probability $3/5$, for some arbitrarily small constant $c$. 
In what follows, we assume $n=k+1$, and for larger $n$ we will simply add $n-(k+1)$ duplicate points on top of the first point in the following distribution; note that any algorithm for the dataset with the duplicate point can be simulated, with no increase in query complexity, via access to the distance matrix on the first $k+1$ points.

The hard instance is as then as follows.
With probability $1/2$, we give as input the distance matrix $\bD \in \R^{k+1 \times k+1} $ with $\bD_{i,j} =R $ for all $i \neq j$. Note that any $k$-centers solution must have one of the points in a cluster centered at another, and therefore the optimal $k$-centers cost is at least $R$ for this instance. In the second case, the input is $\bD$ but with a single entry $\bD_{i,j} = \bD_{j,i} = 1$, where $(i,j)$ is chosen uniformly at random. Note that the result is still a valid metric space in all cases. Moreover, note that the optimal $k$-centers cost is $1$, and is obtained by choosing all points except $i$ (or alternatively except $j$). 

By the same (and in fact simplified) argument as in Claim \ref{claim:lb1}, we can assume the algorithm returns that the $k$-centers cost is at most $1$ if and only if it sees an entry with value equal to $1$. Since the algorithm is deterministic, and since the only distance other than $1$ is $R$, we can define a deterministic set $S$ of $c k^2$ indices in $\binom{k+1}{2}$ such that the adaptive algorithm would choose exactly the set $S$ if, for every query it made, it observed the distance $R$ (and therefore would return that the $k$-centers cost was at most $1$ at the end). Then in the second case, the probability that $(i,j)$ is contained in $S$ is at most $\frac{4}{c}$. Setting $c > 40$, it follows that the algorithm is incorrect with probability at least $1/2-\frac{4}{c} > 2/5$, contradicting the claimed success probability of $3/5$, and completing the proof. 
\end{proof}

\bibliography{cluster}
\end{document}